\definecolor{darkred}  {rgb}{0.5,0,0}
\definecolor{darkblue} {rgb}{0,0,0.5}
\definecolor{darkgreen}{rgb}{0,0.5,0}
\protected\def\tikz@nonactivecolon{\ifmmode\mathrel{\mathop\ordinarycolon}\else:\fi}
\crefname{lemma}{Lemma}{Lemmas}
\crefname{proposition}{Proposition}{Propositions}
\crefname{definition}{Definition}{Definitions}
\crefname{theorem}{Theorem}{Theorems}
\crefname{conjecture}{Conjecture}{Conjectures}
\crefname{corollary}{Corollary}{Corollaries}
\crefname{section}{Section}{Sections}
\crefname{appendix}{Appendix}{Appendices}
\crefname{figure}{Fig.}{Figs.}
\crefname{equation}{Eq.}{Eqs.}
\crefname{table}{Table}{Tables}
\crefname{claim}{Claim}{Claims}
\crefname{example}{Example}{Examples}
\newcommand{\ket}[1]{|#1\rangle}
\newcommand{\bra}[1]{\langle#1|}
\newcommand{\braket}[2]{\langle#1|#2\rangle}
\newcommand{\ketbra}[2]{|#1\rangle\!\langle#2|}
\newcommand{\proj}[1]{|#1\rangle\!\langle#1|}
\newcommand{\x}{\otimes}
\newcommand{\xp}[1]{^{\otimes #1}}
\newcommand{\ct}{^{\dagger}}
\DeclarePairedDelimiter{\set}{\lbrace}{\rbrace}
\DeclarePairedDelimiter{\abs}{\lvert}{\rvert}
\DeclarePairedDelimiter{\norm}{\lVert}{\rVert}
\DeclarePairedDelimiter{\of}{\lparen}{\rparen}
\DeclarePairedDelimiter{\sof}{\lbrack}{\rbrack}
\renewcommand{\Re}{\operatorname{Re}}
\renewcommand{\Im}{\operatorname{Im}}
\DeclareMathOperator{\spn}{span}
\DeclareMathOperator{\Tr}{Tr}
\DeclareMathOperator{\ad}{ad}
\DeclareMathOperator{\poly}{poly}
\newcommand{\C}{\mathbb{C}}
\newcommand{\R}{\mathbb{R}}
\newcommand{\HS}{\mathcal{H}}
\newcommand{\A}{\mathsf{A}}
\newcommand{\B}{\mathsf{B}}
\newcommand{\Herm}[1]{\mathrm{Herm}(#1)} %(\C^{#1})}
\newcommand{\D}[1]{\mathrm{D}(#1)} %(\C^{#1})}
\newcommand{\U}[1]{\mathrm{U}(#1)}
\newcommand{\1}{\mathbb{1}}
\newtheorem{theorem}{Theorem}
\newtheorem{lemma}[theorem]{Lemma}
\newtheorem{corollary}[theorem]{Corollary}
\theoremstyle{definition}
\newtheorem{remark}[theorem]{Remark}
\begin{document}

\title{Hamiltonian Simulation with Optimal Sample Complexity}

% Other possible title options:
%\title{Optimal sample complexity of Hamiltonian simulation}
%\title{Optimal sample-based Hamiltonian simulation}

%\author{
%  Shelby Kimmel,
%  Cedric Yen-Yu Lin,
%  Guang Hao Low,
%  Maris Ozols,
%  Theodore J.~Yoder
%}

\author[1]{Shelby Kimmel\thanks{All authors contributed equally to this work; the ordering is alphabetical. E-mails of authors in order: 
\texttt{shelbyk@umd.edu}, \texttt{cedricl@umiacs.umd.edu}, \texttt{glow@mit.edu}, 
\texttt{marozols@gmail.com}, \texttt{tjyoder@mit.edu}}}
\author[1]{Cedric Yen-Yu Lin}
\author[2]{Guang Hao Low}
\author[3]{Maris Ozols}
\author[2]{Theodore J.~Yoder}
%=======
%\author[1]{Shelby Kimmel\thanks{shelbyk@umd.edu}}
%\author[1]{Cedric Yen-Yu Lin\thanks{cedricl@umiacs.umd.edu}}
%\author[2]{Guang Hao Low\thanks{glow@mit.edu}}
%\author[3]{Maris Ozols\thanks{marozols@gmail.com}}
%\author[2]{Theodore J.~Yoder\thanks{tjyoder@mit.edu}}

\affil[1]{Joint Center for Quantum Information and Computer Science (QuICS), University of Maryland}
\affil[2]{Department of Physics, Massachusetts Institute of Technology}
\affil[3]{Department of Applied Mathematics and Theoretical Physics, University of Cambridge}%, Cambridge CB3 0WA, UK}
%\date{\today}
\date{}
\maketitle

\begin{abstract}
We investigate the sample complexity of Hamiltonian simulation: how many copies of an unknown quantum state are required to simulate a Hamiltonian encoded by the density matrix of that state? We show that the procedure proposed by Lloyd, Mohseni, and Rebentrost [\href{https://dx.doi.org/10.1038/nphys3029}{{\em Nat.\ Phys.}, 10(9):631--633, 2014}] is optimal for this task. We further extend their method to the case of multiple input states, showing how to simulate any Hermitian polynomial of the states provided. As applications, we derive optimal algorithms for commutator simulation and orthogonality testing, and we give a protocol for creating a coherent superposition of pure states, when given sample access to those states. We also show that this sample-based Hamiltonian simulation can be used as the basis of a universal model of quantum computation that requires only partial swap operations and simple single-qubit states.
\end{abstract}

%\tableofcontents
%\newpage

%%%%%%%%%%%%%%%%%%%%%%
\section{Introduction}
%%%%%%%%%%%%%%%%%%%%%%

One of the most anticipated applications of quantum computation is Hamiltonian
simulation. In fact, this was Feynman's main motivation for suggesting the
creation of quantum computers~\cite{feynman1982simulating}. Feynman's intuition
may soon pay off, as the simulation of Hamiltonians for quantum chemistry looks
to be implementable on small to moderately sized quantum computers
\cite{hastings2015improving,wecker2014gate}. In addition, Hamiltonian simulation
has implications for more general computational problems, including adiabatic
algorithms~\cite{farhi2000quantum}, quantum walk
algorithms~\cite{childs2003exponential}, and algorithms for systems of linear
equations ~\cite{harrow2009quantum}.

Much work has been done on the \emph{time} and \emph{query} complexity of
Hamiltonian simulation when given a classical description or black box
description of the Hamiltonian. Lloyd provided the first formal results on
simulation, considering Hamiltonians that consist of sums of non-commuting
terms~\cite{lloyd1996universal}. Other lines of research have focused on simulating
sparse Hamiltonians, with a long sequence of work recently culminating in an optimal algorithm~\cite{LC16}
(see~\cite{berry2015hamiltonian} for a more complete history of work in this field).

In this work, we approach the problem of Hamiltonian simulation from a
slightly different perspective. Rather than given a classical description or black-box
access to a Hamiltonian $H$, we consider the problem of simulating $H$
when given many copies of a quantum state $\rho$ that encodes
the Hamiltonian to be simulated. In particular, we  assume that
\begin{equation}
  \rho = \frac{H + c \1}{\Tr (H + c \1)}
\end{equation}
for some constant $c \in \R$ such that $H + c \1$ is positive semidefinite. 
In that case, $\rho$ itself is positive semidefinite and $\Tr \rho = 1$,
so $\rho$ is a valid
density matrix.  Note that 
\begin{equation}
  e^{-i\rho t}
  = \exp \of*{\frac{-i (H + c \1) t}{\Tr (H + c \1)}}
  = \exp \of*{\frac{-ict}{\Tr (H + c \1)}}
    \exp \of*{-iH \frac{t}{\Tr (H + c \1)}},
  \label{eq:exprho}
\end{equation}
so the Hamiltonian dynamics of $H$ and $\rho$ are equivalent up to an overall phase and time scaling.
Moreover, since the Hamiltonian $H$ in \cref{eq:exprho} can be arbitrary, any unitary can in fact be expressed as $e^{-i \rho t}$ for an appropriately chosen state $\rho$ and time $t$.

This modified version of the original Hamiltonian simulation problem is what we call \emph{sample-based} Hamiltonian simulation: given one copy of an unknown state $\sigma$ and $n$
copies of an unknown state $\rho$, implement the following map:
\begin{equation}
  \sigma \x \underbrace{\rho \x \dotsb \x \rho}_{n}
  \quad\mapsto\quad
  e^{-i \rho t} \sigma e^{i \rho t}
  \label{eq:map}
\end{equation}
where $t$ is the desired evolution time. We also
allow for some error in the final state---we denote by $\delta$ the \emph{trace distance}~\cite{NC10} between
the state that is output by the protocol and the ideal state $e^{-i \rho t} \sigma e^{i \rho t}$.
This problem was first considered in~\cite{LMR14}, where the authors give a simple protocol, which
we call the \emph{LMR protocol} (after the authors' initials), for
approximately implementing the unitary $e^{-i\rho t}$ using many copies of $\rho$.
Their protocol is based on a \emph{partial swap} operation that can also be considered as a finite-dimensional analogue of a beam-splitter \cite{ADO15}.
An interesting feature of the LMR protocol is that it is agnostic with regard to $\rho$. In the spirit of \cite{Preskill, GC99}, this suggests interpreting $\rho$ as a ``quantum software state''.

The main motivation for sample-based Hamiltonian simulation in \cite{LMR14} is to perform principal component analysis of $\rho$. They do this by performing phase estimation on the unitary $e^{-i\rho}$. We note that it is a nontrivial fact that the controlled-$e^{-i\rho}$ operation can in fact be implemented with the LMR protocol; see \cref{sec:control-LMR} (this fact does not seem to have been explicitly discussed in previous work). In addition, we note in \cref{sec:phase_est} that a slightly more careful analysis gives a polynomial improvement in sample complexity over the complexity given in \cite{LMR14} for performing phase estimation. The LMR protocol has applications to many problems in machine learning, e.g. \cite{LMR14,Wang14,RML14,CL15}.

In this paper, we investigate the optimal scaling of sample-based Hamiltonian simulation. That is, we ask the following question: given $t$ and $\delta$, what is
the minimum $n$ (number of copies of $\rho$) necessary to implement
$e^{-i\rho t}$ on an unknown state $\sigma$ to trace distance at most $\delta$?
We call this the \emph{sample complexity} of Hamiltonian simulation. 

It is interesting to consider alternative strategies, other than LMR, for sample-based Hamiltonian simulation.
While the LMR protocol acts with each copy of $\rho$ sequentially,
perhaps one could achieve better performance by acting with a global operation \cite{Ozols15}?
For example, recent near-optimal tomographic protocols have relied on performing
global operations (like the Schur transform) on many copies of the unknown state
\cite{HHJ+15,o2015efficient}. Along those lines, perhaps one could do better than LMR by applying
tomographic protocols to get an estimate $\widehat{\rho}$ of $\rho$ from the $n$ copies
of $\rho$, and then evolve according  to $e^{-i \widehat{\rho} t}$. 

On the contrary, however, we show that the LMR protocol performs sample-based Hamiltonian simulation with
asymptotic optimality in both $t$ and $\delta$ simultaneously (\cref{sec:optim}). In fact, LMR performs asymptotically 
\emph{better} than any tomographic strategy (\cref{sec:Prelim}) for sample-based Hamiltonian simulation. While the 
lower bound of \cref{sec:optim} uses mixed states, in \cref{sec:optimal}, we show a
matching lower bound in $\delta$ even when restricting to pure states. In the process, we provide a sample-optimal algorithm for 
a variant of Grover's search. In \cref{comm_sim}, we discuss the sample complexity of more complex Hamiltonians 
that depend on multiple states. For example, we show how to simulate the Hamiltonians given by the commutator
$i[\rho_1,\rho_2]$ and anti-commutator $\{\rho_1,\rho_2\}$ of two states $\rho_1$ and $\rho_2$, when given access to many copies of those states, and prove that our protocol 
is optimal. We also 
show how to simulate any real linear combination of states $\rho_1,\dotsc,\rho_K$ and, by combining these observations, any Hermitian polynomial (i.e.\ any element of the Jordan-Lie algebra~\cite{Emch}, see \cref{sect:Jordan}) generated by these states, when given 
access to many copies of those states. In \cref{sec:opt_comm_sim}, we give applications of commutator simulation to orthogonality testing and quantum state addition. In \cref{sec:universal}, we show how to use sample-based 
Hamiltonian simulation to implement a universal model of quantum computation using only partial swaps 
and a stream of input qubits initialized in $\ket{0}$ and $\ket{+}$. Finally, in \cref{sec:outlook}, we discuss some open questions.

%-------------------%
\subsection*{Notation}
%-------------------%

We use $\HS$ to denote a finite-dimensional Hilbert space, and $\D{\HS}$ to represent the set of positive semi-definite
operators with trace $1$ in $\HS$ (i.e.\ the set of valid quantum states).

The \emph{trace distance} between $\rho, \sigma \in \D{\HS}$ is given by
\begin{equation}
\frac{1}{2} \norm{\rho-\sigma}_1
\end{equation}
where $\|A\|_1:=\Tr(\sqrt{A A^\dagger}).$ The trace distance between  $\rho$ and
$\sigma$ gives the maximum  difference in probability of any measurement on the
two states~\cite{NC10}.
For two quantum channels $\mathcal{E}_1$ and $\mathcal{E}_2$ that act on $\D{\HS}$, their \emph{trace norm distance} is defined as
\begin{equation}
\frac{1}{2}\|\mathcal{E}_1-\mathcal{E}_2\|_{\mathrm{tr}} := \frac{1}{2}\max_{\rho\in \D{\HS}} \|\mathcal{E}_1(\rho)-\mathcal{E}_2(\rho)\|_1 
\end{equation}
 The \emph{diamond norm distance} is defined as
\begin{equation}
\frac{1}{2}\|\mathcal{E}_1-\mathcal{E}_2\|_{\diamond} := \frac{1}{2}\max_{k,\rho\in \D{\HS\otimes \HS_k}} \norm[\big]{(\mathcal{E}_1 \otimes \mathcal{I})(\rho)-(\mathcal{E}_2 \otimes \mathcal{I})(\rho)}_1
\end{equation}
where $\mathcal{I}$ is the identity channel on a $k$-dimensional space $\HS_k$. By definition, $\|\mathcal{E}_1-\mathcal{E}_2\|_{\diamond} \ge \|\mathcal{E}_1-\mathcal{E}_2\|_{\mathrm{tr}}$. For
unitary channels $\mathcal U_1$ and $\mathcal U_2$ corresponding to conjugation by unitary matrices $U_1$ and $U_2$, we will sometimes write 
$\|U_1-U_2\|_{\diamond}$ to mean $\|\mathcal U_1-\mathcal U_2\|_{\diamond}$. 
If $\rho_{\A\B}\in \D{\HS_\A\otimes \HS_\B}$, by $\Tr_\B\rho_{\A\B}$ we mean taking the partial trace of the second system. More generally, given a state $\rho$ on multiple subsystems, by $\Tr_i\rho$ we mean taking the partial trace of the $i^{\mathrm{th}}$ subsystem of $\rho.$

We define the single qubit state $\ket{+}\coloneqq(\ket{0}+\ket{1})/\sqrt{2}$.
We refer to $i[A,B]:=i(AB-BA)$ as the \emph{commutator} and $\{A,B\} := AB+BA$ as the \emph{anticommutator} of operators
$A$ and $B$. We will use $X$, $Y$, and $Z$ to denote the single-qubit Pauli operators.
We use $\1_\A$ to mean the identity matrix acting on subsystem $\A$, but if clear from context, we will drop the subscript.

%%%%%%%%%%%%%%%%%%%%%%%%%%%%%%%%%%%%%%%
\section{LMR Protocol versus State Tomography} \label{sec:Prelim}
%%%%%%%%%%%%%%%%%%%%%%%%%%%%%%%%%%%%%%%

Lloyd, Mohseni, and Rebentrost~\cite{LMR14} gave a simple method for approximating the transformation in \cref{eq:map}. Importantly, their procedure is independent of $\sigma$ and $\rho$, and the number of copies of $\rho$ required does not depend on the dimension of the two states. We state their result in a slightly more general form, where $\sigma$ has two registers and $e^{-i \rho t}$ is applied only to one of them.

\begin{theorem}[\cite{LMR14}]\label{thm:LMR}
Let $\rho \in \D{\HS_\A}$ and $\sigma \in \D{\HS_\A \x \HS_\B}$ be two unknown quantum states and $t \in \R$ (can be either positive or negative). Then there exists a quantum algorithm that transforms $\sigma_{\A\B} \x \rho_{\A_1} \x \dotsb \x \rho_{\A_n}$ into $\tilde{\sigma}_{\A\B}$ such that
\begin{equation}
  \frac{1}{2}\norm[\big]{
    \of[\big]{e^{-i \rho_\A t} \x \1_\B} \sigma_{\A\B} \of[\big]{e^{i \rho_\A t} \x \1_\B}
    -\tilde{\sigma}_{\A\B}
  }_1 \leq \delta,
\end{equation}
as long as the number of copies of $\rho$ is $n = O(t^2 / \delta)$. In other words, this quantum algorithm implements the unitary $e^{-i\rho t}$ up to error $\delta$ in diamond norm, using $O(t^2/\delta)$ copies of $\rho$.
\end{theorem}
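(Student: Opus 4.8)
The plan is to reconstruct the LMR protocol and bound its error directly. Fix a time step $\Delta := t/n$ and build the algorithm from $n$ identical rounds: in the $j$-th round, apply the \emph{partial swap} unitary $e^{-iS\Delta}$ (where $S$ is the swap on $\HS_\A \otimes \HS_{\A_j}$, acting as $\1_\B$ on the spectator register) to the current state together with one fresh copy $\rho_{\A_j}$, and then discard $\A_j$. I would first analyze a single round as a quantum channel $\mathcal{E}_\Delta$ on $\D{\HS_\A \otimes \HS_\B}$, and then control how the per-round error accumulates over the $n$ rounds. For $t<0$ one simply takes $\Delta<0$; all formulas below hold verbatim.

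For the single-round analysis, the key observation is that $S^2 = \1$, so $e^{-iS\Delta} = \cos\Delta\,\1 - i\sin\Delta\,S$ \emph{exactly}. Substituting this into $\Tr_{\A_j}\of[\big]{e^{-iS\Delta}(\sigma_{\A\B}\otimes\rho_{\A_j})e^{iS\Delta}}$ and expanding yields four terms; using $\Tr\rho = 1$ together with the partial-trace identities $\Tr_2[(M\otimes N)S]=MN$ and $\Tr_2[S(M\otimes N)]=NM$ (applied on $\A,\A_j$ with $\B$ a spectator), these collapse to the exact closed form
\begin{equation}
  \mathcal{E}_\Delta(\sigma_{\A\B})
  = \cos^2\!\Delta\,\sigma_{\A\B}
  + \sin^2\!\Delta\,(\rho_\A \otimes \sigma_\B)
  - i\sin\Delta\cos\Delta\,[\rho_\A \otimes \1_\B,\,\sigma_{\A\B}],
\end{equation}
where $\sigma_\B := \Tr_\A \sigma_{\A\B}$. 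Expanding this and the ideal channel $\mathcal{U}_\Delta\colon \sigma_{\A\B} \mapsto (e^{-i\rho_\A\Delta}\otimes\1_\B)\,\sigma_{\A\B}\,(e^{i\rho_\A\Delta}\otimes\1_\B)$ to first order in $\Delta$, both equal $\sigma_{\A\B} - i\Delta[\rho_\A\otimes\1_\B,\sigma_{\A\B}]$, so the two channels agree through $O(\Delta)$ and differ only at order $\Delta^2$. Since $\|\rho\|_\infty \le 1$ and $\|\sigma\|_1 = 1$, every operator appearing in the difference (such as $\rho-\sigma$ and $[\rho,[\rho,\sigma]]$) has trace norm bounded by an absolute constant independent of $\dim\HS_\A$, and this bound is uniform over all $\sigma_{\A\B}$ and all reference systems $\B$. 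I therefore expect to establish $\tfrac12\|\mathcal{E}_\Delta - \mathcal{U}_\Delta\|_\diamond \le C\Delta^2$ for an absolute constant $C$.

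The second ingredient is error accumulation. Because the ideal evolution factorizes, $\mathcal{U}_t = \mathcal{U}_\Delta^{\,n}$ with $\Delta = t/n$, while the protocol realizes $\mathcal{E}_\Delta^{\,n}$ (the output state being $\tilde\sigma_{\A\B} := \mathcal{E}_\Delta^{\,n}(\sigma_{\A\B})$). Using subadditivity of the diamond norm under composition of channels, via the telescoping bound $\|\mathcal{A}\circ\mathcal{B} - \mathcal{A}'\circ\mathcal{B}'\|_\diamond \le \|\mathcal{A}-\mathcal{A}'\|_\diamond + \|\mathcal{B}-\mathcal{B}'\|_\diamond$ for CPTP maps, the total error is at most $n$ times the per-round error, giving $\tfrac12\|\mathcal{E}_\Delta^{\,n} - \mathcal{U}_t\|_\diamond \le nC\Delta^2 = Ct^2/n$. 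Demanding this be at most $\delta$ yields $n = O(t^2/\delta)$, the claimed sample complexity.

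The main obstacle I anticipate is making the single-round $O(\Delta^2)$ bound fully rigorous and, crucially, dimension-independent: the exact closed form for $\mathcal{E}_\Delta$ makes the first-order cancellation transparent, but one must bound the $\Delta^2$ remainder uniformly in $\sigma$ and in the reference system $\B$ in order to upgrade a trace-distance statement into the diamond-norm statement. The remaining pieces---the partial-trace identities and the telescoping error bound---are routine once the single-step formula is in hand.
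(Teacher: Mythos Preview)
Your proposal is correct and is essentially the paper's own proof: the single-round channel is computed exactly via $e^{-iS\Delta}=\cos\Delta\,\1-i\sin\Delta\,S$, yielding the same closed form (cf.\ \cref{eq:TSS} in \cref{apx:LMR}), and the $O(\Delta^2)$ per-round error is summed over $n=t/\Delta$ rounds to give $O(t^2/n)$. Your packaging via diamond-norm subadditivity of channel compositions is marginally cleaner than the paper's explicit state-level recursion, but the argument is the same.
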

\begin{proof}
We will give a sketch of the proof; for the full proof see \cref{apx:LMR}. For simplicity we assume $\rho$ and $\sigma$ have the same dimension, i.e.\ $\HS_\B$ is one-dimensional. Then by the Hadamard Lemma (see \cref{apx:Hadamard}), the target state is
\begin{equation}
e^{-i\rho t} \sigma e^{i\rho t} = \sigma - i[\rho,\sigma] t - \frac{1}{2!}[\rho,[\rho,\sigma]] t^2 + \dotsb.
\end{equation}
We note that for very small evolution times $\Delta$, we have the following direct calculation:
\begin{align}
\Tr_{2} \sof*{e^{-iS\Delta} (\sigma \x \rho) e^{iS\Delta}} &= \sigma - i[\rho,\sigma]\Delta + O(\Delta^2)  \label{eq:lmr1} \\ 
&= e^{-i\rho \Delta} \sigma e^{i\rho \Delta} + O(\Delta^2), \label{eq:lmr2}
\end{align}
were $S$ is the swap operator between the two registers. If we take $\Delta = \delta/t$ and repeat this procedure $O(t^2/\delta)$ times, we end up implementing the operator $e^{-i\rho t}$ up to error $O(\Delta^2 \cdot t^2/\delta) = O(\delta)$. 

Thus the LMR protocol uses $O(t^2/\delta)$ copies of $\rho$ to implement the unitary $e^{-i\rho t}$ up to error $\delta$ in trace norm. To obtain the result for diamond norm, we can simply replace $\rho$ by $\rho_{\A} \otimes \1_{\B}$ and $\sigma$ by $\sigma_{\A\B}$. See \cref{apx:LMR} for a more detailed proof.
\end{proof}

\begin{remark}
While not noted explicitly in \cite{LMR14}, it turns out that the LMR protocol can be implemented efficiently, i.e.\ using $O(\log D \cdot t^2/\delta)$ single-qubit and Fredkin (controlled-swap) gates, where $D = \dim (\HS_\A)$. To see this, we recall that in the proof of Theorem \ref{thm:LMR}, the only potentially expensive operation is the partial swap
\begin{equation}
e^{-iS \Delta} = (\cos \Delta) \1 - i (\sin \Delta) S.
\end{equation}
This operation is a linear combination of the two unitaries $\1$ and $S$, the latter swapping two registers of dimension $D$ and thus being implementable using $O(\log D)$ gates. By the LCU (linear combination of unitaries) algorithm (see e.g. \cite{berry2015hamiltonian}, or \cite[Theorem~2.4]{kothari14}), $e^{-iS \Delta}$ can be implemented using a constant number of uses of controlled-$S$, elementary single-qubit rotations, and a single-qubit unitary $A$ satisfying
\begin{equation}
A\ket{0} \propto \sqrt{\cos\Delta} \ket{0} + \sqrt{\sin\Delta} \ket{1}.
\end{equation}
Hence the LMR protocol can be implemented with $O(\log D \cdot t^2/\delta)$ single-qubit and Fredkin gates.\footnote{This analysis hides the use of the Solovay-Kitaev theorem to implement the single-qubit rotation $A$; decomposed into elementary gates, the total number of gates required is $O((\log D + \poly\log(t/\delta)) \cdot t^2/\delta)$. The runtime stated in the alternative efficient implementation in \cite[Appendix~C]{ML16} similarly does not consider the cost of implementing arbitrary single-qubit rotations.}

We note that Marvian and Lloyd independently give an alternative efficient implementation for the LMR protocol \cite[Appendix~C]{ML16}; their algorithm has an extra multiplicative factor of $\log (t^2/\delta)$ in the runtime.
\end{remark}

\begin{remark}
We also note that the LMR protocol can be modified to implement the controlled-$e^{-i\rho t}$ operation, which will be important if one wants to implement phase estimation on $e^{-i\rho t}$. This fact does not seem to have been addressed in previous work; see \cref{sec:control-LMR}.
\end{remark}

 An alternative method to LMR for the sample-based Hamiltonian simulation problem would be to perform tomography on the
copies of $\rho$ to get an estimate $\widehat{\rho}$ of $\rho$, and then
implement $e^{-i\widehat{\rho} t}$. Let $\zeta=\widehat{\rho}-\rho$ and suppose $\|\zeta\|_1=\epsilon$. We first show that simulating with $\widehat{\rho}$ instead of $\rho$
results in a diamond norm distance at most $\epsilon t$. That is,
\begin{align}\label{eq:diamond_rho}
\left\|e^{-i\widehat{\rho}t}-e^{-i\rho t}\right\|_\diamond \leq\epsilon t.
\end{align}

To show this, we recall the Lie product formula~\cite{Lie}
\begin{align}\label{eq:Lie_1}
e^{-i \widehat{\rho} t} = \lim_{l\rightarrow\infty}
\left(e^{-i\rho t/l}e^{-i\zeta t/l}\right)^l.
\end{align}
For any integer $l \geq 1$ and unitaries $U$ and $V$, using the triangle inequality we have that
\begin{align}
  \norm{U^l - V^l}_\diamond
 &\leq \norm{U^l - U V^{l-1}}_\diamond + \norm{U V^{l-1} - V^l}_\diamond \\
 &   = \norm{U^{l-1} - V^{l-1}}_\diamond + \norm{U - V}_\diamond
\end{align}
where the equality comes from the unitary invariance of the diamond norm. Applying this argument inductively, $\norm{U^l - V^l}_\diamond \leq l \norm{U - V}_\diamond$. Hence
\begin{align}
\norm[\big]{\of[\big]{e^{-i\rho t/l}e^{-i\zeta t/l}}^l-e^{-i\rho t}}_\diamond
&\leq l \norm[\big]{e^{-i\rho t/l}e^{-i\zeta t/l}-e^{-i\rho t/l}}_\diamond \\
&   = l \norm{e^{-i\zeta t/l}-\1}_\diamond \\
&\leq \epsilon t+O\left(\epsilon^2t^2/l\right), \label{eq:diamond_tri}
\end{align}
where the last line follows using a Taylor series expansion.
Finally, using \cref{eq:Lie_1} and taking $l\rightarrow\infty$ in \cref{eq:diamond_tri}, we obtain \cref{eq:diamond_rho}.

Moreover, there exists $\rho$ and $\widehat{\rho}$ for which \cref{eq:diamond_rho} is essentially tight. To see this, note that if $\rho$ and $\widehat{\rho}$ commute, we have
\begin{align}
\left\|e^{-i\widehat{\rho}t}-e^{-i\rho t}\right\|_\diamond &= \left\|\1-e^{-i(\rho-\widehat{\rho}) t}\right\|_\diamond  \\
&= \|\widehat{\rho} - \rho\|_1 t + O(\|\widehat{\rho}- \rho\|_1^2 t^2)\\
&=\epsilon t+O\left(\epsilon^2t^2\right).
\end{align}
This means that if we want to simulate $e^{-i\rho t}$ to error $\delta$ 
in diamond norm, in general we need an estimate $\widehat{\rho}$ such that 
$\|\widehat{\rho}-\rho\|_1=O(\delta/t).$

In Theorem~1 of \cite{HHJ+15}, they prove that to acquire an estimate of a rank-$r$ state $\rho \in \D{d}$ that differs from the true $\rho$ by at most $\epsilon$ in trace distance with probability at least $1-\eta$
requires $n$ copies of $\rho$, where
\begin{align}\label{eq:tomo_lower_bound}
n=\Omega\left(\frac{Cdr\left(1-\epsilon\right)^2}{\epsilon^2\log(d/r\epsilon)}\right)
\end{align}
with $C$ a function of only $\eta$.
The scaling in $\epsilon$ can be slightly improved. Fixing $\eta$, $d$ and $r$, 
\cref{eq:tomo_lower_bound} scales in $\epsilon$ as $\Omega(1/(\epsilon^2\log(1/\epsilon)))$.
If one could acquire such an estimate of $\rho$, one could violate
the Helstrom bound~\cite{Helstrom} that scales as $\Omega(1/\epsilon^2)$. Therefore, we can combine the two bounds to get
\begin{align}\label{eq:better_tomo_lower_bound}
n=\Omega\left(\frac{Cdr(1-\epsilon)^2}{\epsilon^2\log(d/r\epsilon)}+\frac{1}{\epsilon^2}\right).
\end{align}

Back to our problem of Hamiltonian simulation, we want $\epsilon=\delta/t$
to obtain a simulation correct to accuracy $\delta$. Setting $\epsilon=\delta/t$, we find
that the number of samples needed to obtain a tomographic estimate to the desired
accuracy is 
\begin{align}\label{eq:better_tomo_lower_bound2}
n=\Omega\left(\frac{Cdr(t-\delta)^2}{\delta^2\log(dt/r\delta)}+\frac{t^2}{\delta^2}\right).
\end{align}

On the other hand, using LMR to simulate $e^{-i\rho t}$ to accuracy $\delta$ uses
$n$ copies of $\rho$, where
\begin{align}
n=O(t^2/\delta).
\end{align}
Since LMR does not have any dependence on $d$ or $r$, we immediately see that
for large $d$ or $r$, it does significantly better. Furthermore, even fixing
$d$ and $r$, we see that LMR provides a square root improvement in sample complexity over tomography in terms of $\delta$.

%-----------------------------------------------------------------%
\section{LMR Protocol is Optimal}\label{sec:optim}
%-----------------------------------------------------------------%

Our strategy for proving the optimality of the LMR protocol will be as follows. We first give a lower bound on the sample
complexity of distinguishing two specific states. Next, we assume we have a protocol that
simulates $e^{-i\rho t}$ to trace norm (not diamond norm, see discussion below)
$\delta$ using $f(t,\delta)$ samples of $\rho$ for some function $f$. Then we
show that using such a protocol, one can distinguish these two states. However,
if $f=o(t^2/\delta)$, we would violate our lower bound on state discrimination.

For this bound, we will consider states of the form 
\begin{align}\label{eq:rhox}
  \rho(x) &:= x \proj{0} + (1-x) \proj{1}
  = \frac12\1+\left(x-\frac12\right)Z
\end{align}
for some $x \in [0,1]$.

\begin{lemma} \label{lemm: distinguishing multiple copies} Suppose we are
promised that a state $\rho$ is either $\rho(x)$ or
$\rho(x+\epsilon)$ where $x\in(\eta,1-\eta)$ and $\epsilon<\eta<1/2$. 
Then given fewer
than $C_\eta/\epsilon^2$ copies of $\rho$ for some constant $C_\eta$ that depends only on $\eta$, the probability of
correctly determining whether $\rho$ is $\rho(x)$ or
$\rho(x+\epsilon)$ is
at most 2/3.
\end{lemma}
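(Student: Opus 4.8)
The plan is to reduce the problem to a statement about distinguishing two product states $\rho(x)^{\otimes n}$ and $\rho(x+\epsilon)^{\otimes n}$, and then invoke a standard fact: the optimal success probability for distinguishing two states $\sigma_0,\sigma_1$ with equal prior is $\tfrac12 + \tfrac14\norm{\sigma_0-\sigma_1}_1$ (the Helstrom bound). Thus, to show the success probability is at most $2/3$ whenever $n < C_\eta/\epsilon^2$, it suffices to show that the trace distance $\tfrac12\norm{\rho(x)^{\otimes n}-\rho(x+\epsilon)^{\otimes n}}_1$ stays bounded away from $1$ — concretely, at most some constant like $1/3$ — in that regime. So the whole argument becomes an upper bound on how fast the $n$-copy trace distance can grow.

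The cleanest way I would carry this out is through fidelity rather than trace distance directly, since fidelity multiplies across tensor powers. First I would compute the single-copy fidelity $F(\rho(x),\rho(x+\epsilon))$. Because $\rho(x)$ and $\rho(x+\epsilon)$ are diagonal in the same basis (both are diagonal in the computational basis), they commute, so the fidelity is just the classical Bhattacharyya coefficient $F = \sqrt{x(x+\epsilon)} + \sqrt{(1-x)(1-x-\epsilon)}$. I would then Taylor-expand this in $\epsilon$: the first-order term vanishes (fidelity is maximized at $\epsilon=0$), and the second-order coefficient is controlled by $x$ and $1-x$ in the denominators, both of which are bounded below by $\eta$ since $x\in(\eta,1-\eta)$. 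This yields a bound of the form $F \ge 1 - c_\eta\,\epsilon^2$ for a constant $c_\eta$ depending only on $\eta$, valid for $\epsilon<\eta$. By multiplicativity, the $n$-copy fidelity satisfies $F_n = F^n \ge (1-c_\eta\epsilon^2)^n \ge e^{-2c_\eta n \epsilon^2}$ (using $1-u\ge e^{-2u}$ for small $u$), which stays bounded away from $0$ as long as $n = O(1/\epsilon^2)$.

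Finally I would convert the fidelity lower bound back into a trace-distance upper bound via the Fuchs--van de Graaf inequality $\tfrac12\norm{\sigma_0-\sigma_1}_1 \le \sqrt{1-F_n^2}$. Choosing $C_\eta$ small enough that $n<C_\eta/\epsilon^2$ forces $F_n^2 \ge e^{-4c_\eta C_\eta} \ge 8/9$ (say), giving trace distance at most $1/3$, hence by Helstrom a success probability at most $\tfrac12+\tfrac14\cdot\tfrac23 = 2/3$. The bookkeeping is to pick $C_\eta$ so the two constants line up, and to make sure the Taylor remainder is controlled uniformly for $x\in(\eta,1-\eta)$ and $\epsilon<\eta$.

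The main obstacle, such as it is, is making the second-order expansion of the fidelity rigorous and uniform in $x$ rather than heuristic: I need an honest lower bound $F\ge 1-c_\eta\epsilon^2$ with an explicit $c_\eta$, controlling the higher-order terms so that the constant genuinely depends only on $\eta$ and not on $x$ or $\epsilon$. The restriction $\epsilon<\eta<1/2$ is exactly what keeps the arguments of all the square roots bounded away from $0$, so the expansion does not blow up; I would lean on that to bound the derivatives. Everything else is a routine chain of standard inequalities.
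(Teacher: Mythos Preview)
Your argument is correct, and it starts from the same Helstrom reduction the paper uses, but then diverges in how the $n$-copy trace distance is controlled. The paper observes that since $\rho(x)$ and $\rho(x+\epsilon)$ commute, $\tfrac12\norm{\rho(x)^{\otimes n}-\rho(x+\epsilon)^{\otimes n}}_1$ equals the total variation distance between two binomial distributions $\mathrm{Bin}(n,x)$ and $\mathrm{Bin}(n,x+\epsilon)$, and then cites a standard classical result that this TV distance is below $1/3$ whenever $n<C_\eta/\epsilon^2$. You instead pass through the fidelity (equivalently, the classical Bhattacharyya coefficient), exploit its multiplicativity under tensor powers, and convert back with Fuchs--van de Graaf. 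Your route is more self-contained, since it replaces the cited binomial TV bound with an explicit second-order calculation; the paper's route is shorter on the page but leans on an external reference. Both approaches are standard and essentially equivalent in spirit---the Hellinger/Bhattacharyya tensorization you use is in fact one of the usual ways to \emph{prove} the binomial TV bound the paper invokes.
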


\begin{proof}
If we have $n$
samples of $\rho$, and we want to determine whether $\rho$ is $\rho(x)$ or
$\rho(x+\epsilon)$, then the maximum probability that we correctly
identify $\rho$ is at most~\cite{Holevo73,Helstrom}
\begin{align}\label{eq:prob}
\frac{1+\frac{1}{2}\|\rho(x)^{\otimes n}-\rho(x+\epsilon)^{\otimes n}\|_1}{2}.
\end{align}

Now $\rho(x)$ and $\rho(x+\epsilon)$ commute, so the trace distance in \cref{eq:prob} becomes
the total variation distance between the eigenvalues of the two states. Since
$\rho(x)$ is a rank-2 state, this variation distance is the same as the
variation distance between  two binomial distributions with $n$ trials and
probabilities $x$ and $x+\epsilon$ respectively. Then if $n<C_\eta/\epsilon^2$,
for a sufficiently small constant $C_\eta$ that depends on $\eta$,
as long as $\epsilon<\eta,$ the total variation distance between
these two binomial distributions is less than $1/3$ (from many sources, e.g.~\cite{AJ06}).
\end{proof}

We now show the main result of this section: the sample complexity of the LMR protocol is optimal.

\begin{theorem}\label{thm:LMRopt}
Let $f(t,\delta)$ be the number of copies of $\rho$ required to implement the unitary $e^{-i\rho t}$
up to error $\delta$ in trace norm. Then as long as $\delta \le 1/6$ and $\delta/t \le 1/(6\pi)$, it holds that $f(t,\delta) = \Theta(t^2/\delta)$.
\end{theorem}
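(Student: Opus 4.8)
The plan is to establish the two directions separately. The upper bound $f(t,\delta) = O(t^2/\delta)$ is immediate from \cref{thm:LMR}, so the entire content is the matching lower bound $f(t,\delta) = \Omega(t^2/\delta)$, which I would prove by the reduction sketched before the statement: a too-cheap simulator would distinguish $\rho(x)$ from $\rho(x+\epsilon)$ faster than \cref{lemm: distinguishing multiple copies} allows. It is worth stressing up front why the naive version of this reduction is not enough. If one simply runs the simulator once on a probe and measures, then to obtain a constant distinguishing bias one is forced to take the accumulated phase $\epsilon t = \Theta(1)$, hence $\epsilon = \Theta(1/t)$ and $1/\epsilon^2 = \Theta(t^2)$, giving only $\Omega(t^2)$ with no gain from small $\delta$. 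The essential trick is therefore \emph{amplification}: spend the error budget to inflate the simulation time so that we can resolve a much smaller $\epsilon$.

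Concretely, I would assume a protocol $\mathcal P$ realizing $e^{-i\rho t}$ to trace-norm error $\delta$ with $f(t,\delta)$ copies, and compose it with itself $k$ times (each layer consuming a fresh batch of copies). The first thing to verify is that trace-norm error accumulates only linearly under composition: writing $\mathcal U$ for conjugation by $e^{-i\rho t}$, the telescoping identity together with the contractivity of the trace norm under the CPTP map $\mathcal P$ and the triangle inequality gives $\norm{\mathcal P^{\circ k} - \mathcal U^{\circ k}}_{\mathrm{tr}} \le k\delta$, while $\mathcal U^{\circ k}$ is exactly conjugation by $e^{-i\rho kt}$. Thus $\mathcal P^{\circ k}$ simulates $e^{-i\rho kt}$ to error $k\delta$ using $k\,f(t,\delta)$ copies of $\rho$.

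Next I would feed the probe $\proj{+}$ (a single qubit, taking $\HS_\B$ trivial) through $\mathcal P^{\circ k}$. Since $\rho(x)$ from \cref{eq:rhox} is diagonal, $e^{-i\rho(x)kt}\ket{+}$ and $e^{-i\rho(x+\epsilon)kt}\ket{+}$ differ only in their relative phase, and a direct computation places the ideal outputs at trace distance $\abs{\sin(\epsilon kt)}$; after charging the two $k\delta$ approximation errors, the realized outputs are at trace distance at least $\sin(\epsilon kt) - 2k\delta$, so an optimal measurement guesses correctly with probability at least $\tfrac12\of{1 + \sin(\epsilon kt) - 2k\delta}$. Now I would fix the parameters. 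Take $k = \lfloor 1/(6\delta)\rfloor$, which is $\ge 1$ precisely because $\delta \le 1/6$ and which keeps $2k\delta \le \tfrac13$; and choose $\epsilon$ so that $\epsilon kt = \arcsin(\tfrac34)$, a fixed constant in $(0,\pi/2)$, so that the bias is at least $\tfrac34 - \tfrac13 > \tfrac13$ and the success probability strictly exceeds $2/3$. This forces $\epsilon = \Theta(\delta/t)$, and the hypothesis $\delta/t \le 1/(6\pi)$ is exactly what guarantees $\epsilon$ is small enough to satisfy $\epsilon < \eta$ with $x = \tfrac12 \in (\eta, 1-\eta)$ for a fixed $\eta$ (so $C_\eta$ is a universal constant), and keeps $\epsilon kt$ below $\pi/2$ where $\sin$ is increasing and the trace-distance formula is valid.

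Finally I would invoke \cref{lemm: distinguishing multiple copies} in contrapositive form: since the reduction distinguishes $\rho(x)$ from $\rho(x+\epsilon)$ with probability exceeding $2/3$ using $k\,f(t,\delta)$ copies in total, we must have $k\,f(t,\delta) \ge C_\eta/\epsilon^2$. Substituting $k = \Theta(1/\delta)$ and $1/\epsilon^2 = \Theta(t^2/\delta^2)$ yields $f(t,\delta) \ge C_\eta/(k\epsilon^2) = \Omega(t^2/\delta)$, which together with the upper bound gives $\Theta(t^2/\delta)$. The step I expect to be the main obstacle is the parameter bookkeeping: checking that $k\delta$, the phase $\epsilon kt$, and the separation $\epsilon$ can be set simultaneously so as to satisfy all hypotheses of the distinguishing lemma (in particular $\epsilon < \eta < \tfrac12$ and the absence of phase wraparound) under exactly the stated conditions $\delta \le 1/6$ and $\delta/t \le 1/(6\pi)$. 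By contrast, the linear error accumulation and the single-qubit trace-distance computation are routine.
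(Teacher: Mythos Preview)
Your proposal is correct and follows essentially the same route as the paper's proof: reduce to the distinguishing lower bound of \cref{lemm: distinguishing multiple copies} via the probe $\ket{+}$, and gain the factor $1/\delta$ by composing the simulator with itself $\Theta(1/\delta)$ times so that the error budget is spent inflating the evolution time. The paper organizes the argument slightly differently---it first isolates the inequality $f(t,1/3)\ge Ct^2$ (choosing $t_\epsilon=\pi/(2\epsilon)$ so that the two unitaries become exactly $\1$ versus $Z$) and only then applies the amplification step $m\,f(t,\delta)\ge f(mt,m\delta)$ with $m=\lceil 1/(6\delta)\rceil$---but this is the same composition-and-telescoping mechanism you describe, and your direct trace-distance computation $\abs{\sin(\epsilon kt)}$ plays the same role as the paper's exact orthogonality at $t_\epsilon$. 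The only point to watch is the constant bookkeeping you already flagged: with $k=\lfloor 1/(6\delta)\rfloor$ one gets $kt\ge \pi/2$ and hence $\epsilon=\arcsin(3/4)/(kt)$ can be slightly above $1/2$ in the worst case, so you may need to replace $\arcsin(3/4)$ by a smaller constant (and correspondingly adjust the bias arithmetic) to literally satisfy $\epsilon<\eta<1/2$; this is cosmetic and does not affect the argument.
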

The upper bound holds by the LMR protocol, \cref{thm:LMR}, so we will only prove the lower bound. The fact that the trace norm lower bounds the diamond norm makes a tight lower bound in terms of the trace norm a stronger result than if we had used the diamond norm.
\begin{proof}
Given many copies of an unknown state $\rho$, suppose we want to distinguish between the cases $\rho_1 := \rho(\frac{1}{2})$ and $\rho_2 := \rho(\frac{1}{2}+\epsilon)$, with $0<\epsilon\leq 1/2$, promised $\rho$ is one of the two. One way of doing this is to
consider the single-qubit unitary operator $U(\rho,t):=\exp(-i\rho t)$. Then for
\begin{equation}
  t_\epsilon := \frac{\pi}{2} \cdot \frac{1}{\epsilon}
  \label{eq:te}
\end{equation}
the operators $U(\rho_i,t_\epsilon)$ become orthogonal, namely,
\begin{align}
U(\rho_1,t_\epsilon) &\propto \1, &
U(\rho_2,t_\epsilon) &\propto Z,
\end{align}
where $\propto$ indicates that we have hidden an unimportant phase factor.
Consequently, applying $U(\rho,t)$ to $\ket{+}$ and measuring in the $X$-basis will distinguish $\rho_1$ from $\rho_2$ with certainty.

Thus, we can distinguish between $\rho = \rho_1$ or $\rho = \rho_2$ with probability at least $2/3$ using no more than $f(t_\epsilon,1/3)$ copies of $\rho$ by implementing a map that differs from $U(\rho,t_\epsilon)$ by trace norm $1/3$. However \cref{lemm: distinguishing multiple copies} tells us that $C_\eta / \epsilon^2$ samples of $\rho$ are required if $\epsilon < \eta \leq 1/2$. Therefore
\begin{equation}
f(t_\epsilon,1/3) \ge C_\eta / \epsilon^2 = C t_\epsilon^2, \quad t_\epsilon \ge \pi,
\label{eq: large t case}\end{equation}
using \cref{eq:te}, where $C := 4 C_\eta / \pi^2$ is some positive constant. \Cref{eq: large t case} holds whenever $t_\epsilon \geq \pi$ since $\epsilon \leq 1/2$ and so $t_\epsilon = \tfrac{\pi}{2} \cdot \tfrac{1}{\epsilon} \geq \pi$.

Now suppose instead we have arbitrary $\delta$ and $t$ satisfying $\delta \le 1/6$ and $t/\delta \ge 6\pi$, as assumed in the theorem statement. We note the following inequality for any $t \in \R$ and any integer $m \geq 0$:
\begin{equation}
m f(t,\delta) \ge f(mt,m\delta), \label{eq: segments}
\end{equation}
which holds because one way of simulating $\exp(-i\rho mt)$ up to error $m\delta$ is to run $m$ times a simulation of $\exp(-i\rho t)$ up to error $\delta$. Taking $m= \lceil 1/(6\delta) \rceil$, we have
\begin{align}
f(t,\delta) &\ge f(mt,m\delta)/m  \\
&\ge C(mt)^2/m = Cmt^2 \label{eq:cmt} \\
&= \Omega(t^2/\delta),
\end{align}
where \cref{eq:cmt} holds because $m\delta \le 1/6+\delta \le 1/3$ and $mt \ge t/(6\delta) \ge \pi$, so \cref{eq: large t case} applies.
\end{proof}

%%%%%%%%%%%%%%%%%%%%%%%%%%%%%%%%%%%%%%%%%%%%%%%%%%%%%%%%%%%%%
\section{Pure State Discrimination and the Optimality of LMR for Pure States} \label{sec:optimal}
%%%%%%%%%%%%%%%%%%%%%%%%%%%%%%%%%%%%%%%%%%%%%%%%%%%%%%%%%%%%%

In the previous section, we saw that the sample complexity of the LMR protocol cannot in general be improved. However, the specific case of state discrimination that we used in the proof involved mixed states. One is left to wonder whether simulating $\exp({-i\proj{\psi}t})$ for a pure state $\ket{\psi}$ might possibly be more efficient. This relates to a practically relevant question; as we will see in \cref{sec:universal}, the LMR protocol and certain pure states as resources create a universal model for quantum computation.

In this section, however, we show that LMR is also optimal for pure states, at least in the $\delta$ error parameter. What about the $t$ parameter? We argue that we cannot expect to prove a meaningful lower bound on the $t$ dependence in pure state LMR. The reason is that, given any state $\rho$ and promised that $\exp({-i\rho t})$ is periodic (i.e.\ $\exp({-i\rho t_1})=\exp({-i\rho t_2})$ for any $t_2=t_1+kT$ for integer $k$ and real number $T$), we can always simulate the Hamiltonian $\rho$ for an equivalent time $t'\in[0,T)$ instead. Notice first that LMR gives an algorithm for this simulation that takes a finite number of samples for any time $t'\in[0,T)$ and fixed $\delta$. Since we have such an upper bound, any lower bound correct in the $\delta$ scaling but not necessarily in the time scaling will differ by at most a constant from this upper bound. Such a lower bound is not meaningful with respect to any asymptotic scaling. Most relevant to our immediate purpose, this argument applies to pure states: knowing a state is pure, we immediately know its period, namely $2\pi$.

To prove that the LMR protocol is optimal for pure states, we first show that \emph{pure} state discrimination reduces to a problem we call sample-based Grover's search. Then, with the help of known bounds on state discrimination, we prove a lower bound on the efficiency of sample-based Grover's search. Finally, we show that LMR can be used to implement sample-based Grover's search, and therefore find that LMR is optimal in terms of the precision $\delta$.

%-----------------------------------------------%
\subsection{Metrological View of Grover's Search} \label{met_Grover}
%-----------------------------------------------%

While Grover's search \cite{G96} is a well-known quantum mechanical task, it is not often stated in its form as a decision problem, and very rarely \cite{DM15} as a metrological decision problem, where the inputs are unitaries and the output depends on a property that those unitaries either possess or do not possess. This guise is useful for our purposes, however, because the LMR protocol, \cref{eq:map}, allows us to turn metrology problems on states into metrology problems on quantum operations.

\newcommand{\T}{\mathcal{T}}

In the metrological view, Grover's search, or perhaps more precisely amplitude amplification \cite{BHMT02}, is the following problem of parameter estimation. Let $\T$ be a subspace of $\C^{2^q}$. We call $\T$ the \emph{target subspace}. Let $U_\T$ be a unitary acting on $q+1$ qubits such that
\begin{equation}
  U_\T \ket{\phi} \ket{0} =
  \begin{cases}
    \ket{\phi}\ket{1},&\textrm{ if }\ket{\phi}\in\T,\\
    \ket{\phi}\ket{0},&\textrm{ if }\ket{\phi}\perp\T.
  \end{cases}
  \label{eq:UT}
\end{equation}
In this problem, and in the following variations, we will assume access to $U_\T$ and $U_\T\ct$ are free.
For an $q$-qubit unitary $V$, define
\begin{equation}
  \lambda
  = \left|\of{\1 \x \bra{1}} U_\T \of[\big]{\of{V \ket{0}\xp{q}} \x \ket{0}}\right|^2.
\end{equation}
Then in Grover's search, the task is to decide whether $\lambda\ge w$ (for $w>0$) or $\lambda=0$, while using $V$ and $V\ct$ as few times as possible. 
In words, if we call $\ket{s}=V\ket{0}\xp{q}$ the start state, we would like to determine whether the start state has substantial probability mass in the target subspace or none, promised one is the case. If we solve this problem using Grover's search and count the number of uses of $V$ and $V\ct$ required to succeed with probability $1-\epsilon$, we get the standard complexity $\Theta(\log(1/\epsilon)/\sqrt{w})$ \cite{BBHT98,BCDZ99}.

One simple modification of metrological Grover's search is to replace the circuit description of the start state preparation operator $V$ with copies of the start state $\ket{s}$ instead. The problem is now to determine whether
$\lambda = |\of{\1 \x \bra{1}} U_\T \of{\ket{s} \x \ket{0}}|^2$
is at least $w>0$ or equal to zero, promised one is the case, given copies of $\ket{s}$ and unlimited access to $U_\T$ and $U_{\T}^\dagger$. We call this \emph{sample-based Grover's search}. But how many copies of $\ket{s}$ are needed? We will see in \cref{sec:pure_optim} that the answer is $\Theta\of*{\log(1/\epsilon)/w}$, and so we find we have lost the square-root advantage of Grover's search. 

A second variant of metrological Grover's search is to replace both $V$ and $U_\T$ with quantum states. In this form, the problem becomes: given copies of $q$-qubit states $\ket{s}$ and $\ket{t}$, determine whether $\lambda=\abs{\braket{s}{t}}^2$ is at least $w>0$ or equal to zero, promised one is the case. We call this variant \emph{orthogonality testing}. The number of copies of $|s\rangle$ and $|t\rangle$ needed will also turn out to be $\Theta\of*{\log(1/\epsilon)/w}$; see \cref{sec:opt_comm_sim}.

Orthogonality testing is similar to \emph{equality testing}, the problem of deciding whether $\abs{\braket{s}{t}}^2 = 1$ or $\abs{\braket{s}{t}}^2<1-w$ for some $w>0$, for which there is already an optimal (up to log factors) algorithm \cite{BCWD01}.

%-----------------------------------------------%
\subsection{The LMR Protocol is Optimal for Pure States}\label{sec:pure_optim}
%-----------------------------------------------%

To show LMR is optimal for pure states, we begin by showing a lower bound on sample-based Grover's search. Then, we show that sample-based Grover's search can be implemented optimally by LMR.

\begin{lemma}\label{state-based bound}
Sample-based Grover's search with success probability $1-\epsilon$ uses $\Theta\left(\log(1/\epsilon)/w\right)$ copies of $|s\rangle$.
\end{lemma}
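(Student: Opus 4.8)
The plan is to prove the two directions separately: an $O(\log(1/\epsilon)/w)$ upper bound by a direct classical-statistics argument, and a matching $\Omega(\log(1/\epsilon)/w)$ lower bound obtained by reducing pure-state discrimination to sample-based Grover's search and invoking the Helstrom bound. This matches the section's stated strategy, in which a Grover protocol is shown to solve a state-discrimination instance, so that the discrimination lower bound transfers to Grover's search.

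For the upper bound I would exploit the fact that $U_\T$ is free. Applying $U_\T$ to $\ket{s}\x\ket{0}$ and measuring the last qubit in the computational basis produces outcome $1$ with probability exactly $\lambda$ and outcome $0$ otherwise, so each copy of $\ket{s}$ yields one independent Bernoulli sample with parameter $\lambda$. Run this on $n$ fresh copies and declare ``$\lambda\ge w$'' if and only if at least one outcome is $1$. In the case $\lambda=0$ we never err, while in the case $\lambda\ge w$ the error probability is the probability of observing all zeros, namely $(1-\lambda)^n\le(1-w)^n\le e^{-wn}$. Choosing $n=\lceil \ln(1/\epsilon)/w\rceil=O(\log(1/\epsilon)/w)$ makes this at most $\epsilon$, giving success probability $1-\epsilon$.

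For the lower bound I would exhibit a single hard instance and argue that any Grover protocol solving it must distinguish two fixed pure states. Take a one-dimensional target $\T=\spn\{\ket{t}\}$ and the two start states $\ket{s_0}=\ket{t^\perp}$, with $\lambda=\abs{\braket{t}{s_0}}^2=0$, and $\ket{s_1}=\sqrt{w}\,\ket{t}+\sqrt{1-w}\,\ket{t^\perp}$, with $\lambda=w$; these overlap as $\abs{\braket{s_0}{s_1}}^2=1-w$. The key observation is that $U_\T$ and $U_\T\ct$ are the \emph{same} unitary in both instances, so an $n$-copy Grover protocol---including all of its oracle calls---is a single fixed quantum operation applied to $\ket{s}\xp{n}$, the two instances differing only in whether $\ket{s}=\ket{s_0}$ or $\ket{s}=\ket{s_1}$. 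A protocol succeeding with probability $1-\epsilon$ therefore distinguishes $\proj{s_0}\xp{n}$ from $\proj{s_1}\xp{n}$ with error at most $\epsilon$.

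It then remains to lower bound $n$ via state discrimination. By the Helstrom bound the optimal success probability is $\tfrac12\bigl(1+\sqrt{1-\abs{\braket{s_0}{s_1}}^{2n}}\bigr)$, so $1-\epsilon$ distinguishability forces $(1-w)^n\le 4\epsilon(1-\epsilon)$, i.e.\ $n\ge \log\!\bigl(1/(4\epsilon(1-\epsilon))\bigr)/\bigl(-\log(1-w)\bigr)$. Using $-\log(1-w)\le 2w$ for $w\le 1/2$ yields $n=\Omega(\log(1/\epsilon)/w)$, matching the upper bound. The main obstacle is precisely the reduction step: one must argue carefully that free, two-sided oracle access to the same $U_\T$ across both instances confers no distinguishing power beyond that available from the copies of $\ket{s}$ themselves, so that monotonicity of trace distance under the protocol's fixed channel applies and the usual square-root Grover speedup genuinely disappears.
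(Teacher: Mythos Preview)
Your proposal is correct and follows essentially the same approach as the paper. Both the upper bound (apply $U_\T$, measure the ancilla, and look for a single ``1'' outcome) and the lower bound (reduce pure-state discrimination to the Grover instance by fixing $U_\T$ and letting the start state be the unknown, then apply Helstrom on $n$ copies) match the paper's argument; your version is slightly more explicit in pointing out that $U_\T$ is identical across the two instances, so the entire protocol is a fixed channel on $\ket{s}^{\otimes n}$ and monotonicity of trace distance applies.
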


\begin{proof}
We will first prove the lower bound. Consider the pure state discrimination problem of Helstrom \cite{Helstrom}. You are given a quantum state $\ket{\psi}$ (of arbitrary dimension) which is either $\ket{\phi_1}$ or $\ket{\phi_2}$, each with probability $1/2$. You are provided with classical descriptions of the two states $\ket{\phi_1}$ and $\ket{\phi_2}$, and asked to decide whether $\ket{\psi} = \ket{\phi_1}$ or $\ket{\psi} = \ket{\phi_2}$. Over all possible measurements one could perform on $\ket{\psi}$, what is the minimum failure rate $\epsilon$ that can be achieved?

Helstrom's bound \cite{Helstrom} states that for any discrimination procedure,
\begin{equation}\label{helstrom_bound}
\epsilon\ge\frac12\left(1-\sqrt{1-|\langle\phi_1|\phi_2\rangle|^2}\right).
\end{equation}
A special case of the bound corresponds to discrimination given $n$ copies of $|\psi\rangle$ instead of one. Then, the problem is to discriminate between $|\phi_1\rangle^{\otimes n}$ and $|\phi_2\rangle^{\otimes n}$, and Helstrom's bound can be rearranged to give
\begin{equation}
n\ge\frac{\log4\epsilon(1-\epsilon)}{\log|\langle\phi_1|\phi_2\rangle|^2}\ge
-\frac{\log(1/4\epsilon)}{\log|\langle\phi_1|\phi_2\rangle|^2}.
\end{equation}
For $|\langle\phi_1|\phi_2\rangle|^2$ close to 1, this is asymptotically the bound
\begin{equation}\label{helstrom_lower}
n=\Omega\left(\frac{\log(1/\epsilon)}{1-|\langle\phi_1|\phi_2\rangle|^2}\right).
\end{equation}

The next step in proving the lower bound is to show that pure state discrimination can be done with sample-based Grover's search as described in \cref{met_Grover}. A similar reduction to state discrimination and Helstrom's bound is the key step in \cite{BCWD01} for proving a lower bound on equality testing. 

Now, recall that sample-based Grover's search requires copies of a $q$-qubit input state $\ket{s}$ and a unitary $U_\T$ that defines a target space $\T$, as in \cref{eq:UT}. We set $\ket{s}$ to the mystery state $\ket{\psi}$.
To choose $U_\T$, we note that for some $w \in [0,1]$ we can write $\ket{\phi_2} = \sqrt{1-w} \ket{\phi_1} + \sqrt{w} \ket{\phi_1^\perp}$ where $\ket{\phi_1^\perp}$ is a normalized state such that $\braket{\phi_1}{\phi_1^\perp} = 0$. Note that $\ket{\phi_1^\perp}$ and $w = \abs{\braket{\phi_1^\perp}{\phi_2}}^2$ are known because we have classical descriptions of $\ket{\phi_1}$ and $\ket{\phi_2}$. We can therefore choose $U_\T$, as defined by \cref{eq:UT}, to be a $\ket{\phi_1^\perp}$-tester by letting $\T = \spn \set{\ket{\phi_1^\perp}}$.

Now, provided enough copies of $\ket{s} = \ket{\psi}$, a sample-based Grover's algorithm with this choice of $U_\T$ will be able to distinguish between $\ket{\psi} = \ket{\phi_1}$ (the $\lambda = 0$ case) and $\ket{\psi} = \ket{\phi_2}$ (the $\lambda \ge w$ case). Since it solves the pure state discrimination problem for states $\ket{\phi_1}$ and $\ket{\phi_2}$, using \cref{helstrom_lower} with $\abs{\braket{\phi_1}{\phi_2}}^2 = 1 - w$ gives us the desired bound. 

Now we prove the upper bound. Given an oracle marking the target space, we should, by just applying $U_\T$ to $\ket{s} \ket{0}$ and measuring the ancilla bit, on average observe at least one positive event after $O(1/w)$ trials if $|s\rangle$ does have some overlap $\ge w$ with the target space. To boost the probability of success from a constant to $\epsilon$ requires only a factor of $\log(1/\epsilon)$ more attempts. This is, of course, the pure state discrimination analogue of a classical randomized algorithm for unstructured search.
\end{proof}

While the classical search algorithm described in the proof is an obvious optimal procedure in sampling complexity, we can also solve sample-based Grover's search with LMR. The ultimate reason to do this is not to give a useful algorithm for sample-based search, but rather to show that LMR is optimal in the number of copies of a pure state $\rho$ that it uses to simulate $e^{-i\rho t}$.

\begin{theorem}\label{LMR_opt}
The number of copies of an unknown pure state $\rho$ required for any algorithm to simulate $e^{-i\rho t}$ to trace norm $\delta$ is $\Omega(1/\delta)$.
\end{theorem}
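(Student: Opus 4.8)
The plan is to prove this lower bound by reduction to sample-based Grover's search, combining the $\Omega(1/w)$ copy lower bound of \cref{state-based bound} with the observation that a simulator for $e^{-i\rho t}$ can be used to build the Grover reflection about the start state. Since the discussion in \cref{met_Grover} already argues that the $t$-dependence of pure-state LMR is not asymptotically meaningful, it suffices to establish the bound at the single fixed time $t=\pi$.

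First I would note that for a pure state $\rho=\proj{s}$ one has $e^{-i\rho\pi}=\1-2\proj{s}$, which is exactly the Grover reflection about $\ket{s}$. Hence any algorithm that simulates $e^{-i\rho\pi}$ to trace norm $\delta'$ using $f(\pi,\delta')$ copies of $\ket{s}$ yields, as a channel $\tilde R$ on the $q$-qubit search register, an approximation of this reflection with $\tfrac12\norm{\tilde R-(\1-2\proj{s})}_{\mathrm{tr}}\le\delta'$. Combining $\tilde R$ with the target-space reflection built from the free oracle $U_\T$ produces an approximate Grover iterate, so I can run the standard $O(1/\sqrt{w})$-iteration amplitude-amplification routine to decide whether $\lambda\ge w$ or $\lambda=0$, spending $f(\pi,\delta')$ copies of $\ket{s}$ per iteration.

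Next I would track how the per-reflection error propagates. Because the trace-norm channel distance is a maximum over all inputs and trace distance is non-increasing under the remaining exact operations, the errors accumulate additively, so after $k=O(1/\sqrt{w})$ iterations the output lies within trace distance $O(k\delta')$ of the ideal Grover output. Choosing $\delta'=\Theta(\sqrt{w})$ keeps this accumulated error below a constant, so the decision succeeds with probability $2/3$ while using a total of $k\cdot f(\pi,\delta')=O(f(\pi,\Theta(\sqrt{w}))/\sqrt{w})$ copies of $\ket{s}$. By \cref{state-based bound} this total must be $\Omega(1/w)$; rearranging gives $f(\pi,\Theta(\sqrt{w}))=\Omega(1/\sqrt{w})$, and substituting $\delta=\Theta(\sqrt{w})$ yields $f(\pi,\delta)=\Omega(1/\delta)$, as claimed.

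The main obstacle I anticipate is the error bookkeeping in the composition step, and in particular justifying that the trace-norm simulation guarantee (rather than the stronger diamond-norm guarantee) is enough. The key point to verify is that throughout amplitude amplification the search register is never entangled with an external reference on which the reflections do not act: the oracle's flag qubit is uncomputed to $\ket{0}$ before each reflection about $\ket{s}$, so the input to each application of $\tilde R$ is a state (possibly mixed, owing to accumulated error) supported on the search register alone. This is precisely the regime in which the trace-norm channel distance controls the per-step error, which makes the additive accumulation legitimate and renders the entire reduction valid in trace norm.
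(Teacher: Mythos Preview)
Your proposal is correct and follows essentially the same route as the paper: use $e^{-i\proj{s}\pi}=\1-2\proj{s}$ as the start-state reflection, run $O(1/\sqrt{w})$ Grover iterates with per-reflection accuracy $\Theta(\sqrt{w})$, and invoke the $\Omega(1/w)$ lower bound of \cref{state-based bound} to force $f(\pi,\delta)=\Omega(1/\delta)$. Your explicit treatment of additive error accumulation and of why the trace-norm (rather than diamond-norm) guarantee suffices---because the flag qubit is uncomputed before each start-state reflection, so the simulator never acts on a register entangled with an external reference---is more careful than the paper's one-line version; the one detail the paper makes explicit that you leave implicit is that, since $\lambda$ is only promised to satisfy $\lambda\ge w$ rather than being known, one must use an unknown-$\lambda$ variant such as the exponential-search schedule of \cite{BBHT98} to keep the iterate count at $O(1/\sqrt{w})$.
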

\begin{proof}
For $\rho=\proj{s}$ a pure state we have
\begin{equation}
  e^{-i\rho t}=e^{-it\proj{s}} = \1 - (1 - e^{-it}) \proj{s}.
\end{equation}
Setting $t = \pi$ this is $e^{-i\pi\proj{s}} = \1 - 2 \proj{s} = R_{\proj{s}}$, the reflection about the start state $\ket{s}$ that plays a crucial role in Grover's algorithm \cite{G96}.

Let us implement sample-based Grover's search using this observation. Since we have unlimited access to $U_{\T}$ and $U_{\T}^\dagger$, we can implement the reflection about the target space as $R_{\T} = U_\T\ct (\1 \x Z)U_\T$ without any sampling. Grover's search finds a state $|t_0\rangle$ in the target space $\T = \set{\ket{t} : U_\T \ket{t} \ket{0} = \ket{t} \ket{1}}$ by first applying $G=-R_{\proj{s}}R_{\T}$ to $|s\rangle$ $O(1/\sqrt{\lambda})$ times, where the initial probability mass of $|s\rangle$ inside the target space is $\lambda$. Next, we apply $U_{\T}$ to the resulting state and a $|0\rangle$ ancilla and then measure the ancilla to determine whether a state $|t_0\rangle$ within $\mathcal{T}$ has been found. This process as stated requires knowing $\lambda$ to determine the number of times $G$ is to be applied.

But notice, for the sample-based Grover search problem we are promised only that $\lambda=0$ or $\lambda\ge w$, not that we know $\lambda$ exactly. However, this is not a problem. It has been shown, using either exponentially increasing sequences of Grover iterates \cite{BBHT98} or fixed-point quantum search \cite{YLC14}, that performing $O(\log(1/\epsilon)/\sqrt{w})$ iterates suffices to distinguish the two cases with success probability $1-\epsilon$. That is, in the $\lambda=0$ case, no target state $|t_0\rangle$ is found, while in the $\lambda\ge w$ case such a state is found with probability $1-\epsilon$. In the remainder of the proof, we will take $\epsilon$ to be a constant.

Now since $O(1/\sqrt{w})$ applications of $R_{\proj{s}}$ are required, and we would
like the entire algorithm to succeed with constant error, we need each
application of $R_{\proj{s}}$ to have at most $\sqrt{w}$ error. So if $n_\delta$
copies of $\rho$ are required to simulate $R_{\proj{s}}$ to trace norm accuracy $\delta$,
then $n_{\sqrt{w}} \cdot 1/\sqrt{w}$ copies are required in total for the entire algorithm. Notice $n_{\sqrt{w}}\cdot 1/\sqrt{w}=\Omega(1/w)$ by the lower bound in \cref{state-based
bound}. Thus $n_\delta=\Omega(1/\delta)$
as advertised.
\end{proof}

%-----------------------------------------------------------------%
\section{Generalized LMR for Simulation of Hermitian Polynomials}\label{comm_sim}
%-----------------------------------------------------------------%

The sample-based Hamiltonian simulation of \cref{eq:map} can be further generalized. Instead of evolution of $\sigma$ by a single state $\rho$, the target Hamiltonian $H$ could be encoded by some combination of multiple states $\rho_1,\rho_2,\dotsc,\rho_K$. For example, we might want to implement the map
\begin{equation}
  \sigma \x \bigotimes_{j=1}^K \rho_j^{\otimes n_j} \quad\mapsto\quad
  e^{-i f(\rho_1,\rho_2,\dotsc,\rho_K) t} \sigma e^{i f(\rho_1,\rho_2,\dotsc,\rho_K) t},
  \label{eq:mapf}
\end{equation}
where $H=f(\rho_1,\rho_2,\dotsc,\rho_K)$ is some Hermitian multinomial function of the input states. We will treat this problem fully in this section.

One key tool will be the following lemma, which lets us simulate a Hamiltonian given by the difference of two subnormalized states:
\begin{lemma}
\label{lem:DifferenceOfStates}
Let $\rho' \in \D{\mathbb{C}^2 \otimes \HS_\A}$ be a quantum state of the form $\rho' = \proj{0} \otimes \rho_+ + \proj{1} \otimes \rho_-$, where $\rho_+, \rho_-$ are unknown subnormalized states with $\Tr \rho_+ + \Tr \rho_- = 1$. Using $n$ samples of $\rho'$, a quantum algorithm can transform $\sigma_{\A\B}$ into $\tilde{\sigma}_{\A\B}$ such that
\begin{equation}
  \frac{1}{2}\norm[\big]{
    \of[\big]{e^{-i H t} \x \1_\B} \sigma_{\A\B} \of[\big]{e^{i H t} \x \1_\B}
    -\tilde{\sigma}_{\A\B}
  }_{1} \leq O(\delta), \quad H=\rho_+ - \rho_-,
\end{equation}
if $n = O(t^2/\delta)$.
\end{lemma}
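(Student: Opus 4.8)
The plan is to run the LMR protocol of \cref{thm:LMR} essentially verbatim, replacing the plain swap $S$ with a sign-modified swap that reads the control qubit of $\rho'$. First I would record two structural facts. Since $\Tr\rho' = \Tr\rho_+ + \Tr\rho_- = 1$, the operator $\rho'$ is a genuine normalized density matrix, so we really are given copies of a valid state and the algorithm may remain agnostic to the split between $\rho_+$ and $\rho_-$. Moreover $H = \rho_+-\rho_-$ is Hermitian with $\|H\| \le \Tr\rho_+ + \Tr\rho_- = 1$, which is what makes the $O(t^2/\delta)$ scaling available. The target $\sigma_{\A\B}$ lives on $\HS_\A\otimes\HS_\B$, while each copy of $\rho'$ lives on $\mathbb{C}^2\otimes\HS_\A$; write $\mathsf q$ for its control qubit and $\A'$ for its (isomorphic) copy of $\HS_\A$.

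The central new ingredient is the choice of interaction. I would use the \emph{signed partial swap} generated by $G := Z_{\mathsf q}\otimes S_{\A\A'}$, i.e.\ the swap between the target register $\A$ and the copy's data register $\A'$, dressed with a Pauli $Z$ on the copy's control qubit. Note $G$ is a Hermitian involution ($G^2=\1$), so $e^{-iG\Delta} = (\cos\Delta)\1 - i(\sin\Delta)G$ is a partial swap of exactly the form appearing in \cref{thm:LMR} and implementable at the same cost. The key identity I would verify is
\begin{equation}
\Tr_{\mathsf q\A'}\bigl[-i[\,G,\ \sigma_\A\otimes\rho'\,]\bigr] = -i[H,\sigma_\A],
\end{equation}
which is precisely the time-zero generator of $e^{-iHt}(\cdot)\,e^{iHt}$. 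This follows from the partial-swap identities $\Tr_{\A'}[S_{\A\A'}(\sigma\otimes\tau)] = \tau\sigma$ and $\Tr_{\A'}[(\sigma\otimes\tau)S_{\A\A'}] = \sigma\tau$ (already implicit in the proof of \cref{thm:LMR}), together with $Z\proj{0}=\proj{0}$, $Z\proj{1}=-\proj{1}$, and $\Tr_{\mathsf q}\proj{0}=\Tr_{\mathsf q}\proj{1}=1$: the $\proj{0}$ branch of $\rho'$ contributes $+\rho_+$ and the $\proj{1}$ branch contributes $-\rho_-$, so the two partial traces evaluate to $H\sigma$ and $\sigma H$ respectively.

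With this identity in hand, the single-step analysis mirrors \cref{eq:lmr1,eq:lmr2}. Expanding the involution exactly and tracing out the copy gives
\begin{equation}
\Tr_{\mathsf q\A'}\bigl[e^{-iG\Delta}(\sigma_\A\otimes\rho')e^{iG\Delta}\bigr] = \sigma_\A - i[H,\sigma_\A]\Delta + O(\Delta^2) = e^{-iH\Delta}\sigma_\A e^{iH\Delta} + O(\Delta^2),
\end{equation}
where the $O(\Delta^2)$ remainder is uniform in $\sigma$ because $\|G\|=1$ and $\|H\|\le 1$. Then, exactly as in \cref{thm:LMR}, I would set $\Delta = \delta/t$ and repeat with a fresh copy of $\rho'$ at each of the $N = t/\Delta = t^2/\delta$ steps; since trace distance is contractive under the intervening channels, the per-step errors add to $N\cdot O(\Delta^2) = O(t\Delta) = O(\delta)$, yielding the claimed bound with $n = O(t^2/\delta)$ copies. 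The spectator register $\B$ is carried along untouched, precisely as in the reduction at the end of the proof of \cref{thm:LMR}.

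The only genuinely new work is choosing $G = Z_{\mathsf q}\otimes S_{\A\A'}$ and checking the commutator identity of the second paragraph; once that is settled, the error analysis is a line-by-line transcription of LMR under the substitutions $S\mapsto G$ and $\rho\mapsto H$. Thus the main point to get right is the bookkeeping in that identity---in particular, that the $Z$ on the copy's control qubit is exactly what converts the ``$+\rho_-$'' a plain swap would produce into the ``$-\rho_-$'' required for $H=\rho_+-\rho_-$---together with confirming that $G$ being a Hermitian involution keeps the second-order remainder uniformly $O(\Delta^2)$, just as $S^2=\1$ does in the original protocol.
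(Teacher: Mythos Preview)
Your proposal is correct and is essentially the paper's own proof: the paper's modified swap $S' := \proj{0}\otimes S + \proj{1}\otimes(-S)$ is precisely your $G = Z_{\mathsf q}\otimes S_{\A\A'}$, since $\proj{0}-\proj{1}=Z$. The only difference is presentational---the paper expands $e^{-iS'\Delta}=\proj{0}\otimes e^{-iS\Delta}+\proj{1}\otimes e^{iS\Delta}$ and traces out the two ancilla registers sequentially, while you keep $G$ packaged as a single Hermitian involution and trace out both at once; the single-step computation and the $\Delta=\delta/t$, $n=O(t^2/\delta)$ iteration are identical.
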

\begin{proof}
We will apply a modification of the LMR protocol---instead of using partial swaps, we will use the following unitary operator:
\begin{equation}
  e^{-iS'\Delta} = \proj{0} \otimes e^{-iS\Delta} + \proj{1} \otimes e^{iS\Delta}
\end{equation}
where
\begin{equation}
  S' := \proj{0} \otimes S + \proj{1} \otimes (-S)
\end{equation}
and $S$ is the usual swap operator. The first qubit in $e^{-iS'\Delta}$ essentially controls whether the partial swap on the remaining two systems is applied in the forwards or backwards direction in time. Applying $e^{-iS'\Delta}$ to $\rho' \otimes \sigma$, we obtain
\begin{align}
 &e^{-iS'\Delta} \of[\big]{\proj{0} \x \rho_+ \x \sigma + \proj{1} \x \rho_- \x \sigma} e^{iS'\Delta} \\
 &= \proj{0} \x e^{-iS\Delta} (\rho_+ \x \sigma) e^{iS\Delta} + \proj{1} \x e^{iS\Delta} (\rho_- \x \sigma) e^{-iS\Delta}.
\end{align}
Tracing out the second register (the register originally containing $\rho_\pm$), we obtain the state
\begin{equation}
\proj{0} \otimes \of[\big]{\Tr(\rho_+) \sigma - i[\rho_+,\sigma]\Delta} +
\proj{1} \otimes \of[\big]{\Tr(\rho_-) \sigma + i[\rho_-,\sigma]\Delta} + O(\Delta^2).
\end{equation}
Now tracing out the first qubit, we obtain the state
\begin{align}
(\Tr \rho_+ + \Tr \rho_-) \sigma - i[\rho_+-\rho_-,\sigma]\Delta + O(\Delta^2) &= \sigma - i[H,\sigma]\Delta + O(\Delta^2) \\
&= e^{-iH\Delta} \sigma e^{iH\Delta} + O(\Delta^2),
\end{align}
where we've used $\Tr \rho_+ + \Tr \rho_- = 1$ and $H = \rho_+ - \rho_-$ in the first line. Thus with one copy of $\rho'$ we can simulate the operation $e^{-iH\Delta}$ up to error $O(\Delta^2)$; by choosing $\Delta = \delta/t$ and repeating this for $t^2/\delta$ times, we obtain a simulation of $e^{-iHt}$ up to error $O(\delta)$, using $O(t^2/\delta)$ copies of $\rho'$.
\end{proof}

%-----------------------------------------%
\subsection{Simulating Linear Combinations}
%-----------------------------------------%

In the simplest case where $H=\sum_{j=1}^K c_j\rho_j$ is a linear combination of the $\rho_j$, we show:
\begin{theorem}
\label{cor:LinearCombinationOfStates}
Let $\rho_1,\dotsc,\rho_K \in \D{\HS_\A}$ and $\sigma_{\A\B} \in \D{\HS_\A \x \HS_\B}$ be unknown quantum states, and let $c_1,\dotsc,c_K \in \R$. Using $n$ samples from the states $\{\rho_1,\dotsc,\rho_K\},$ a quantum algorithm can transform $\sigma_{\A\B}$ into $\tilde{\sigma}_{\A\B}$ such that
\begin{equation}
  \frac{1}{2}\norm[\big]{
    \of[\big]{e^{-i H t} \x \1_\B} \sigma_{\A\B} \of[\big]{e^{i H t} \x \1_\B}
    -\tilde{\sigma}_{\A\B}
  }_{1} \leq O(\delta), \quad H=\sum_{j=1}^K c_j\rho_j,
\end{equation}
if $n=O(c^2t^2/\delta)$ where $c := \sum_{j=1}^K|c_j|$. Moreover, on average, the number of copies of $\rho_j$ consumed is $n_j=O(|c_j|ct^2/\delta)$.
\end{theorem}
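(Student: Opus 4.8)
The plan is to reduce the problem of simulating $e^{-iHt}$ for $H = \sum_{j=1}^K c_j \rho_j$ to the already-established machinery of \cref{lem:DifferenceOfStates}. The essential idea is that a real linear combination of states is, after appropriate rescaling, exactly the difference of two subnormalized states that can be packaged into a single ``controlled'' state $\rho'$ of the form required by that lemma. I would first absorb the coefficients $c_j$ into the time parameter and the sampling probabilities. Concretely, write $H = c \cdot H'$ where $c = \sum_j |c_j|$ and $H' = \sum_j (c_j/c)\rho_j$, so that the $|c_j/c|$ are a probability distribution over the index $j$. Then $e^{-iHt} = e^{-iH'(ct)}$, and it suffices to simulate $e^{-iH'\tau}$ with $\tau = ct$ to error $\delta$; by \cref{lem:DifferenceOfStates} applied with $H' = \rho_+ - \rho_-$, this costs $O(\tau^2/\delta) = O(c^2 t^2/\delta)$ samples, which is exactly the claimed bound.

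The second step is to exhibit the appropriate $\rho'$. Group the terms by sign: let $\rho_+ = \sum_{j : c_j > 0} (|c_j|/c)\,\rho_j$ and $\rho_- = \sum_{j : c_j < 0} (|c_j|/c)\,\rho_j$, so that $\rho_+ - \rho_- = H'$ and $\Tr\rho_+ + \Tr\rho_- = \sum_j |c_j|/c = 1$, matching the normalization hypothesis of \cref{lem:DifferenceOfStates}. The key operational point is that we can \emph{prepare a single sample} of the controlled state $\rho' = \proj{0}\otimes\rho_+ + \proj{1}\otimes\rho_-$ using on average one sample drawn from the ensemble $\{\rho_j\}$: flip a classical coin that selects index $j$ with probability $|c_j|/c$, draw one copy of $\rho_j$, and set the control qubit to $\ket{0}$ if $c_j > 0$ and $\ket{1}$ if $c_j < 0$. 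Tracing out the classical randomness produces exactly $\rho'$. Thus each of the $O(c^2t^2/\delta)$ invocations of the inner LMR step consumes one sample of $\rho'$, i.e.\ one sample from the ensemble.

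For the ``on average'' per-state count, I would observe that the index $j$ is selected with probability $|c_j|/c$ \emph{independently} at each of the $n = O(c^2 t^2/\delta)$ rounds, so the expected number of copies of $\rho_j$ consumed is $n_j = (|c_j|/c)\cdot n = O(|c_j| c\, t^2/\delta)$, which is precisely the stated bound. This handles the final sentence of the theorem cleanly.

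The main subtlety I anticipate is not in the sampling reduction itself but in verifying that the randomized preparation of $\rho'$ composes correctly with the simulation guarantee of \cref{lem:DifferenceOfStates}. That lemma assumes a fixed state $\rho'$, whereas here $\rho'$ arises as the average over coin flips of product states $\proj{b_j}\otimes\rho_j$. One must check that feeding these randomly-prepared samples into the protocol of \cref{lem:DifferenceOfStates} yields the same output channel as feeding genuine copies of $\rho'$ --- this follows because the LMR-type protocol is linear in the input state, so the output averaged over the coin is the output on the averaged input $\rho'$, but it is worth stating explicitly that the coin flips must be fresh and independent at each round so that the joint state of all $n$ samples is the product $\rho'^{\otimes n}$ in expectation. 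Once this linearity-plus-independence point is made, the error bound $O(\delta)$ and the sample count transfer directly from \cref{lem:DifferenceOfStates}.
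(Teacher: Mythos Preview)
Your proposal is correct and follows essentially the same approach as the paper: define $\rho' = \proj{0}\otimes\rho_+ + \proj{1}\otimes\rho_-$ via randomized sampling with probabilities $|c_j|/c$, then invoke \cref{lem:DifferenceOfStates} at rescaled time $\tau = ct$. Your explicit discussion of why the randomized preparation composes correctly with the lemma (linearity of the protocol in the input state, plus independent fresh coins) is a welcome elaboration of what the paper handles with the one-line remark that a mixed state is indistinguishable from its defining ensemble.
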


\begin{proof}
Define
\begin{equation}
\rho' := \frac{1}{c} \sof[\bigg]{ \proj{0} \otimes \sum_{j:c_j>0} c_j\rho_j + \proj{1} \otimes \sum_{j:c_j<0} (-c_j)\rho_j }.
\end{equation}
Note that $\rho'$ is a valid density matrix, and can be created by sampling the
state $\proj{0} \otimes \rho_j$ with probability $c_j/c$ if $c_j > 0$,
and otherwise sampling the state $\proj{1} \otimes \rho_j$ with
probability $-c_j/c$ if $c_j < 0$. (This works for the same reason a mixed state is indistinguishable from the corresponding probabilistic distribution of pure states.) By \cref{lem:DifferenceOfStates}, since $\rho'$ is of the form $\proj{0} \otimes \rho_+ + \proj{1} \otimes \rho_-$ with
\begin{equation}
\rho_+ - \rho_- = \frac{1}{c} \sum_{j=1}^K c_j\rho_j = \frac{H}{c},
\end{equation}
 we can simulate $e^{-iHt} = e^{-i(H/c) (ct)}$ to error $\delta$ using $O(c^2t^2/\delta)$ copies of $\rho'$. On average the state $\rho_j$ is sampled $|c_j|/c \cdot O(c^2t^2/\delta) = O(|c_j|ct^2/\delta)$ times.
\end{proof}

We now show that \cref{cor:LinearCombinationOfStates} is tight. 

\begin{theorem}\label{thm:lin_sim_opt}
Let $\{c_1,\dotsc,c_K\}$ be a set of $K$ real numbers. Then there exist $\rho_1,\dotsc,\rho_K$ such that to simulate $H=\sum_{j=1}^K c_j\rho_j$ for time $t$ and to error $\delta$  in trace norm 
requires $\Omega(c^2t^2/\delta)$ copies of states in $\set{\rho_1,\dotsc,\rho_K}$, where $c:=\sum_j\abs{c_j}$, as long as $\delta$ and $\delta/(ct)$ are smaller than some constants.
\end{theorem}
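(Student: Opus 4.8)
The plan is to reduce to the single-state lower bound already established in \cref{thm:LMRopt}. The key observation is that \cref{thm:lin_sim_opt} asks me to \emph{construct} favorable $\rho_1,\dotsc,\rho_K$, so I have the freedom to choose them in a degenerate way that collapses the linear combination into a single effective state. The cleanest choice is to set every $\rho_j$ equal to a common single-qubit state of the form $\rho(x)$ from \cref{eq:rhox}. Then $H=\sum_j c_j \rho_j = \of*{\sum_j c_j}\rho$, and if I pick the signs so that $\sum_j c_j = \pm c$ (for instance, all $c_j\ge 0$, giving $\sum_j c_j = c$), the target Hamiltonian becomes $H = c\,\rho$, and simulating $e^{-iHt}=e^{-i\rho(ct)}$ for time $t$ is exactly simulating $e^{-i\rho t'}$ for the rescaled time $t' = ct$.

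Given that reduction, I would invoke \cref{thm:LMRopt} directly: simulating $e^{-i\rho t'}$ to trace norm $\delta$ requires $\Omega((t')^2/\delta) = \Omega(c^2 t^2/\delta)$ copies of $\rho$, provided the hypotheses $\delta \le 1/6$ and $\delta/t' \le 1/(6\pi)$ hold. Translating the second hypothesis through $t'=ct$ gives $\delta/(ct)\le 1/(6\pi)$, which is exactly the ``$\delta/(ct)$ smaller than a constant'' condition in the statement; the first is the ``$\delta$ smaller than a constant'' condition. Since in this construction every $\rho_j$ is literally the same state $\rho$, a copy of any $\rho_j$ is just a copy of $\rho$, so a lower bound on total copies of $\rho$ is immediately a lower bound on copies of states in $\set{\rho_1,\dotsc,\rho_K}$, giving the claimed $\Omega(c^2t^2/\delta)$.

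The one point requiring a little care, and the place I expect the only real friction, is handling the signs of the $c_j$ when they are not all of one sign. If some $c_j$ are negative, setting all $\rho_j$ equal to a fixed $\rho$ yields $H=\of*{\sum_j c_j}\rho$ with $\abs{\sum_j c_j}$ possibly much smaller than $c=\sum_j\abs{c_j}$, which would degrade the bound. To fix this I would let $\rho_j$ depend on the sign of $c_j$: choose a single-qubit $\rho$ as above, and set $\rho_j = \rho$ when $c_j>0$ but $\rho_j = X\rho X$ (the bit-flipped state) when $c_j<0$. Writing $\rho=\rho(x)=\tfrac12\1+(x-\tfrac12)Z$, we have $X\rho X = \tfrac12\1-(x-\tfrac12)Z$, so $c_j\rho_j$ contributes $+\abs{c_j}(x-\tfrac12)Z$ to the traceless part in the first case and $-(-\abs{c_j})(x-\tfrac12)Z=+\abs{c_j}(x-\tfrac12)Z$ in the second; hence $H=\tfrac{c}{2}\1 + c\,(x-\tfrac12)Z = c\,\rho(x)$ up to the identity term, which only contributes a global phase and does not affect the simulation difficulty. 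Thus $H$ again acts as $c\,\rho(x)$ and the reduction to $t'=ct$ goes through verbatim.

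A final bookkeeping step is to confirm that the distinguishing argument underlying \cref{thm:LMRopt} still applies: there I need two candidate states, so I would take $\rho(x)=\rho(\tfrac12)$ versus $\rho(\tfrac12+\epsilon)$ exactly as in that proof, set $\epsilon$ via $t'=t_\epsilon=\tfrac{\pi}{2\epsilon}$, and run the identical segmentation argument (\cref{eq: segments}) to convert the high-$t'$ bound into a bound for arbitrary $\del,t'$ in the allowed regime. Since all of this is encapsulated in the statement of \cref{thm:LMRopt}, the proof reduces to the substitution $t'=ct$ together with the sign-adjustment of the $\rho_j$, and no new lower-bound machinery is needed.
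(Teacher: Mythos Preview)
Your proposal is correct and follows essentially the same reduction as the paper: collapse the linear combination to a single effective state so that $e^{-iHt}$ becomes $e^{-i\rho(ct)}$ up to a global phase, then invoke \cref{thm:LMRopt} with $t'=ct$.

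The only substantive difference is in the handling of negative $c_j$. The paper sets $\rho_j=\1/2$ (maximally mixed) for $c_j<0$, which kills those terms up to an identity contribution and leaves an effective coefficient $\sum_{j:c_j>0}c_j\ge c/2$ (after a WLOG symmetry argument). You instead set $\rho_j=X\rho X$, which flips the sign of the traceless part and yields the full coefficient $c$ directly, avoiding the WLOG step. Both are fine and give the same asymptotic bound. One small slip: the identity coefficient in your $H$ is $\tfrac{1}{2}\sum_j c_j$, not $\tfrac{c}{2}$, but as you note this is irrelevant to the dynamics. Also, for $c_j<0$ a copy of $\rho_j=X\rho X$ is not literally a copy of $\rho$; you should say explicitly that since $X$ is known, the simulator can convert freely between them, so the total sample count is preserved.
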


\begin{proof}
We first consider the case that $\rho_j=\rho$ and $c_j\geq 0$ for all $j$. Then
we can use \cref{cor:LinearCombinationOfStates} to simulate $H=\sum_jc_j\rho = c\rho$ for time
$t$ to accuracy $\delta$ using $O(c^2t^2/\delta)$ samples of $\rho$. Comparing with our lower bound in \cref{thm:LMRopt}, we find
this is optimal.

If we have $c_j$'s such that some $c_j<0$, without loss of generality, assume that $\sum_{j:c_j>0}c_j\ge\sum_{j:c_j<0}|c_j|$. Then if $c_j>0$, set $\rho_j=\rho$, and if $c_j<0$, set $\rho_j$ equal to the maximally mixed state. This gives
\begin{equation}
H = \sum_{j:c_j>0}c_j\rho + c'\1
\end{equation}
where $c'$ is some real number. The term proportional to the identity can be dropped (since it only gives an irrelevant phase factor), and $\sum_{j:c_j>0}c_j \ge c/2$ by assumption, so simulating $H$ for time $t$ to accuracy $\delta$ requires $\Omega(c^2t^2/\delta)$ samples of $\rho$.
\end{proof}

%------------------------------------%
\subsection{Simulating the Commutator and Anticommutator}
%------------------------------------%

In this section, we will show how to simulate a Hamiltonian that is the commutator of two states $H_c=i[\rho_1,\rho_2]$; or the anticommutator $H_a=\{\rho_1,\rho_2\}$; or some linear combination of the two.
As we will show in the next section, simulating the commutator $H_c$ is useful
for orthogonality testing and for adding two unknown pure states. 

One approach to simulating $H_c$ would be to use the expression~\cite[Corollary 2.12.5]{Lie}
\begin{align}
e^{-i H_c t}=\lim_{r\rightarrow \infty}(e^{-i \rho_1 \sqrt{t/r}}e^{-i \rho_2 \sqrt{t/r}}e^{i \rho_1 \sqrt{t/r}}e^{i \rho_2 \sqrt{t/r}})^r
\end{align}
and apply \cref{thm:LMR} sequentially for each term in the product.
Unfortunately, this leads to an error of $O(t)$, and is incorrect even at the lowest order. We now present an alternate approach using \cref{lem:DifferenceOfStates}.

\begin{theorem}
\label{thm:CommutatorSimulationC}
Let $\rho_1,\rho_2 \in \D{\HS_\A}$ and $\sigma_{\A\B} \in \D{\HS_\A \x \HS_\B}$ be unknown quantum states, and $\phi \in [0,2\pi)$.  Using $n$ samples each of $\rho_1$ and $\rho_2$, a quantum algorithm can transform $\sigma_{\A\B}$ into $\tilde{\sigma}_{\A\B}$ such that
\begin{equation}
  \frac{1}{2}\norm[\big]{
    \of[\big]{e^{-i H t} \x \1_\B} \sigma_{\A\B} \of[\big]{e^{i H t} \x \1_\B}
    -\tilde{\sigma}_{\A\B}
  }_\diamond \leq O(\delta),
  \end{equation}
where 
\begin{equation}
H= \frac{1}{2}\cos(\phi)\{\rho_1,\rho_2\} + \frac{1}{2}\sin(\phi)i[\rho_1,\rho_2]
\end{equation}
  if $n=O(t^2/\delta)$.
\end{theorem}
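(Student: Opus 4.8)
The plan is to generalize the controlled-direction-swap construction behind \cref{lem:DifferenceOfStates}, replacing the partial swap by a \emph{phased cyclic permutation} of three registers. First I would rewrite the target Hamiltonian in product form: a direct expansion gives
\begin{equation}
H = \tfrac12\cos(\phi)\{\rho_1,\rho_2\} + \tfrac12\sin(\phi)\,i[\rho_1,\rho_2] = \tfrac12\bigl(e^{i\phi}\rho_1\rho_2 + e^{-i\phi}\rho_2\rho_1\bigr),
\end{equation}
which is manifestly Hermitian and makes clear that the two orderings $\rho_1\rho_2$ and $\rho_2\rho_1$ are what must be synthesized, weighted by the phases $e^{\pm i\phi}$.

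Next I would place $\sigma$ in a register $\A$ and adjoin two ancillary registers holding one fresh copy each of $\rho_1$ and $\rho_2$, and let $C$ be the cyclic shift on these three registers (equivalently $C = S_{\A2}S_{\A1}$, a product of two swaps, with $C^3 = \1$). The heart of the argument is the pair of partial-trace identities
\begin{equation}
\Tr_{12}\bigl[C\,(\sigma\x\rho_1\x\rho_2)\bigr] = \rho_2\rho_1\sigma, \qquad \Tr_{12}\bigl[(\sigma\x\rho_1\x\rho_2)\,C\bigr] = \sigma\rho_2\rho_1,
\end{equation}
together with their adjoints for $C^{\dagger}$, which give $\Tr_{12}[[C,\sigma\x\rho_1\x\rho_2]] = [\rho_2\rho_1,\sigma]$ and $\Tr_{12}[[C^{\dagger},\sigma\x\rho_1\x\rho_2]] = [\rho_1\rho_2,\sigma]$. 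Forming the Hermitian generator $G_\phi := \tfrac12(e^{-i\phi}C + e^{i\phi}C^{\dagger})$ and expanding to first order then yields, exactly as in the LMR calculation,
\begin{equation}
\Tr_{12}\bigl[e^{-iG_\phi\Delta}(\sigma\x\rho_1\x\rho_2)e^{iG_\phi\Delta}\bigr] = \sigma - i[H,\sigma]\Delta + O(\Delta^2) = e^{-iH\Delta}\sigma e^{iH\Delta} + O(\Delta^2),
\end{equation}
so a single copy of each of $\rho_1,\rho_2$ implements $e^{-iH\Delta}$ up to error $O(\Delta^2)$.

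From here the argument mirrors \cref{thm:LMR} and \cref{lem:DifferenceOfStates}: set $\Delta = \delta/t$ and iterate the step $O(t/\Delta) = O(t^2/\delta)$ times, consuming one fresh copy of $\rho_1$ and one of $\rho_2$ each time; the per-step error $O(\Delta^2)$ accumulates subadditively to $O((t/\Delta)\Delta^2) = O(\delta)$, giving total cost $n = O(t^2/\delta)$ of each state. To upgrade the trace-norm bound to the claimed diamond-norm bound I would adjoin the untouched register $\B$, replacing $\sigma$ by $\sigma_{\A\B}$ and letting $G_\phi$ act as $G_\phi\x\1_\B$. Finally, since $C^3=\1$ forces $e^{-iG_\phi\Delta} = a_0\1 + a_1 C + a_2 C^2$ for explicit coefficients $a_j$, the step is a linear combination of the swap-products $\1,C,C^2$ and is implementable efficiently, paralleling the efficient-implementation remark after \cref{thm:LMR}.

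The main obstacle is the bookkeeping in the central step: verifying the partial-trace identities so that the cyclic permutation produces precisely the \emph{ordered} products $\rho_2\rho_1\sigma$ and $\sigma\rho_2\rho_1$ (rather than a symmetrized or linear-combination term, which is what naive sequential swaps give and why that approach is ``incorrect even at lowest order''), and then checking that the phased combination $G_\phi$ assembles these into exactly $[H,\sigma]$ with the correct $e^{\pm i\phi}$ weights. One must also confirm the $O(\Delta^2)$ remainder is uniformly bounded---it is, since $\norm{G_\phi}\le 1$ and $\norm{H}\le 1$---so that the error genuinely telescopes to $O(\delta)$ over the $O(t^2/\delta)$ iterations.
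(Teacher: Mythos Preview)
Your proposal is correct, and it is a genuinely different route from the paper's.

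The paper's proof is a two-stage construction: first it feeds one copy each of $\rho_1,\rho_2$ through a gadget (a single ancilla qubit in $a\ket{0}+b\ket{1}$, a controlled swap, a Hadamard, and a dephasing measurement) to produce a flagged state $\rho'=\proj{0}\x\rho_+ +\proj{1}\x\rho_-$ with $\rho_+-\rho_-=H$; then it hands $\rho'$ to \cref{lem:DifferenceOfStates}, whose controlled-direction partial swap does the simulation. Your approach collapses these two stages into one: you act directly on $\sigma\x\rho_1\x\rho_2$ with the Hermitian generator $G_\phi=\tfrac12(e^{-i\phi}C+e^{i\phi}C^\dagger)$ built from the $3$-cycle, verify the partial-trace identities $\Tr_{12}[C(\sigma\x\rho_1\x\rho_2)]=\rho_2\rho_1\sigma$, $\Tr_{12}[(\sigma\x\rho_1\x\rho_2)C]=\sigma\rho_2\rho_1$ (and their adjoints), and read off $[H,\sigma]$ at first order. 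This bypasses both the ancilla qubit and the dephasing step. What you lose is the modularity: the paper's gadget-plus-\cref{lem:DifferenceOfStates} pattern is exactly what they reuse, with the $3$-cycle promoted to a $k$-cycle, to prove \cref{thm:LieAlgebra}; your direct approach would have to be re-derived for each monomial length (which is straightforward, but less packaged). One small inaccuracy of presentation: you describe your construction as generalizing the controlled-direction swap of \cref{lem:DifferenceOfStates}, but in fact you have eliminated the control qubit entirely---$G_\phi$ is just a phased permutation, with no classical flag involved.
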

Note that choosing $\phi = 0$ we recover the anticommutator Hamiltonian $H_a/2$, and choosing $\phi = \pi/2$ we recover the commutator Hamiltonian $H_c/2$.

\begin{proof}

For simplicity we only consider the case when $\rho_1$, $\rho_2$, $\sigma\in
\D{\HS_\A}$; the general case can be straightforwardly tackled as in
\cref{apx:LMR}. Our strategy will be to produce a state of the form $\rho'=\ketbra{0}{0}\otimes\rho_++\ketbra{1}{1}\otimes\rho_-$, such that 
\begin{align}
\Tr(\rho_+ + \rho_-) &= 1, &
\rho_+ - \rho_- &= H,
\end{align}
and then apply \cref{lem:DifferenceOfStates}.
We will use the circuit in \cref{fig:SwapPic} to produce such a state. 

\begin{figure}[ht] 
\centering
\includegraphics[width=0.6\textwidth]{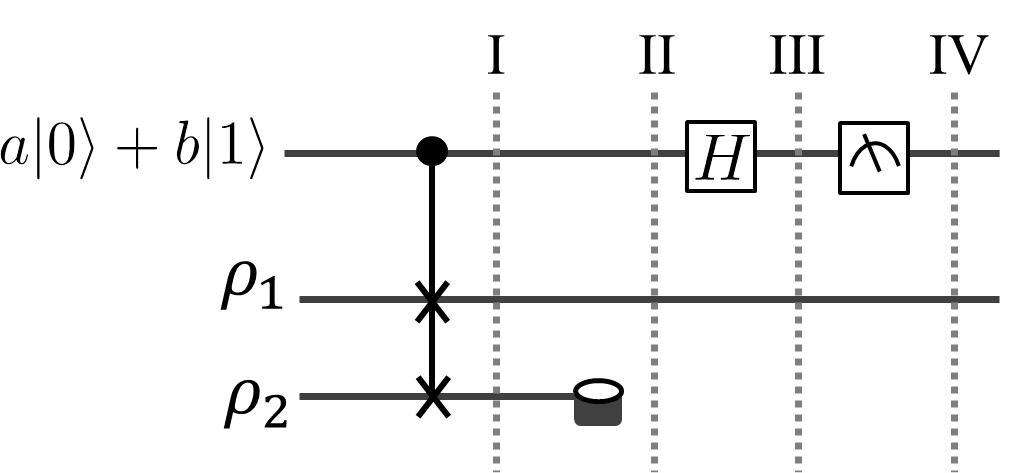}
\caption{The gadget to create a state $\rho'$. Here the controlled-cross-cross gate is a controlled-swap, and the waste bin indicates the partial trace. The $H$-gate is a single-qubit Hadamard gate (not to be confused with the Hamiltonian) and the measurement is in the $Z$-basis.}
\label{fig:SwapPic}
\end{figure}

We now analyze \cref{fig:SwapPic}.
There are a number of dotted lines cutting the figure, and we will write down the
state (as a density matrix) at each. First, at I, after applying a controlled swap,
\begin{align}
  \rho_\textrm{I}
  =&\, \abs{a}^2 \ketbra{0}{0} \x   (\rho_1 \x \rho_2)
       + a b^* \ketbra{0}{1} \x   (\rho_1 \x \rho_2) S \\
      &+ a^* b \ketbra{1}{0} \x S (\rho_1 \x \rho_2)
   + \abs{b}^2 \ketbra{1}{1} \x   (\rho_2 \x \rho_1). \nonumber
\end{align}
After discarding the last register we get
\begin{align}
  \rho_\textrm{II}
  =&\, \ketbra{0}{0} \x \abs{a}^2 \rho_1
   + \ketbra{0}{1} \x a b^*     \rho_1 \rho_2 \label{eq:rho2} \\
  &+ \ketbra{1}{0} \x a^* b     \rho_2 \rho_1
   + \ketbra{1}{1} \x \abs{b}^2 \rho_2.       \nonumber
\end{align}
Finally, a Hadamard operation to the first qubit gives
\begin{align}
\rho_\textrm{III}
=&\,\ketbra{0}{0}\otimes\frac12\left(|a|^2\rho_1+|b|^2\rho_2+ab^*\rho_1\rho_2+a^*b\rho_2\rho_1\right)\\
&+\ketbra{0}{1}\otimes\frac12\left(|a|^2\rho_1-|b|^2\rho_2-ab^*\rho_1\rho_2+a^*b\rho_2\rho_1\right) \nonumber\\
&+\ketbra{1}{0}\otimes\frac12\left(|a|^2\rho_1-|b|^2\rho_2+ab^*\rho_1\rho_2-a^*b\rho_2\rho_1\right) \nonumber\\
&+\ketbra{1}{1}\otimes\frac12\left(|a|^2\rho_1+|b|^2\rho_2-ab^*\rho_1\rho_2-a^*b\rho_2\rho_1\right).\nonumber
\end{align}

Now the measurement operator on the first qubit in \cref{fig:SwapPic} denotes dephasing in the standard basis.
Namely, we measure in the $\{\ketbra{0}{0},\ketbra{1}{1}\}$ basis, and if outcome
$\ketbra{0}{0}$ is obtained, replace the qubit state with $\ketbra{0}{0}$, and if
outcome $\ketbra{1}{1}$ is obtained, replace it with $\ketbra{1}{1}$. 
Thus at IV, after performing this measurement, we have
\begin{align}
\rho_\textrm{IV}=&\,\ketbra{0}{0}\otimes\frac12\left(|a|^2\rho_1+|b|^2\rho_2+ab^*\rho_1\rho_2+a^*b\rho_2\rho_1\right)\\
&+\proj{1}\otimes\frac12\left(|a|^2\rho_1+|b|^2\rho_2-ab^*\rho_1\rho_2-a^*b\rho_2\rho_1\right),\nonumber
\end{align}

Notice that $\rho_\textrm{IV}$ is a state of the form $\rho' = \proj{0} \otimes \rho_+ + \proj{1} \otimes \rho_- $, where $\rho_+$ and $\rho_-$ are subnormalized states with $\Tr \rho_+ + \Tr \rho_- = 1$, and
\begin{equation}
\rho_+-\rho_-=ab^*\rho_1\rho_2+a^*b\rho_2\rho_1.
\end{equation}
Choosing $a=1/\sqrt{2}$ and $b=e^{-i\phi}/\sqrt{2}$ we get
\begin{equation}
  \rho_+ - \rho_- = \frac{1}{2} \of*{e^{i\phi} \rho_1 \rho_2 + e^{-i\phi} \rho_2 \rho_1} = H.
\end{equation}
Applying \cref{lem:DifferenceOfStates}, we can simulate $H$ using the claimed resources. 
\end{proof}

It is easy to see that the simulation from \cref{thm:CommutatorSimulationC} of the anticommutator $H_a=\{\rho_1,\rho_2\}$ has optimal scaling in $t$ and $\delta$, because in the qubit case, we can always choose $\rho_2=\1/2$ so that $H_a=\rho_1$ and we can apply the lower bound from \cref{thm:LMRopt}. It is a little less trivial to show that our simulation of $H_c=i[\rho_1,\rho_2]$ is optimal, but we show now that it is. The proof proceeds along similar lines as the optimality proofs in \cref{thm:LMRopt} and \cref{thm:lin_sim_opt}. 

\begin{theorem}\label{thm:comm_opt}
To simulate $H=i[\rho_1,\rho_2]$ for time $t$ and to trace norm error $\delta$
requires $\Omega(t^2/\delta)$ copies each of the states $\rho_1$ and $\rho_2$, as long as $\delta$ and $\delta/t$ are smaller than some constants.
\end{theorem}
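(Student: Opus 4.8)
The plan is to re-use the reduction-to-state-discrimination template of \cref{thm:LMRopt} and \cref{thm:lin_sim_opt}: I would exhibit a one-parameter family of single-qubit states whose commutator encodes a $Z$-Hamiltonian of tunable strength, so that simulating $e^{-iHt}$ solves an underlying hard discrimination problem controlled by \cref{lemm: distinguishing multiple copies}. Writing $\rho_j = \tfrac12(\1 + \vec r_j\cdot\vec\sigma)$ and using $(\vec a\cdot\vec\sigma)(\vec b\cdot\vec\sigma) = (\vec a\cdot\vec b)\1 + i(\vec a\times\vec b)\cdot\vec\sigma$, one gets $i[\rho_1,\rho_2] = -\tfrac12(\vec r_1\times\vec r_2)\cdot\vec\sigma$. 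I would fix $\rho_1 = \proj{+}$, i.e.\ $\vec r_1 = (1,0,0)$, and let $\rho_2$ be the state to be discriminated: either $\rho_2 = \1/2$ (giving $H = 0$, the analogue of the trivial case) or $\rho_2 = \tfrac12(\1 + \epsilon Y)$, i.e.\ $\vec r_2 = (0,\epsilon,0)$, giving $H = -\tfrac{\epsilon}{2}Z$.

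With this choice the simulated operator $e^{-iHt}$ is either $\1$ or $e^{i\epsilon t Z/2}$. Setting $t_\epsilon = \pi/\epsilon$ turns the latter into a rotation sending $\ket{+}\mapsto\ket{-}$ up to phase, so applying the simulated operator to $\ket{+}$ and measuring in the $X$-basis distinguishes the two cases with certainty when the simulation is exact, and with probability at least $2/3$ when the simulation has trace error at most $1/3$. Crucially, the two candidates for $\rho_2$ are exactly $\rho(1/2)$ and $\rho(1/2+\epsilon/2)$ up to the fixed change of basis interchanging $Y$ and $Z$, so \cref{lemm: distinguishing multiple copies} (with $x = 1/2$, $\eta = 1/4$, its $\epsilon$ being $\epsilon/2$) shows that distinguishing them above probability $2/3$ requires $\Omega(1/\epsilon^2)$ copies of $\rho_2$. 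Hence any protocol simulating $e^{-iH t_\epsilon}$ to trace error $1/3$ must consume $\Omega(1/\epsilon^2) = \Omega(t_\epsilon^2)$ copies of $\rho_2$, which is the large-$t$, fixed-$\delta$ bound.

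I would then promote this to the full $\Omega(t^2/\delta)$ bound by the same segmenting argument as in \cref{thm:LMRopt}: if $g(t,\delta)$ denotes the number of copies of $\rho_2$ needed to simulate $e^{-iHt}$ to trace error $\delta$, then $m\,g(t,\delta)\ge g(mt,m\delta)$ for every integer $m\ge 0$, since running the simulation $m$ times yields $e^{-iHmt}$ at error $m\delta$; taking $m=\lceil 1/(6\delta)\rceil$ then gives $g(t,\delta)=\Omega(t^2/\delta)$ in the stated constant regime for $\delta$ and $\delta/t$. To obtain the matching bound on copies of $\rho_1$, I would invoke the antisymmetry $i[\rho_1,\rho_2] = -\,i[\rho_2,\rho_1]$: simulating $e^{-iHt}$ is identical to simulating $e^{-i\of{i[\rho_2,\rho_1]}(-t)}$, so relabeling swaps the two states and the bound just derived for $\rho_2$ applies verbatim to $\rho_1$ (equivalently, one repeats the construction with $\rho_2$ fixed along the $Y$-axis and $\rho_1\in\set{\1/2,\ \tfrac12(\1+\epsilon X)}$ the discriminated state).

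I expect the main work to be bookkeeping rather than conceptual. The one place to be careful is getting the sign and the factor of $\tfrac12$ in the Bloch-vector commutator exactly right, so that $t_\epsilon$ comes out as $\pi/\epsilon$ and the $\ket{+}\mapsto\ket{-}$ discrimination is genuinely perfect in the ideal case; and then checking that the parameter constraints of \cref{lemm: distinguishing multiple copies} (here $\epsilon/2 < \eta < 1/2$ with $x=1/2\in(\eta,1-\eta)$) are compatible with the constant-regime conditions on $\delta$ and $\delta/t$ entering the segmenting step. Both of these are routine once the construction above is fixed, so I do not anticipate any serious obstacle beyond assembling the pieces.
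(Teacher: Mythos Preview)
Your proposal is correct and follows essentially the same approach as the paper's proof: reduce commutator simulation to the two-state discrimination problem of \cref{lemm: distinguishing multiple copies} by fixing one state to be pure along one Bloch axis and letting the other be either maximally mixed or slightly polarized along an orthogonal axis, so that the commutator is either zero or a Pauli of strength $\Theta(\epsilon)$; then apply the segmenting argument and the antisymmetry of the commutator. The only differences are cosmetic---the paper takes the unknown state along $Z$ and the fixed state along $X$ (yielding a $Y$-Hamiltonian, $t_\epsilon=\pi/(2\epsilon)$, test on $\ket{0}$ in the $Z$-basis), whereas you take the unknown along $Y$ and the fixed along $X$ (yielding a $Z$-Hamiltonian, $t_\epsilon=\pi/\epsilon$, test on $\ket{+}$ in the $X$-basis)---and your bookkeeping of the factors of two is correct.
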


\begin{proof}
First, consider the two states $\rho_A=\rho(1/2)=\1/2$ and $\rho_B=\rho(1/2+\epsilon)$, where $\rho(x)$ is from \cref{eq:rhox} and $0<\epsilon\leq 1/2$. By using the commutator simulation, we will identify an unknown state $\rho_1$ as either $\rho_A$ or $\rho_B$ with probability 2/3, a task for which \cref{lemm: distinguishing multiple copies} gives a lower bound of $C\epsilon^{-2}$ on the sample complexity, for some constant $C$.

Let $\rho_2=\proj{+}$. Then
$\exp({[\rho_1,\rho_2]\pi/(2\epsilon)})\propto \1$ if $\rho_1=\rho_
A$ and
$\exp({[\rho_1,\rho_2]\pi/(2\epsilon)})\propto Y$ if $\rho_1=\rho_B$. A single qubit experiment
then distinguishes $\rho_1=\rho_A$ from $\rho_1=\rho_B$. Simply perform
$\exp({[\rho_1,\rho_2]\pi/(2\epsilon)})$ on $\ket{0}$ through commutator
simulation and measure in the $Z$-basis. The outcome will be $\ket{1}$ if and
only if $\rho_1=\rho_2$. The remaining part of the proof, extending to any
sufficiently large $t$ and small $\delta$, proceeds exactly as in
\cref{thm:LMRopt}. Notice that symmetry of the commutator requires that any
lower bound proved on the number of copies of $\rho_1$ also applies to the
number of copies of $\rho_2$.
\end{proof}

While the above proof uses mixed states, it is possible to prove commutator simuation is optimal for pure states as well, by using the lower bound on orthogonality testing we will provide in \cref{sect:ortho}.

%------------------------------------------------------------%
\subsection{Simulating Hermitian Polynomials in the Input States}
%------------------------------------------------------------%

It is not hard to show that any Hamiltonian written as a sum of nested commutators (with factors of $i$) and anticommutators can be expanded into a Hermitian multinomial. In fact, the converse is true too, as we sketch in \cref{sect:Jordan}. This motivates us to extend the ideas of \cref{thm:CommutatorSimulationC} to simulate any Hermitian multinomial in the states $\rho_1,\dotsc,\rho_K$, given sample access to these states.

\begin{theorem}\label{thm:LieAlgebra}
Let $\rho_1,\dotsc,\rho_K \in \D{\HS_\A}$ and $\sigma_{\A\B} \in \D{\HS_\A \x \HS_\B}$ be unknown quantum states, and let
\begin{equation}
H = \sum_{r\in R} c_rH_r, \quad H_r = \frac12\left(e^{i\phi_r}\rho_{r_1}\rho_{r_2}\dotsm\rho_{r_{|r|}} + e^{-i\phi_r}\rho_{r_{|r|}}\rho_{r_{|r|-1}}\dotsm\rho_{r_1}\right)
\end{equation}
be a Hermitian polynomial in $\rho_1,\dotsc,\rho_K$, where $R$ is a finite set of strings over the alphabet $\{1,2,\dotsc,K\}$. Using $n$ samples from the states $\{\rho_1,\dotsc,\rho_K\}$, a quantum algorithm can transform $\sigma_{\A\B}$ into $\tilde{\sigma}_{\A\B}$ such that
\begin{equation}
  \frac{1}{2}\norm[\big]{
    \of[\big]{e^{-i H t} \x \1_\B} \sigma_{\A\B} \of[\big]{e^{i H t} \x \1_\B}
    -\tilde{\sigma}_{\A\B}
  }_{1} \leq O(\delta),
\end{equation}
if $n=O(Lc^2t^2/\delta)$ where $c := \sum_{r \in R} \abs{c_r}$ and $L:=\max_{r\in R}|r|$ is the multinomial degree of $H$. Moreover, on average, the number of copies of $\rho_j$ consumed is $n_j=O\left(\kappa_jc^2t^2/\delta\right)$ where $\kappa_j=\sum_{r\in R}v_j(r)|c_r|/c$, and $v_j(r)=|\{s:r_s=j\}|$.
%if $n = O(Lc^2t^2/\delta)$ where $c := \sum_{i=1}^N \abs{c_i}$ and $L := \max_i L(M_i)$.
\end{theorem}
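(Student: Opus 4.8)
The plan is to reduce the statement to the tools already in hand: the difference-of-states simulation of \cref{lem:DifferenceOfStates}, the linear-combination sampling idea behind \cref{cor:LinearCombinationOfStates}, and the swap gadget of \cref{thm:CommutatorSimulationC}. The only genuinely new ingredient is a gadget that, for a single monomial term $H_r$, produces a two-block state whose block difference equals $H_r$; once that exists, the rest is the sign-and-sampling bookkeeping from \cref{cor:LinearCombinationOfStates}.

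First I would generalize the circuit of \cref{fig:SwapPic} from two factors to $|r|$ factors. Prepare an ancilla qubit in $a\ket{0}+b\ket{1}$ with $a=1/\sqrt2$ and $b=e^{-i\phi_r}/\sqrt2$, and introduce $|r|$ registers holding $\rho_{r_1}\x\cdots\x\rho_{r_{|r|}}$. Controlled on the ancilla, apply the cyclic permutation $V$ of these registers (the $|r|$-cycle), then trace out all but the first register. The key identity is the partial-trace relation $\Tr_{2,\dots,|r|}\bigl[(\rho_{r_1}\x\cdots\x\rho_{r_{|r|}})\,V\bigr]=\rho_{r_1}\rho_{r_2}\cdots\rho_{r_{|r|}}$, with the opposite cyclic direction producing the reversed product. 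As in \cref{thm:CommutatorSimulationC}, applying a Hadamard to the ancilla and then dephasing it in the computational basis moves the off-diagonal products into the diagonal blocks and kills the rest, yielding $\rho'_r = \proj{0}\x\rho_{r,+}+\proj{1}\x\rho_{r,-}$ with
\begin{equation}
\rho_{r,+}-\rho_{r,-} = ab^*\,\rho_{r_1}\cdots\rho_{r_{|r|}} + a^*b\,\rho_{r_{|r|}}\cdots\rho_{r_1} = H_r .
\end{equation}
Since $\rho'_r$ arises from a valid density matrix by physical operations, $\rho_{r,+}$ and $\rho_{r,-}$ are automatically subnormalized with $\Tr\rho_{r,+}+\Tr\rho_{r,-}=1$, so they satisfy the hypotheses of \cref{lem:DifferenceOfStates}. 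Each use of the gadget consumes $v_j(r)$ copies of $\rho_j$, hence $|r|\le L$ state copies in total (repeated indices in $r$ simply demand several copies of the same $\rho_j$, which affects none of the estimates).

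Next I would merge the monomials into a single difference-of-states, imitating the sampling construction in \cref{cor:LinearCombinationOfStates}. Define
\begin{equation}
\rho'' = \frac1c\sof[\bigg]{\,\proj{0}\x\Bigl(\sum_{r:c_r>0}c_r\rho_{r,+}+\sum_{r:c_r<0}|c_r|\rho_{r,-}\Bigr)+\proj{1}\x\Bigl(\sum_{r:c_r>0}c_r\rho_{r,-}+\sum_{r:c_r<0}|c_r|\rho_{r,+}\Bigr)},
\end{equation}
which is sampled by picking a term $r$ with probability $|c_r|/c$, running the gadget to produce $\rho'_r$, and applying $X$ to the ancilla whenever $c_r<0$. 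By construction $\rho''_+-\rho''_-=\frac1c\sum_r c_rH_r=H/c$ and $\Tr(\rho''_++\rho''_-)=1$. Applying \cref{lem:DifferenceOfStates} to $\rho''$ then simulates $e^{-iHt}=e^{-i(H/c)(ct)}$ to error $\delta$ using $O(c^2t^2/\delta)$ copies of $\rho''$; since each such copy costs at most $L$ state copies, the total is $O(Lc^2t^2/\delta)$, and the expected number of copies of $\rho_j$ is $\sum_r\frac{|c_r|}{c}v_j(r)\cdot O(c^2t^2/\delta)=O(\kappa_jc^2t^2/\delta)$.

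The step I expect to be the main obstacle is verifying the generalized partial-trace/cyclic-permutation identity and pinning down the two cyclic directions so that the $\ketbra{0}{1}$ and $\ketbra{1}{0}$ blocks carry exactly the ordered product and its reverse; this is where the degree-$|r|$ structure genuinely departs from the two-factor case of \cref{thm:CommutatorSimulationC}, and where one must be careful that the controlled cyclic shift acts on the correct registers before the partial trace. The remaining pieces—positivity of the diagonal blocks, the sign handling in $\rho''$, and the sample counting—follow the earlier proofs essentially verbatim.
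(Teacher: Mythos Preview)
Your proposal is correct and follows essentially the same approach as the paper: the paper likewise generalizes the swap gadget of \cref{thm:CommutatorSimulationC} to a controlled cyclic permutation on $|r|$ registers (their $S_k$), traces out all but the first, Hadamards and dephases the ancilla to obtain $\rho'^{(r)}$ with $\rho_+^{(r)}-\rho_-^{(r)}=H_r$, and then samples $r$ with probability $|c_r|/c$ before invoking \cref{lem:DifferenceOfStates}. The only cosmetic difference is that the paper absorbs the sign of $c_r$ into the phase $\phi_r$ (replacing $\phi_r$ by $\phi_r+\pi$) rather than applying an $X$ to the ancilla as you do; both tricks accomplish the same swap of the $\rho_+$ and $\rho_-$ blocks, and your flagged ``main obstacle''---the cyclic-permutation partial-trace identity---is indeed the one genuinely new computation.
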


\begin{figure}
\centering
\includegraphics[width=0.55\textwidth]{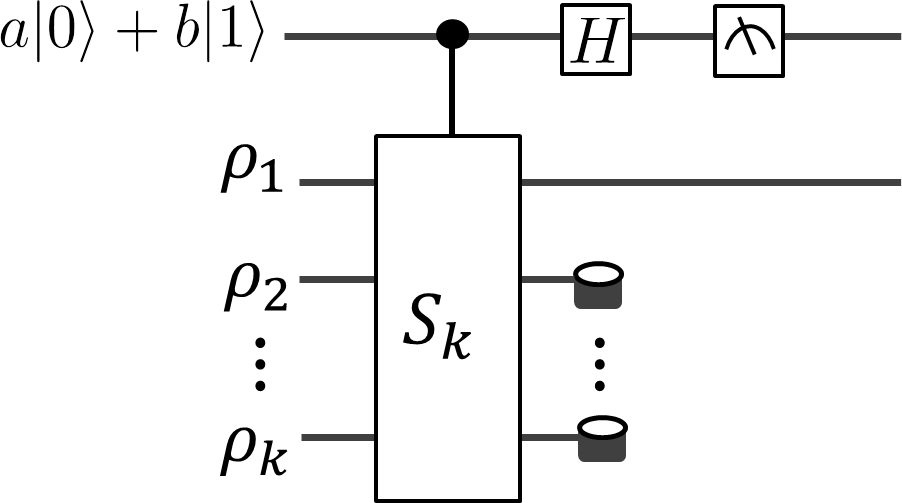}
\caption{The gadget to create $\rho'^{(r)}$. Here $S_k$ is the permutation of $k$ registers given in \cref{eq:perm}, and the waste bins indicate the partial trace. The $H$-gate is a single-qubit Hadamard gate and measurement is in the $Z$-basis. In \cite{Ekert2002} they use the same circuit, but use the measurement outcomes to perform spectrum estimation.}
\label{fig:lie_alg_gadget2}
\end{figure}

\begin{proof}
We first consider a term $H_r$ with $r=(1,2,\dotsc,k)$, for some $k$ such that $2\leq k\leq K$. (More general $r$ will follow easily from this special case.) Let $S_k$ be the cyclic permutation of $k$ copies of $\HS_\A$ that acts as follows: $S_k \ket{j_1, j_2, \dotsc, j_k} = \ket{j_k, j_1, \dotsc, j_{k-1}}$. In other words,
\begin{equation}\label{eq:perm}
S_k:=\sum_{j_1,j_2,\dotsc,j_k=1}^{\dim \HS_\A}\ketbra{j_k}{j_1}\otimes\ketbra{j_1}{j_2}\otimes\ketbra{j_2}{j_3}\otimes\dotsb\otimes\ketbra{j_{k-1}}{j_k}.
\end{equation}

Consider the circuit in \cref{fig:lie_alg_gadget2}. The output of 
\cref{fig:lie_alg_gadget2}  (we will not go through the details of the
calculation as they are very similar to the analysis in
\cref{thm:CommutatorSimulationC}) is of the form
$\rho'^{(r)}=\proj{0}\x\rho_+^{(r)}+\proj{1}\x\rho_-^{(r)}$, where
\begin{align}
\rho_+^{(r)}&:=\frac12\left(|a|^2\rho_1+|b|^2\rho_k+ab^*\rho_1\rho_2\dotsm\rho_k+a^*b\rho_k\rho_{k-1}\dotsm\rho_1\right),\nonumber\\
\rho_-^{(r)}&:=\frac12\left(|a|^2\rho_1+|b|^2\rho_k-ab^*\rho_1\rho_2\dotsm\rho_k-a^*b\rho_k\rho_{k-1}\dotsm\rho_1\right).
\end{align}
When we chose $ab^* = e^{i\phi_r}/2$, we find
\begin{equation} \label{eq: tau_diff_raw}
\rho_+^{(r)}-\rho_-^{(r)}=\frac{1}{2}e^{i\phi_r}\rho_{1}\rho_{2}\dotsm\rho_{{k}}+\frac{1}{2}e^{-i\phi_r}\rho_{{k}}\rho_{{k-1}}\dotsm\rho_{1}=H_r.
\end{equation} 
To apply this analysis to any other $r$ with $|r|=k$, one can simply supply the appropriate input states $\rho_j$ in \cref{fig:lie_alg_gadget2}.

Now without loss of generality we can assume $c_r\ge 0$ for all $r$, since the sign can be absorbed into the phase $\phi_r$. Therefore by sampling from $r\in R$ with probability $c_r/c$ and creating $\rho'^{(r)}$, we obtain the state
\begin{align}
\rho'=\frac1c\left(\sum_{r\in R}c_r\rho'^{(r)}\right)=\frac1c\left(\proj{0}\x\left(\sum_{r\in R}c_r\rho_+^{(r)}\right)+\proj{1}\x\left(\sum_{r\in R}c_r\rho_-^{(r)}\right)\right).
\end{align}
By \cref{lem:DifferenceOfStates}, we can therefore simulate the Hamiltonian
\begin{equation}
H=\sum_{r\in R}c_r(\rho_+^{(r)}-\rho_-^{(r)})=\sum_{r\in R}c_rH_r
\end{equation}
for the desired time and precision using $O(c^2t^2/\delta)$ copies of $\rho'$.
Since each copy of $\rho'$ requires a sample of a state $\rho'^{(r)}$, and each of these states requires at most $L=\max_{r\in R}|r|$
copies of states in $\{\rho_1,\dots,\rho_K\}$, we obtain the stated total
sample complexity.

To calculate the average number of uses of $\rho_j$, we note that $\rho_j$ is used $v_j(r)$ times to create the state $\rho'^{(r)}$, and to create the state $\rho'$, the state $\rho'^{(r)}$ is chosen with probability $|c_j|/c.$ Thus $\rho_j$ is used on average $\kappa_j=\sum_{r\in R}v_j(r)|c_r|/c$ times to create a single $\rho'.$ Then since $O(c^2t^2/\delta)$ copies of $\rho'$ are used in the simulation, we obtain the stated complexity.

\end{proof}

%%%%%%%%%%%%%%%%%%%%%%%%%%%%%%%%%%%%%%%%%%%%%%%%%%%%%%%%%%%%%%%%%%%%%%%%%%%
\section{Applications of Commutator Simulation} \label{sec:opt_comm_sim}
%%%%%%%%%%%%%%%%%%%%%%%%%%%%%%%%%%%%%%%%%%%%%%%%%%%%%%%%%%%%%%%%%%%%%%%%%%%

One might wonder if commutator simulation is useful for any quantum information processing task. We describe how commutator simulation can be used to coherently add two pure states, i.e. producing a state proportional to $\ket{\psi_1} + \ket{\psi_2}$. We also show that commutator simulation can be used to perform orthogonality testing. Recall from \cref{met_Grover} that this is the problem of determining whether two pure states have overlap at least $w$ or are orthogonal. 

%----------------------------------%
\subsection{Coherent state addition}
%----------------------------------%

We first give a protocol for coherent state addition: given many copies of unknown pure states $\ket{\psi_1}$ and $\ket{\psi_2}$, the task is to obtain a state of the form
\begin{equation}
  a \ket{\psi_1} + b \ket{\psi_2}
  \label{eq:ab}
\end{equation}
for some $a,b \in \R$. Note that the target state is sensitive to the global phases of the two input states---in particular, the relative phase between $\ket{\psi_1}$ and $\ket{\psi_2}$---which have no physical meaning. To make the task well-defined, we instead demand the target state to be of the form
\begin{equation}
  a \ket{\psi_1} + b \frac{\braket{\psi_2}{\psi_1}}{\abs{\braket{\psi_2}{\psi_1}}} \ket{\psi_2}
  \label{eq:ab'}
\end{equation}
for some $a,b \in \R$, which is unique (up to a global phase) even when the global phases of the two input states have not been specified. Note that we can always recover \cref{eq:ab} from \cref{eq:ab'} by fixing the global phases of the two input states appropriately (i.e. such that $\braket{\psi_2}{\psi_1} \geq 0$).

\begin{theorem}\label{thm:addition}
Let $\ket{\psi_1}$ and $\ket{\psi_2}$ be unknown pure states of the same dimension. Promised that the angle between the two states is $\Delta := \arccos \abs{\braket{\psi_1}{\psi_2}}$ and $\Delta \notin \set{0,\pi/2}$, it is possible to create the state
\begin{equation}
  \ket{\psi(\chi)} := \frac{1}{\sin \Delta}
  \of[\Big]{ \sin (\Delta - \chi) \ket{\psi_1} + e^{i\varphi} \sin \chi \ket{\psi_2} }
  \label{eq:chi}
\end{equation}
to trace distance $\delta$ using $O(\frac{\chi^2}{\delta \sin^2 2 \Delta})$ copies of $\ket{\psi_1}$ and $\ket{\psi_2}$, where $e^{i\varphi} := \braket{\psi_2}{\psi_1} / \abs{\braket{\psi_2}{\psi_1}}$ is an unimportant phase factor that can be ignored by appropriately adjusting the global phases of the two states.
\end{theorem}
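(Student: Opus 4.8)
The plan is to realize the target state $\ket{\psi(\chi)}$ as the result of evolving one of the input states under the commutator Hamiltonian $H_c = i[\rho_1,\rho_2]$, where $\rho_j = \proj{\psi_j}$, and then invoke the commutator-simulation protocol of \cref{thm:CommutatorSimulationC} to implement this evolution using sample access. The key observation is that $\rho_1$ and $\rho_2$ together span a two-dimensional subspace (since $\Delta \notin \set{0,\pi/2}$ guarantees the two pure states are neither equal nor orthogonal), and within this qubit-like subspace $i[\rho_1,\rho_2]$ acts as a fixed traceless Hermitian generator. Evolving $\ket{\psi_1}$ under $e^{-iH_c \tau}$ therefore rotates it within $\spn\set{\ket{\psi_1},\ket{\psi_2}}$, and I expect the rotation to move $\ket{\psi_1}$ toward $\ket{\psi_2}$ (up to the phase $e^{i\varphi}$), so that a suitable evolution time $\tau$ produces exactly the linear combination in \cref{eq:chi}.

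First I would set up coordinates in the two-dimensional space spanned by $\ket{\psi_1}$ and $\ket{\psi_2}$. Using the orthonormal basis $\set{\ket{\psi_1}, \ket{\psi_1^\perp}}$ with $\ket{\psi_2} = \cos\Delta \ket{\psi_1} + e^{i\varphi}\sin\Delta \ket{\psi_1^\perp}$, I would compute the restriction of $\rho_1$, $\rho_2$, and hence $H_c = i(\rho_1\rho_2 - \rho_2\rho_1)$ to this subspace as explicit $2\times 2$ matrices. The commutator of two rank-one projectors is traceless and (after the factor of $i$) Hermitian, so on this subspace $H_c$ is proportional to a single-qubit rotation generator; I would identify its proportionality constant, which I anticipate will be $\sin\Delta\cos\Delta = \tfrac12\sin 2\Delta$ (this explains the $\sin^2 2\Delta$ in the stated sample complexity). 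Exponentiating gives a rotation by angle proportional to $\tau \sin 2\Delta$, and matching this to the angle $\chi$ in \cref{eq:chi} fixes the required evolution time as $\tau = \Theta(\chi / \sin 2\Delta)$.

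Next I would carry the error budget through the reduction: to obtain $\ket{\psi(\chi)}$ to trace distance $\delta$, I apply \cref{thm:CommutatorSimulationC} to simulate $e^{-iH_c\tau}$ (here $H = H_c/2$ in the notation there, absorbing the factor of two into $\tau$ or the phase $\phi = \pi/2$) on an input $\sigma = \proj{\psi_1}$ to diamond-norm error $O(\delta)$, which costs $O(\tau^2/\delta)$ copies each of $\rho_1$ and $\rho_2$. Substituting $\tau = \Theta(\chi/\sin 2\Delta)$ yields the claimed $O(\chi^2 / (\delta \sin^2 2\Delta))$ samples. I would also confirm that one extra copy of $\ket{\psi_1}$ supplies the input state $\sigma$, and that the output is indeed pure (up to the simulation error $\delta$), since the evolution is unitary on the relevant subspace.

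The main obstacle will be the explicit subspace computation: verifying that $e^{-iH_c\tau}$ acting on $\ket{\psi_1}$ produces precisely the coefficients $\sin(\Delta-\chi)/\sin\Delta$ and $e^{i\varphi}\sin\chi/\sin\Delta$ rather than some other parametrization, and getting the sign and phase conventions to match \cref{eq:chi} exactly. In particular I must be careful that the rotation axis of $H_c$ in the Bloch-sphere picture of the two-dimensional subspace is orthogonal to both $\ket{\psi_1}$ and $\ket{\psi_2}$ in the right way, so that the evolution sweeps $\ket{\psi_1}$ through the great circle containing $\ket{\psi_2}$; the appearance of $\Delta - \chi$ strongly suggests that the natural angular coordinate is measured from $\ket{\psi_1}$, and that the evolution rotates within the plane at a rate set by $\sin 2\Delta$. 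Once the $2\times 2$ trigonometry is pinned down, the sample-complexity bookkeeping is routine given \cref{thm:CommutatorSimulationC}.
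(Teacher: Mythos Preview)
Your proposal is correct and follows essentially the same approach as the paper: compute $i[\proj{\psi_1},\proj{\psi_2}]$ in the orthonormal basis $\set{\ket{\psi_1},\ket{\psi_1^\perp}}$, identify it as $\cos\Delta\sin\Delta$ times a Pauli-$Y$ on that subspace, exponentiate to get the rotation sending $\ket{\psi_1}$ to $\ket{\psi(\chi)}$ at time $t=\chi/(\cos\Delta\sin\Delta)$, and then invoke \cref{thm:CommutatorSimulationC} for the sample count. The only remaining work you flag---pinning down the phase conventions so the coefficients match \cref{eq:chi} exactly---is indeed just bookkeeping (note your decomposition of $\ket{\psi_2}$ has a slight phase inconsistency with the definition of $e^{i\varphi}$, but this is absorbed into the choice of $\ket{\psi_1^\perp}$).
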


 \begin{remark}
 Our proof is based on commutator simulation and effectively implements a rotation in the two-dimensional subspace spanned by $\ket{\psi_1}$ and $\ket{\psi_2}$. Indeed, note from \cref{eq:chi} that $\ket{\psi(0)} = \ket{\psi_1}$ and $\ket{\psi(\Delta)} = e^{i \varphi} \ket{\psi_2}$, while intermediate values of $\chi$ produce states that interpolate between these two. As a consequence, the target state in \cref{eq:ab'} has real coefficients $a$ and $b$. One can also achieve complex coefficients using a more sophisticated Hamiltonian that includes terms proportional to $\proj{\psi_1}$ and $\proj{\psi_2}$, but we do not consider this case here for the sake of simplicity. 
 \end{remark}

 \begin{remark}
 If one does not care about the relative phase $e^{i \varphi}$, one can always exchange the two states and replace $\chi$ by $\Delta-\chi$, which would improve the complexity by a constant factor when $\chi > \Delta/2$.
 \end{remark}
 
 \begin{remark}
 Our protocol requires a very large number of samples when the states $\ket{\psi_1}$ and $\ket{\psi_2}$ have either very small or very large overlap (i.e.\ in cases when $\sin^2 2 \Delta$ is very small). This is because we use commutator simulation to effectively implement a rotation in the two-dimensional subspace spanned by $\ket{\psi_1}$ and $\ket{\psi_2}$, and in the special cases when $\ket{\psi_1} \perp \ket{\psi_2}$ or $\ket{\psi_1} = e^{i \varphi} \ket{\psi_2}$ the commutator vanishes and hence our protocol fails (in the second case the task is trivial though).
 \end{remark}

\begin{proof}
Using $\braket{\psi_2}{\psi_1} = e^{i \varphi} \cos \Delta$, we can write
\begin{equation}
  \ket{\psi_2} = e^{-i\varphi} \of[\big]{ \cos \Delta \ket{\psi_1} - \sin \Delta \ket{\psi_1^\perp}}
  \label{eq:psi2}
\end{equation}
for some unit vector $\ket{\psi_1^\perp}$ such that $\braket{\psi_1}{\psi_1^\perp} = 0$.

Then the commutator of two non-orthogonal pure states acts as a Hamiltonian that induces a rotation in the two-dimensional subspace spanned by the states. In particular,
\begin{align}
  i \sof[\big]{\proj{\psi_1},\proj{\psi_2}}
  & = i \of[\big]{\braket{\psi_1}{\psi_2} \ketbra{\psi_1}{\psi_2}
                 -\braket{\psi_2}{\psi_1} \ketbra{\psi_2}{\psi_1}} \\
  & = \cos \Delta \, i
      \of[\big]{e^{-i\varphi} \ketbra{\psi_1}{\psi_2}
               -e^{ i\varphi} \ketbra{\psi_2}{\psi_1}} \\
  & = \cos \Delta \sin \Delta \, i
      \of[\big]{\ketbra{\psi_1^\perp}{\psi_1}
               -\ketbra{\psi_1}{\psi_1^\perp}} \\
  & =: \cos \Delta \sin \Delta \, Y_{\ket{\psi_1},\ket{\psi_1^\perp}},
  \label{eq:Y}
\end{align}
where $Y_{\ket{\psi},\ket{\psi^\perp}}$ acts as the Pauli $Y$ matrix in the two-dimensional subspace spanned by orthonormal states $\ket{\psi}$ and $\ket{\psi^\perp}$. If $Y$ is the $2 \times 2$ Pauli matrix then $e^{i \chi Y} = \cos \chi \, \1 + i \sin \chi \, Y$ for any $\chi \in \R$ so
\begin{align}
  \exp\of[\big]{i \chi Y_{\ket{\psi_1},\ket{\psi_1^\perp}}} \ket{\psi_1}
  &= \cos \chi \ket{\psi_1} + i \sin \chi Y_{\ket{\psi_1},\ket{\psi_1^\perp}} \ket{\psi_1} \\
  &= \cos \chi \ket{\psi_1} - \sin \chi \ket{\psi_1^\perp} \\
  &= \frac{1}{\sin \Delta} \of[\big]{
       \sin (\Delta-\chi) \ket{\psi_1}
     + e^{i\varphi} \sin \chi \ket{\psi_2}
     },
\end{align}
which is the desired state $\ket{\psi(\chi)}$ (we substituted $\ket{\psi_1^\perp} = (\cos \Delta \ket{\psi_1} - e^{i\varphi} \ket{\psi_2}) / \sin \Delta$ from \cref{eq:psi2} to get the last line). To prepare this state, we can apply $\exp\left(i\chi Y_{\ket{\psi_1},\ket{\psi_1^\perp}}\right)$ to $\ket{\psi_1}$ using the commutator simulation algorithm: we evolve with $H := i \sof[\big]{\proj{\psi_2},\proj{\psi_1}} = - \cos \Delta \sin \Delta \, Y_{\ket{\psi_1},\ket{\psi_1^\perp}}$ for time $t = \chi/(\cos \Delta \sin \Delta)$. According to \cref{thm:CommutatorSimulationC}, this requires $O(t^2/\delta)$ copies of each state, so the total sample complexity is $O(\frac{\chi^2}{\delta \sin^2 2\Delta})$.
\end{proof}

Interestingly, by choosing $\chi = \Delta/2$ in \cref{eq:chi} it is possible to \emph{coherently add} two states, i.e.\ create a state proportional to $\ket{\psi_1}+\ket{\psi_2}$ (we are ignoring the relative phase between the two states). However, to determine $\Delta$ one needs to estimate the inner product between the two states, which can be done by running phase estimation on the commutator.

%--------------------------------%
\subsection{Orthogonality Testing}\label{sect:ortho}
%--------------------------------%

We now give a method for testing the orthogonality of two unknown pure states.

\begin{theorem}\label{thm:ortho}
Let $\ket{\psi_1}$ and $\ket{\psi_2}$ be unknown pure states of the same dimension. Promised that either $\abs{\braket{\psi_1}{\psi_2}} = 0$ or $\abs{\braket{\psi_1}{\psi_2}} \geq w$, deciding which with probability $1-\epsilon$ uses $\Theta(\log(1/\epsilon)/w)$ copies of $\ket{\psi_1}$ and $\ket{\psi_2}$.
\end{theorem}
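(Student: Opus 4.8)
The plan is to establish matching upper and lower bounds. The engine for both is the identity already extracted in the proof of \cref{thm:addition} (see \cref{eq:Y}): the commutator of the two projectors,
\begin{equation}
  i\sof{\proj{\psi_1},\proj{\psi_2}} = \cos\Delta\,\sin\Delta\; Y_{\ket{\psi_1},\ket{\psi_1^\perp}}, \qquad \Delta := \arccos\abs{\braket{\psi_1}{\psi_2}},
\end{equation}
is a rotation generator supported on $\spn\set{\ket{\psi_1},\ket{\psi_2}}$ whose rate $\cos\Delta\sin\Delta = \sqrt{\lambda(1-\lambda)}$, with $\lambda := \abs{\braket{\psi_1}{\psi_2}}^2$, vanishes exactly when the two states are orthogonal. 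Writing the promise in the parametrization of \cref{met_Grover} as $\lambda=0$ or $\lambda\ge w$, the task becomes distinguishing a zero from a nonzero (and bounded-below) rotation rate of a Hamiltonian we can simulate via \cref{thm:CommutatorSimulationC}.

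For the upper bound I would simulate $e^{-iHt}$ with $H=i\sof{\proj{\psi_1},\proj{\psi_2}}$ on an input $\ket{\psi_1}$, obtaining $\cos\theta\,\ket{\psi_1}-\sin\theta\,\ket{\psi_1^\perp}$ with $\theta=\sqrt{\lambda(1-\lambda)}\,t$, and then test whether the state has moved off $\ket{\psi_1}$ with a swap test against a fresh copy of $\ket{\psi_1}$. In the orthogonal case the ideal state never leaves $\ket{\psi_1}$, so the antisymmetric outcome has probability $0$; when $\lambda\ge w$ there is a time $t=\Theta(1/\sqrt{w})$ at which $\theta=\Theta(1)$ and the antisymmetric outcome has constant probability. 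Crucially, I would run \cref{thm:CommutatorSimulationC} only to \emph{constant} trace error $\delta$ and separate the two cases by a counting threshold over $O(\log(1/\epsilon))$ repetitions: the orthogonal case then produces at most a $\delta$-fraction of antisymmetric outcomes while the non-orthogonal case produces a constant fraction, so a Chernoff bound gives failure probability $\epsilon$ without ever shrinking $\delta$. Each run costs $O(t^2/\delta)=O(1/w)$ copies of \emph{each} state (the commutator simulation consumes both symmetrically), for a total of $O(\log(1/\epsilon)/w)$.

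Two gaps remain in that sketch. Since $\lambda$ is unknown the single time $t\sim 1/\sqrt{w}$ may over-rotate, so---exactly as unknown $\lambda$ is handled in \cref{LMR_opt}---I would sweep an exponentially increasing schedule $t_k\sim\beta^k$ up to $\Theta(1/\sqrt{w})$ and declare non-orthogonality if \emph{any} scale triggers; the sample cost $\sum_k O(t_k^2)$ is a geometric series dominated by the top scale and stays $O(1/w)$, while the orthogonal case triggers at no scale. The commutator rate also degenerates for nearly parallel states ($\lambda$ close to $1$, beyond the reach of the schedule), but that regime is separated from $\lambda=0$ by $O(\log(1/\epsilon))$ plain swap tests, whose antisymmetric probabilities $(1-\lambda)/2$ differ by $\Omega(1)$. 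I expect this error-budget bookkeeping---simultaneously controlling both states' counts, the accumulated simulation error, and the scan---to be the main technical obstacle; the observation that a \emph{constant} $\delta$ suffices (because detection is by counting, not by a union bound that would force $\delta\sim\epsilon$) is what keeps the dependence on $\epsilon$ logarithmic rather than polynomial.

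For the lower bound I would reduce from Helstrom pure-state discrimination, mirroring \cref{state-based bound}. Fixing $\ket{\psi_2}$ and supplying it for free, let the adversary pick $\ket{\psi_1}\in\set{\ket{a},\ket{b}}$ with $\ket{a}\perp\ket{\psi_2}$ and $\ket{b}=\sqrt{w}\,\ket{\psi_2}+\sqrt{1-w}\,\ket{a}$, so that $\ket{a}$ realizes the orthogonal case and $\ket{b}$ the case $\lambda=w$. Any orthogonality tester distinguishes these two hypotheses, hence discriminates $\ket{a}$ from $\ket{b}$, which obey $\abs{\braket{a}{b}}^2=1-w$; \cref{helstrom_lower} then forces $\Omega(\log(1/\epsilon)/w)$ copies of $\ket{\psi_1}$, and by the symmetry of the problem the same bound applies to $\ket{\psi_2}$. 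Together with the upper bound this yields the claimed $\Theta(\log(1/\epsilon)/w)$.
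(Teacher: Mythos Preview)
Your proposal is correct and reaches the claimed $\Theta(\log(1/\epsilon)/w)$, but both halves proceed differently from the paper.

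For the upper bound, the paper replaces your two patches (the exponential time scan for unknown $\lambda$, and the swap-test fallback for $\lambda$ near $1$) with a single device: it appends fixed ancillas, setting $\ket{\widetilde\psi_1}=\ket{\psi_1}\ket{0}$ and $\ket{\widetilde\psi_2}=\ket{\psi_2}\ket{+}$, so the overlap becomes $\lambda/2\le 1/2$ and the commutator rate $\sqrt{(\lambda/2)(1-\lambda/2)}$ never hits the parallel degeneracy. It then runs \emph{phase estimation} on the commutator unitary to precision $\Theta(\sqrt{w})$, which automatically copes with unknown $\lambda$ without any scan. Your rotate-then-swap-test scheme also works, but be aware that the union bound over the $O(\log(1/w))$ scales in your schedule introduces an additive $\log\log(1/w)$ inside the logarithm, so as written you get $O\bigl((\log(1/\epsilon)+\log\log(1/w))/w\bigr)$ rather than the clean bound; the paper's ancilla-plus-phase-estimation route sidesteps this bookkeeping entirely.

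For the lower bound, your direct Helstrom reduction (fix $\ket{\psi_2}$, supply it for free, and force the tester to discriminate $\ket{a}\perp\ket{\psi_2}$ from $\ket{b}$ with $\abs{\braket{a}{b}}^2=1-w$) is more economical than the paper's, which instead reduces from sample-based Grover search (itself lower-bounded via Helstrom in \cref{state-based bound}) by first tomographically learning a basis of the target subspace and then orthogonality-testing $\ket{s}$ against each basis vector. Your argument is shorter and equally rigorous; the paper's detour has the virtue of reusing an already-proved lemma.
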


\begin{proof}
For the upper bound, let $\Delta := \arccos \abs{\braket{\psi_1}{\psi_2}}$.
From \cref{eq:Y}, we see that if the states are non-orthogonal, commutator simulation generates a rotation; whereas if the states are orthogonal, i.e.\ $\Delta = \pi/2$, commutator simulation performs only the identity. However, we have to be careful, because identical states correspond to $\Delta = 0$ which also results in a trivial commutator. Therefore we consider the modified states $\ket{\widetilde\psi_1}:=\ket{\psi_1}\ket{0}$ and $\ket{\widetilde\psi_2}:=\ket{\psi_2}\ket{+}$, which can
never have overlap greater than $1/2$. (We do this by appending the states $\ket{0}$ and $\ket{+}$ to the sampled states.) If we let $\lambda := \abs{\braket{\psi_1}{\psi_2}}^2 = \cos^2 \Delta$, then $\abs{\braket{\widetilde\psi_1}{\widetilde\psi_2}}^2 = \lambda/2$ and the evolution with the commutator of $\ket{\widetilde\psi_1}$ and $\ket{\widetilde\psi_2}$ for time $t=1$ generates the unitary
\begin{equation}
  U
  = \exp \of[\Big]{\sof[\big]{\proj{\widetilde\psi_1},\proj{\widetilde\psi_2}}}
  = \exp \of*{-i \sqrt{\frac{\lambda}{2}\of*{1-\frac{\lambda}{2}}}
              Y_{\ket{\widetilde\psi_1},\ket{\widetilde\psi_1^\perp}}}.
\end{equation}
Phase estimation on $U$ to precision $\Omega(1/\sqrt{w})$ suffices to distinguish between $\lambda = 0$ and $\lambda \ge w$, and thus solves orthogonality testing. Similarly to \cref{sec:phase_est}, phase estimation with constant probability of success requires $O(1/\sqrt{w})$ applications of $U$, each implemented to error $O(\sqrt{w})$; this uses $O(1/(\sqrt{w})^2) = O(1/w)$ samples. To succeed with probability $1-\epsilon$ we can repeat $O(\log(1/\epsilon))$ times, giving a total sample complexity of $O(\log(1/\epsilon)/w)$.\footnote{Notice that simple repeating of the SWAP test \cite{BCWD01} on $\ket{\psi_1}$ and $\ket{\psi_2}$ produces a \emph{slower} orthogonality testing algorithm. Essentially, we end up having to distinguish Bernoulli random variables with $p=1/2$ and $p\ge1/2+w/2$. This takes $\Omega(1/w^2)$ samples (see \cref{lemm: distinguishing multiple copies} and \cite{AJ06}).}

For the lower bound, first notice that sample-based Grover search 
(see \cref{sec:pure_optim}) reduces to orthogonality testing in the following way. Since in
sample-based Grover search we do not count uses of $U$, we may therefore perform
tomography on $U$ to learn, to arbitrary accuracy, a complete orthogonal basis
$\{|t_1\rangle,|t_2\rangle,\dots,|t_k\rangle\}$ for the target space $T$. Now
perform orthogonality testing between $|s\rangle$ and each of the $|t_j\rangle$
to determine whether $|s\rangle$ has overlap $\lambda_j=|\langle
s|t_j\rangle|^2$ at least $w/k$. If the total probability mass of $|s\rangle$
inside the target space is at least $w$, then this must be true for some
$|t_j\rangle$. Treating factors of $k$ as constant, this implies
$\Omega(\log(1/\epsilon)/w)$ copies of $|s\rangle$ (and $|t_j\rangle$ by
symmetry) must be required for orthogonality testing with success probability
$1-\epsilon$, so as not to break the sample-based Grover search lower bound of \cref{state-based bound}.
\end{proof}

%%%%%%%%%%%%%%%%%%%%%%%%%%%%%
\section{Universality of LMR} \label{sec:universal}
%%%%%%%%%%%%%%%%%%%%%%%%%%%%%

In many solid state implementations of quantum computers such as quantum dots
\cite{LD98}, donor-pairs \cite{K98}, and electron spins \cite{VYW00}, the
Heisenberg exchange is the natural coupling interaction between qubits. 
Up to an overall scaling, the Heisenberg interaction
is the same as the swap interaction used in the LMR protocol. 

The Heisenberg interaction between
qubits $i$ and $j$ is given by
\begin{align}
H_{ij}=X^i\otimes X^j+Y^i\otimes Y^j+Z^i\otimes Z^j,
\end{align}
where $X^i$, $Y^i$, and $Z^i$ are the Pauli matrices acting on qubit $i.$
In these solid state systems, the Heisenberg interaction can be turned
on and off for different pairs of qubits for any desired length
of time.

The operations induced by the Heisenberg interactions in these systems are fast and
reliable.
While it is beneficial to create computing
models that take advantage of this Heisenberg exchange interaction,
 the Heisenberg interaction is not universal for spin-$1/2$ systems \cite{BBC95}. 
Several schemes have overcome this limitation by using encoded logical qubits and 
decoherence-free-subsystems \cite{Universality,Levy}.

In this section, we show how to use the LMR protocol to design a universal
model for quantum computation that does not use encoded qubits, but which
requires only the Heisenberg interaction, as well as the ability to prepare
the states $\ket{0}$ and $\ket{+}$ on a single qubit. Our scheme thus requires
$n+1$ physical qubits to perform computations on $n$ qubits, in contrast
to encoded schemes, of which the simplest require 2 or 3 times the number
of physical qubits \cite{Universality,Levy}. Furthermore, there has been 
much research in the field of quantum dots on how to quickly and reliably prepare
a fixed qubit state, e.g. \cite{CV10,FPMU03,HVV04,RSL00}. These schemes
could be applied to produce the single qubit states needed for our protocol.

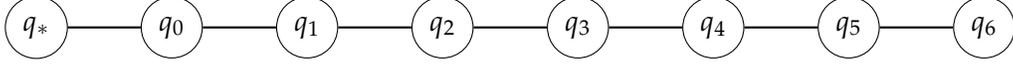
\begin{figure}[ht]
\centering
\begin{tikzpicture}[scale=.9]
\tikzstyle{vertex} = [circle,draw,fill=white,minimum size=.5em]
\tikzstyle{operator} = [rectangle,rounded corners,draw,%fill=cyan,
fill = black!20!white,
minimum size=1.5em]
\tikzstyle{turn} = [rectangle,draw,fill=white,minimum size=1.5em]
\node[vertex] (v1000) at (0,0) {$q_*$};
\node[vertex] (v1001) at (2,0) {$q_0$};
\node[vertex] (v1010) at (4,0) {$q_1$};
\node[vertex] (v1011) at (6,0) {$q_2$};
\node[vertex] (v1100) at (8,0) {$q_3$};
\node[vertex] (v1101) at (10,0) {$q_4$};
\node[vertex] (v1110) at (12,0) {$q_5$};
\node[vertex] (v1111) at (14,0) {$q_6$};
\path (v1000) edge[thick] (v1001);
\path (v1001) edge[thick] (v1010);
\path (v1010) edge[thick] (v1011);
\path (v1011) edge[thick] (v1100);
\path (v1100) edge[thick] (v1101);
\path (v1101) edge[thick] (v1110);
\path (v1110) edge[thick] (v1111);
\end{tikzpicture}
\caption{Connectivity graph for qubits in our model. Each circle
represents a qubit. Qubits connected by a solid line can have
the Heisenberg interaction applied between them. The qubit
$q_*$ can be prepared in the state $\ket{0}$ or $\ket{+}.$}
\label{fig:connectgraph}
\end{figure}

We consider a connectivity graph of the qubits as in \cref{fig:connectgraph} (different
connectivity graphs lead to different scalings depending on which
costs you would like to optimize). We assume exchange
interactions can be applied between connected qubits in the form
of unitaries $\exp(-i t H_{ij})$ for arbitrary $t$. The qubit
 $q_*$ is where the states $\ket{0}$ and $\ket{+}$ are prepared.

Recall that arbitrary single qubit gates combined with any entangling
two-qubit gate is sufficient for universal quantum computation \cite{BDD02}.
Since we do not have encoded qubits, the exchange interaction itself
immediately gives us an entangling gate.  Now for universal quantum computation
we need to show how to perform arbitrary single qubit gates. 

Let $X_\phi$ denote the unitary operation
\begin{align}
\cos(\phi/2)\1+i\sin(\phi/2)X,
\end{align}
and let $Z_\theta$ denote the unitary operation
\begin{align}
\cos(\phi/2)\1+i\sin(\phi/2)Z.
\end{align}
Then any single qubit rotation can be written as $X_\phi Z_\theta X_\xi$
for angles $\phi$, $\theta$, and $\xi.$ Therefore, it is 
sufficient to show how to perform $X$ and $Z$ rotations.

If qubit $i$ needs to have a single qubit gate performed on it, using
the Heisenberg interaction, we use swap gates to move that qubit to position 0. We now
show how to perform a $Z_{\phi}$ and $X_{\theta}$ on the qubit in
position 0. Using LMR, given $n$ copies of the state $\ket{0}$ input
at qubit $q_*$, using only partial swap operations on qubits $q_0$ and
$q_*$, (i.e.\ applying the Heisenberg interaction between qubits $q_0$ and
$q_*$)  we can apply the unitary
\begin{align}
\exp(-i\phi\proj{0})=Z_{\phi}
\end{align}
to accuracy $O\left(n^{-1}\right)$. Likewise, using the LMR protocol, given $n$ copies of the state $\ket{+}$, using only partial swap interactions between qubits $q_0$ and $q_*$, we can apply the unitary
\begin{align}
\exp(-i\theta\proj{+})=X_{\theta}
\end{align}
to accuracy $O\left(n^{-1}\right)$.

To apply an arbitrary single qubit rotation to accuracy
$\epsilon$, we need $O(\epsilon^{-1})$ resource states $\ket{0}$
and $\ket{+}$ (this construction is reminiscent of ideas in~\cite{MM08}).
Suppose that over the course of an algorithm, one must
apply $M$ single qubit gates and $M'$ CNOT gates. We note that
to apply a CNOT gate requires a constant number of single qubits gates
as well as a constant number of partial swap gates \cite{BDD02}.
Then to bound the error
over the course of the algorithm, we require accuracy of $O((M+M')^{-1})$
for each single qubit gate. Therefore,
we require $O((M+M')^2)$ resource states $\ket{0}$
and $\ket{+}$ in total. 
Additionally, using the connectivity graph of \cref{fig:connectgraph}, to move qubits into proximity with one another to perform any single or two
qubit gate requires $O(N)$ swap operations operations, where $N$ is the number
of qubits. This results in a total number of operations that scales
as $O(N(M+M')^2)$.

We note that the states $\ket{0}$ and $\ket{+}$ need not be prepared perfectly
for our protocol to work. For example, if we have slightly depolarized versions of these
states, we would simply need to increase the number of rounds in the LMR protocol
by a constant factor. In fact, two arbitrary states (other than $\ket{0}$ and $\ket{+}$) could be used, as long as they 
are not diagonal in the same basis, and as long as the states themselves are 
well characterized.

Our model produces a polynomial (in particular squared) blow-up in the number
of operations, which still allows for universal quantum computation. However,
with such a model, it would be impossible to obtain a speed-up for problems
such as Grover's search. We hope it is a useful model for systems where the
Heisenberg exchange is a natural operation. It may even be useful in non-solid
state systems such as cold, trapped atoms, where it was shown that partial
swaps could be implemented using Rydberg interactions or through coupling to a
cavity \cite{PZS+16}.

%%%%%%%%%%%%%%%%%%%%%%%%%%%
\section{Outlook}\label{sec:outlook}
%%%%%%%%%%%%%%%%%%%%%%%%%%%

We have shown that the LMR protocol is optimal for the problem of simulating unknown Hamiltonians encoded as quantum states. Moreover, the protocol and its generalizations also turn out to be optimal for a variety of other tasks, such as discriminating between pure states and Hamiltonian evolution under the commutators of unknown states. We hope that this study will motivate the discovery of other possible applications of this versatile protocol.

We have not shown the optimality of our protocol for simulating the evolution by the multinomials in \cref{eq:mapf}. It would be interesting to investigate whether it is optimal, or whether better algorithms can be found. 

Another interesting aspect is the role of ancilla qubits in our protocol. While the original LMR protocol for Hamiltonian simulation is based on partial swaps and hence does not require ancilla qubits, the use of ancillas seems to be essential in our more general simulation protocol (see \cref{fig:lie_alg_gadget2}). We wonder whether the use of ancillas is necessary in our protocol, or for example, whether it can be implemented using the continuous permutations introduced in \cite{Ozols15}. These continuous permutations generalize the partial swap operation and do not require ancillas. 

Another possible direction is to investigate distributed versions of our protocols in the context of multiparty communication. \cite{HL11} consider a protocol for simulating distributed unitaries over multiple remote parties using shared entanglement and a limited amount of quantum communication, and the techniques they use are reminiscent to those of the LMR protocol. It would be interesting to investigate connections of \cite{HL11} with the protocols in our work.

Finally, the LMR protocol can be seen as allowing the encoding of the operation $e^{-i \rho t}$ into multiple copies of a quantum state $\rho$. As discussed in \cref{sec:Prelim}, having access to $O(t^2/\delta)$ copies of $\rho$ allows a user to perform the operation $e^{-i \rho t}$, but may be insufficient for the user to determine what $\rho$ is through tomography. It is an intriguing question whether other quantum operations could be encoded into states in this way, so that a user could perform the quantum operation but learn little else about what operation is being performed. This could be seen as a form of quantum copy-protection \cite{Aaronson09}. See \cite{ML16} for some progress in this direction, and \cite{AF16} for negative results when the encoding is required to be a circuit and not a state.

%%%%%%%%%%%%%%%%%%%%%%%%%%%
\section*{Acknowledgments}
%%%%%%%%%%%%%%%%%%%%%%%%%%%

We thank Andrew Childs for suggesting the proof idea of \cref{LMR_opt}, and Aram Harrow, Stephen Jordan, Seth Lloyd, Iman Marvian, Ronald de Wolf, and Henry Yuen for useful discussions. SK and CYL are funded by the Department of Defense. GHL is funded by the NSF CCR and the ARO quantum computing projects. MO acknowledges Leverhulme Trust Early Career Fellowship (ECF-2015-256) and European Union project QALGO (Grant Agreement No.~600700) for financial support. TJY thanks the DoD, Air Force Office of Scientific Research, National Defense Science and Engineering Graduate (NDSEG) Fellowship, 32 CFR 168a. Part of this work was done while MO was visiting the University of Maryland and MIT, so he would like to thank both institutions for their hospitality.

% Bibliography

\bibliographystyle{alphaurl}
\bibliography{References}

\newcommand{\etalchar}[1]{$^{#1}$}
\begin{thebibliography}{BCWdW01}

\bibitem[Aar09]{Aaronson09}
Scott Aaronson.
\newblock Quantum copy-protection and quantum money.
\newblock In {\em Computational Complexity (CCC), 2009 IEEE 24th Annual
  Conference on}, pages 229--242. IEEE, Jul 2009.
\newblock \href {http://arxiv.org/abs/1110.5353} {\path{arXiv:1110.5353}},
  \href {http://dx.doi.org/10.1109/CCC.2009.42}
  {\path{doi:10.1109/CCC.2009.42}}.

\bibitem[ADO16]{ADO15}
Koenraad Audenaert, Nilanjana Datta, and Maris Ozols.
\newblock Entropy power inequalities for qudits.
\newblock {\em Journal of Mathematical Physics}, 57(5):052202, 2016.
\newblock \href {http://arxiv.org/abs/1503.04213} {\path{arXiv:1503.04213}},
  \href {http://dx.doi.org/10.1063/1.4950785} {\path{doi:10.1063/1.4950785}}.

\bibitem[AF16]{AF16}
Gorjan Alagic and Bill Fefferman.
\newblock On quantum obfuscation.
\newblock 2016.
\newblock \href {http://arxiv.org/abs/1602.01771} {\path{arXiv:1602.01771}}.

\bibitem[AJ06]{AJ06}
Jos{\'e}~A. Adell and Pedro Jodr{\'a}.
\newblock Exact {K}olmogorov and total variation distances between some
  familiar discrete distributions.
\newblock {\em Journal of Inequalities and Applications}, 2006(1):64307, 2006.
\newblock \href {http://dx.doi.org/10.1155/JIA/2006/64307}
  {\path{doi:10.1155/JIA/2006/64307}}.

\bibitem[BBC{\etalchar{+}}95]{BBC95}
Adriano Barenco, Charles~H. Bennett, Richard Cleve, David~P. DiVincenzo, Norman
  Margolus, Peter Shor, Tycho Sleator, John~A. Smolin, and Harald Weinfurter.
\newblock Elementary gates for quantum computation.
\newblock {\em Phys. Rev. A}, 52(5):3457--3467, Nov 1995.
\newblock \href {http://arxiv.org/abs/quant-ph/9503016}
  {\path{arXiv:quant-ph/9503016}}, \href
  {http://dx.doi.org/10.1103/PhysRevA.52.3457}
  {\path{doi:10.1103/PhysRevA.52.3457}}.

\bibitem[BBHT98]{BBHT98}
Michel Boyer, Gilles Brassard, Peter H{\o}yer, and Alain Tapp.
\newblock Tight bounds on quantum searching.
\newblock {\em Fortschritte der Physik}, 46(4-5):493--505, 1998.
\newblock \href {http://arxiv.org/abs/quant-ph/9605034}
  {\path{arXiv:quant-ph/9605034}}, \href
  {http://dx.doi.org/10.1002/(SICI)1521-3978(199806)46:4/5<493::AID-PROP493>3.0.CO;2-P}
  {\path{doi:10.1002/(SICI)1521-3978(199806)46:4/5<493::AID-PROP493>3.0.CO;2-P}}.

\bibitem[BCdWZ99]{BCDZ99}
Harry Buhrman, Richard Cleve, Ronald de~Wolf, and Christof Zalka.
\newblock Bounds for small-error and zero-error quantum algorithms.
\newblock In {\em Foundations of Computer Science, 1999. 40th Annual Symposium
  on}, pages 358--368. IEEE, 1999.
\newblock \href {http://arxiv.org/abs/cs/9904019} {\path{arXiv:cs/9904019}},
  \href {http://dx.doi.org/10.1109/SFFCS.1999.814607}
  {\path{doi:10.1109/SFFCS.1999.814607}}.

\bibitem[BCK15]{berry2015hamiltonian}
Dominic~W. Berry, Andrew~M. Childs, and Robin Kothari.
\newblock Hamiltonian simulation with nearly optimal dependence on all
  parameters.
\newblock In {\em Foundations of Computer Science (FOCS), 2015 IEEE 56th Annual
  Symposium on}, pages 792--809. IEEE, Oct 2015.
\newblock \href {http://arxiv.org/abs/1501.01715} {\path{arXiv:1501.01715}},
  \href {http://dx.doi.org/10.1109/FOCS.2015.54}
  {\path{doi:10.1109/FOCS.2015.54}}.

\bibitem[BCWdW01]{BCWD01}
Harry Buhrman, Richard Cleve, John Watrous, and Ronald de~Wolf.
\newblock Quantum fingerprinting.
\newblock {\em Phys. Rev. Lett.}, 87(16):167902, Sep 2001.
\newblock \href {http://arxiv.org/abs/quant-ph/0102001}
  {\path{arXiv:quant-ph/0102001}}, \href
  {http://dx.doi.org/10.1103/PhysRevLett.87.167902}
  {\path{doi:10.1103/PhysRevLett.87.167902}}.

\bibitem[BDD{\etalchar{+}}02]{BDD02}
Michael~J. Bremner, Christopher~M. Dawson, Jennifer~L. Dodd, Alexei Gilchrist,
  Aram~W. Harrow, Duncan Mortimer, Michael~A. Nielsen, and Tobias~J. Osborne.
\newblock Practical scheme for quantum computation with any two-qubit
  entangling gate.
\newblock {\em Phys. Rev. Lett.}, 89(24):247902, Nov 2002.
\newblock \href {http://arxiv.org/abs/quant-ph/0207072}
  {\path{arXiv:quant-ph/0207072}}, \href
  {http://dx.doi.org/10.1103/PhysRevLett.89.247902}
  {\path{doi:10.1103/PhysRevLett.89.247902}}.

\bibitem[BHMT02]{BHMT02}
Gilles Brassard, Peter H{\o}yer, Michele Mosca, and Alain Tapp.
\newblock Quantum amplitude amplification and estimation.
\newblock {\em Contemporary Mathematics}, 305:53--74, 2002.
\newblock \href {http://arxiv.org/abs/quant-ph/0005055}
  {\path{arXiv:quant-ph/0005055}}, \href {http://dx.doi.org/10.1090/conm/305}
  {\path{doi:10.1090/conm/305}}.

\bibitem[CCD{\etalchar{+}}03]{childs2003exponential}
Andrew~M. Childs, Richard Cleve, Enrico Deotto, Edward Farhi, Sam Gutmann, and
  Daniel~A. Spielman.
\newblock Exponential algorithmic speedup by a quantum walk.
\newblock In {\em Proceedings of the thirty-fifth annual ACM symposium on
  Theory of computing}, pages 59--68. ACM, 2003.
\newblock \href {http://arxiv.org/abs/quant-ph/0209131}
  {\path{arXiv:quant-ph/0209131}}, \href
  {http://dx.doi.org/10.1145/780542.780552} {\path{doi:10.1145/780542.780552}}.

\bibitem[CD15]{CL15}
Iris Cong and Luming Duan.
\newblock Quantum discriminant analysis for dimensionality reduction and
  classification.
\newblock 2015.
\newblock \href {http://arxiv.org/abs/1510.00113} {\path{arXiv:1510.00113}}.

\bibitem[CV10]{CV10}
Marius~V. Costache and Sergio~O. Valenzuela.
\newblock Experimental spin ratchet.
\newblock {\em Science}, 330(6011):1645--1648, 2010.
\newblock \href {http://arxiv.org/abs/1103.0105} {\path{arXiv:1103.0105}},
  \href {http://dx.doi.org/10.1126/science.1196228}
  {\path{doi:10.1126/science.1196228}}.

\bibitem[DBK{\etalchar{+}}00]{Universality}
David~P. DiVincenzo, Dave Bacon, Julia Kempe, Guido Burkard, and K.~Birgitta
  Whaley.
\newblock Universal quantum computation with the exchange interaction.
\newblock {\em Nature}, 408(6810):339--342, 2000.
\newblock \href {http://arxiv.org/abs/quant-ph/0005116}
  {\path{arXiv:quant-ph/0005116}}, \href {http://dx.doi.org/10.1038/35042541}
  {\path{doi:10.1038/35042541}}.

\bibitem[DDM15]{DM15}
Rafa{\l} Demkowicz-Dobrza{\'n}ski and Marcin Markiewicz.
\newblock Quantum computation speedup limits from quantum metrological
  precision bounds.
\newblock {\em Phys. Rev. A}, 91(6):062322, Jun 2015.
\newblock \href {http://arxiv.org/abs/1412.6111} {\path{arXiv:1412.6111}},
  \href {http://dx.doi.org/10.1103/PhysRevA.91.062322}
  {\path{doi:10.1103/PhysRevA.91.062322}}.

\bibitem[EAO{\etalchar{+}}02]{Ekert2002}
Artur~K Ekert, Carolina~Moura Alves, Daniel~KL Oi, Micha{\l} Horodecki,
  Pawe{\l} Horodecki, and Leong~Chuan Kwek.
\newblock Direct estimations of linear and nonlinear functionals of a quantum
  state.
\newblock {\em Physical review letters}, 88(21):217901, 2002.

\bibitem[Emc84]{Emch}
G\'{e}rard~G. Emch.
\newblock {\em Mathematical and Conceptual Foundations of 20th-Century
  Physics}.
\newblock North-Holland Mathematics Studies. Elsevier Science, 1984.
\newblock Available from:
  \url{https://books.google.com/books?id=eYQHIjkaEroC&pg=PA306}.

\bibitem[Fey82]{feynman1982simulating}
Richard~P. Feynman.
\newblock Simulating physics with computers.
\newblock {\em International Journal of Theoretical Physics}, 21(6):467--488,
  1982.
\newblock \href {http://dx.doi.org/10.1007/BF02650179}
  {\path{doi:10.1007/BF02650179}}.

\bibitem[FGGS00]{farhi2000quantum}
Edward Farhi, Jeffrey Goldstone, Sam Gutmann, and Michael Sipser.
\newblock Quantum computation by adiabatic evolution.
\newblock 2000.
\newblock \href {http://arxiv.org/abs/quant-ph/0001106}
  {\path{arXiv:quant-ph/0001106}}.

\bibitem[FPMU03]{FPMU03}
Joshua~A. Folk, Ronald~M. Potok, Charles~M. Marcus, and Vladimir Umansky.
\newblock A gate-controlled bidirectional spin filter using quantum coherence.
\newblock {\em Science}, 299(5607):679--682, 2003.
\newblock \href {http://dx.doi.org/10.1126/science.1078419}
  {\path{doi:10.1126/science.1078419}}.

\bibitem[GC99]{GC99}
Daniel Gottesman and Isaac~L. Chuang.
\newblock Demonstrating the viability of universal quantum computation using
  teleportation and single-qubit operations.
\newblock {\em Nature}, 402(6760):390--393, Nov 1999.
\newblock \href {http://arxiv.org/abs/quant-ph/9908010}
  {\path{arXiv:quant-ph/9908010}}, \href {http://dx.doi.org/10.1038/46503}
  {\path{doi:10.1038/46503}}.

\bibitem[Gro96]{G96}
Lov~K. Grover.
\newblock A fast quantum mechanical algorithm for database search.
\newblock In {\em Proceedings of the twenty-eighth annual ACM symposium on
  Theory of computing}, pages 212--219. ACM, 1996.
\newblock \href {http://arxiv.org/abs/quant-ph/9605043}
  {\path{arXiv:quant-ph/9605043}}, \href
  {http://dx.doi.org/10.1145/237814.237866} {\path{doi:10.1145/237814.237866}}.

\bibitem[Hel76]{Helstrom}
Carl~W. Helstrom.
\newblock {\em Quantum detection and estimation theory}.
\newblock Mathematics in science and engineering. Academic Press, 1976.
\newblock Available from:
  \url{https://books.google.com/books?id=Ne3iT_QLcsMC&pg=PA113}.

\bibitem[HHJ{\etalchar{+}}15]{HHJ+15}
Jeongwan Haah, Aram~W. Harrow, Zhengfeng Ji, Xiaodi Wu, and Nengkun Yu.
\newblock Sample-optimal tomography of quantum states.
\newblock 2015.
\newblock \href {http://arxiv.org/abs/1508.01797} {\path{arXiv:1508.01797}}.

\bibitem[HHL09]{harrow2009quantum}
Aram~W. Harrow, Avinatan Hassidim, and Seth Lloyd.
\newblock Quantum algorithm for linear systems of equations.
\newblock {\em Phys. Rev. Lett.}, 103(15):150502, Oct 2009.
\newblock \href {http://arxiv.org/abs/0811.3171} {\path{arXiv:0811.3171}},
  \href {http://dx.doi.org/10.1103/PhysRevLett.103.150502}
  {\path{doi:10.1103/PhysRevLett.103.150502}}.

\bibitem[HL11]{HL11}
Aram~W. Harrow and Debbie~W. Leung.
\newblock A communication-efficient nonlocal measurement with application to
  communication complexity and bipartite gate capacities.
\newblock {\em IEEE Trans. Inf. Theory}, 57(8):5504--5508, Aug 2011.
\newblock \href {http://arxiv.org/abs/0803.3066} {\path{arXiv:0803.3066}},
  \href {http://dx.doi.org/10.1109/TIT.2011.2158468}
  {\path{doi:10.1109/TIT.2011.2158468}}.

\bibitem[Hol73]{Holevo73}
Alexander~S. Holevo.
\newblock Statistical decision theory for quantum systems.
\newblock {\em Journal of Multivariate Analysis}, 3(4):337--394, 1973.
\newblock \href {http://dx.doi.org/10.1016/0047-259X(73)90028-6}
  {\path{doi:10.1016/0047-259X(73)90028-6}}.

\bibitem[HVvB{\etalchar{+}}04]{HVV04}
R.~Hanson, L.~M.~K. Vandersypen, L.~H.~Willems van Beveren, J.~M. Elzerman,
  I.~T. Vink, and L.~P. Kouwenhoven.
\newblock Semiconductor few-electron quantum dot operated as a bipolar spin
  filter.
\newblock {\em Phys. Rev. B}, 70(24):241304, Dec 2004.
\newblock \href {http://arxiv.org/abs/cond-mat/0311414}
  {\path{arXiv:cond-mat/0311414}}, \href
  {http://dx.doi.org/10.1103/PhysRevB.70.241304}
  {\path{doi:10.1103/PhysRevB.70.241304}}.

\bibitem[HWBT15]{hastings2015improving}
Matthew~B. Hastings, Dave Wecker, Bela Bauer, and Matthias Troyer.
\newblock Improving quantum algorithms for quantum chemistry.
\newblock {\em Quantum Information \& Computation}, 15(1-2):1--21, Jan 2015.
\newblock Available from:
  \url{http://www.rintonpress.com/xxqic15/qic-15-12/0001-0021.pdf}, \href
  {http://arxiv.org/abs/1403.1539} {\path{arXiv:1403.1539}}.

\bibitem[Kan98]{K98}
Bruce~E. Kane.
\newblock A silicon-based nuclear spin quantum computer.
\newblock {\em Nature}, 393(6681):133--137, 1998.
\newblock \href {http://dx.doi.org/10.1038/30156} {\path{doi:10.1038/30156}}.

\bibitem[Kot14]{kothari14}
Robin Kothari.
\newblock {\em Efficient algorithms in quantum query complexity}.
\newblock PhD thesis, University of Waterloo, 2014.
\newblock Available from: \url{http://hdl.handle.net/10012/8625}.

\bibitem[LC16]{LC16}
Guang~Hao Low and Isaac~L. Chuang.
\newblock Optimal {H}amiltonian simulation by quantum signal processing.
\newblock 2016.
\newblock \href {http://arxiv.org/abs/1606.02685} {\path{arXiv:1606.02685}}.

\bibitem[LD98]{LD98}
Daniel Loss and David~P. DiVincenzo.
\newblock Quantum computation with quantum dots.
\newblock {\em Phys. Rev. A}, 57(1):120--126, Jan 1998.
\newblock \href {http://arxiv.org/abs/cond-mat/9701055}
  {\path{arXiv:cond-mat/9701055}}, \href
  {http://dx.doi.org/10.1103/PhysRevA.57.120}
  {\path{doi:10.1103/PhysRevA.57.120}}.

\bibitem[Lev02]{Levy}
Jeremy Levy.
\newblock Universal quantum computation with spin-$1/2$ pairs and {H}eisenberg
  exchange.
\newblock {\em Phys. Rev. Lett.}, 89(14):147902, Sep 2002.
\newblock \href {http://arxiv.org/abs/quant-ph/0101057}
  {\path{arXiv:quant-ph/0101057}}, \href
  {http://dx.doi.org/10.1103/PhysRevLett.89.147902}
  {\path{doi:10.1103/PhysRevLett.89.147902}}.

\bibitem[Llo96]{lloyd1996universal}
Seth Lloyd.
\newblock Universal quantum simulators.
\newblock {\em Science}, 273(5278):1073--1078, 1996.
\newblock \href {http://dx.doi.org/10.1126/science.273.5278.1073}
  {\path{doi:10.1126/science.273.5278.1073}}.

\bibitem[LMR14]{LMR14}
Seth Lloyd, Masoud Mohseni, and Patrick Rebentrost.
\newblock Quantum principal component analysis.
\newblock {\em Nature Physics}, 10(9):631--633, 2014.
\newblock \href {http://arxiv.org/abs/1307.0401} {\path{arXiv:1307.0401}},
  \href {http://dx.doi.org/10.1038/nphys3029} {\path{doi:10.1038/nphys3029}}.

\bibitem[ML16]{ML16}
Iman Marvian and Seth Lloyd.
\newblock Universal quantum emulator.
\newblock 2016.
\newblock \href {http://arxiv.org/abs/1606.02734} {\path{arXiv:1606.02734}}.

\bibitem[MM08]{MM08}
Iman Marvian and Robert~B. Mann.
\newblock Building all time evolutions with rotationally invariant
  {H}amiltonians.
\newblock {\em Phys. Rev. A}, 78(2):022304, Aug 2008.
\newblock \href {http://arxiv.org/abs/0802.0870} {\path{arXiv:0802.0870}},
  \href {http://dx.doi.org/10.1103/PhysRevA.78.022304}
  {\path{doi:10.1103/PhysRevA.78.022304}}.

\bibitem[NC10]{NC10}
Michael~A. Nielsen and Isaac~L. Chuang.
\newblock {\em Quantum computation and quantum information}.
\newblock Cambridge University Press, 2010.
\newblock Available from: \url{https://books.google.com/books?id=-s4DEy7o-a0C}.

\bibitem[OW15]{o2015efficient}
Ryan O'Donnell and John Wright.
\newblock Efficient quantum tomography.
\newblock 2015.
\newblock \href {http://arxiv.org/abs/1508.01907} {\path{arXiv:1508.01907}}.

\bibitem[Ozo15]{Ozols15}
Maris Ozols.
\newblock How to combine three quantum states.
\newblock 2015.
\newblock \href {http://arxiv.org/abs/1508.00860} {\path{arXiv:1508.00860}}.

\bibitem[Pre99]{Preskill}
John Preskill.
\newblock Plug-in quantum software.
\newblock {\em Nature}, 402(6760):357--358, Nov 1999.
\newblock \href {http://dx.doi.org/10.1038/46434} {\path{doi:10.1038/46434}}.

\bibitem[PZS{\etalchar{+}}16]{PZS+16}
Hannes Pichler, Guanyu Zhu, Alireza Seif, Peter Zoller, and Mohammad Hafezi.
\newblock A measurement protocol for the entanglement spectrum of cold atoms.
\newblock {\em arXiv preprint arXiv:1605.08624}, 2016.

\bibitem[RML14]{RML14}
Patrick Rebentrost, Masoud Mohseni, and Seth Lloyd.
\newblock Quantum support vector machine for big data classification.
\newblock {\em Phys. Rev. Lett.}, 113(13):130503, Sep 2014.
\newblock \href {http://arxiv.org/abs/1307.0471} {\path{arXiv:1307.0471}},
  \href {http://dx.doi.org/10.1103/PhysRevLett.113.130503}
  {\path{doi:10.1103/PhysRevLett.113.130503}}.

\bibitem[RSL00]{RSL00}
Patrik Recher, Eugene~V. Sukhorukov, and Daniel Loss.
\newblock Quantum dot as spin filter and spin memory.
\newblock {\em Phys. Rev. Lett.}, 85(9):1962--1965, Aug 2000.
\newblock \href {http://arxiv.org/abs/cond-mat/0003089}
  {\path{arXiv:cond-mat/0003089}}, \href
  {http://dx.doi.org/10.1103/PhysRevLett.85.1962}
  {\path{doi:10.1103/PhysRevLett.85.1962}}.

\bibitem[Var84]{Lie}
Veeravalli~S. Varadarajan.
\newblock {\em Lie Groups, Lie Algebras, and Their Representations}.
\newblock Graduate Texts in Mathematics. Springer, 1984.
\newblock Available from:
  \url{https://books.google.com/books?id=uwfpBwAAQBAJ&pg=PA99}.

\bibitem[VYW{\etalchar{+}}00]{VYW00}
Rutger Vrijen, Eli Yablonovitch, Kang Wang, Hong~Wen Jiang, Alex Balandin,
  Vwani Roychowdhury, Tal Mor, and David DiVincenzo.
\newblock Electron-spin-resonance transistors for quantum computing in
  silicon-germanium heterostructures.
\newblock {\em Phys. Rev. A}, 62(1):012306, Jun 2000.
\newblock \href {http://arxiv.org/abs/quant-ph/9905096}
  {\path{arXiv:quant-ph/9905096}}, \href
  {http://dx.doi.org/10.1103/PhysRevA.62.012306}
  {\path{doi:10.1103/PhysRevA.62.012306}}.

\bibitem[Wan14]{Wang14}
Guoming Wang.
\newblock Quantum algorithms for curve fitting.
\newblock 2014.
\newblock \href {http://arxiv.org/abs/1402.0660} {\path{arXiv:1402.0660}}.

\bibitem[WBC{\etalchar{+}}14]{wecker2014gate}
Dave Wecker, Bela Bauer, Bryan~K. Clark, Matthew~B. Hastings, and Matthias
  Troyer.
\newblock Gate-count estimates for performing quantum chemistry on small
  quantum computers.
\newblock {\em Phys. Rev. A}, 90(2):022305, Aug 2014.
\newblock \href {http://arxiv.org/abs/1312.1695} {\path{arXiv:1312.1695}},
  \href {http://dx.doi.org/10.1103/PhysRevA.90.022305}
  {\path{doi:10.1103/PhysRevA.90.022305}}.

\bibitem[WBC15]{WBC15}
Christopher~J. Wood, Jacob~D. Biamonte, and David~G. Cory.
\newblock Tensor networks and graphical calculus for open quantum systems.
\newblock {\em Quantum Information \& Computation}, 15(9\&10):759--811, 2015.
\newblock Available from:
  \url{http://www.rintonpress.com/xxqic15/qic-15-910/0759-0811.pdf}, \href
  {http://arxiv.org/abs/1111.6950} {\path{arXiv:1111.6950}}.

\bibitem[YLC14]{YLC14}
Theodore~J. Yoder, Guang~Hao Low, and Isaac~L. Chuang.
\newblock Fixed-point quantum search with an optimal number of queries.
\newblock {\em Phys. Rev. Lett.}, 113(21):210501, Nov 2014.
\newblock \href {http://arxiv.org/abs/1409.3305} {\path{arXiv:1409.3305}},
  \href {http://dx.doi.org/10.1103/PhysRevLett.113.210501}
  {\path{doi:10.1103/PhysRevLett.113.210501}}.

\end{thebibliography}

 %%%%%%%
%%%%%%%%%
\appendix
%%%%%%%%%
 %%%%%%%

%%%%%%%%%%%%%%%%%%%%%%%%%%%%%%%%%
\section{Proof of Hadamard Lemma} \label{apx:Hadamard}
%%%%%%%%%%%%%%%%%%%%%%%%%%%%%%%%%

\begin{lemma}[Hadamard Lemma]\label{lem:Hadamard}
Let $\ad_A (B) := [A,B]$ (also known as the adjoint representation of a Lie algebra). Then
\begin{equation}
  e^{A} B e^{-A}
  = e^{\ad_A} (B)
  = B + [A,B] + \frac{1}{2!} [A,[A,B]] + \dotsb.
\end{equation}
\end{lemma}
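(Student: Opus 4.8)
The plan is to reduce the identity to a first-order linear ODE in an auxiliary real parameter. I would introduce the one-parameter family $f(s) := e^{sA} B e^{-sA}$ for $s \in \R$, so that the desired identity is the special case $s=1$. Since we work in a finite-dimensional space, $A$ and $B$ are matrices and all the exponential series converge absolutely, so $f$ is smooth in $s$ and I may differentiate freely.

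First I would compute $f'(s)$ using $\frac{d}{ds}e^{sA} = A e^{sA} = e^{sA}A$, obtaining $f'(s) = A e^{sA}Be^{-sA} - e^{sA}Be^{-sA}A = [A,f(s)] = \ad_A(f(s))$. Thus $f$ solves $f'(s) = \ad_A(f(s))$ with $f(0)=B$, where $\ad_A$ is a fixed linear map on the finite-dimensional space of operators. The unique solution of this linear ODE is $f(s) = e^{s\,\ad_A}(B)$, with $e^{s\,\ad_A} := \sum_{n\ge 0}\frac{s^n}{n!}\ad_A^n$. Evaluating at $s=1$ gives $e^A B e^{-A} = e^{\ad_A}(B)$, and expanding the operator exponential yields the stated series $B + [A,B] + \frac{1}{2!}[A,[A,B]] + \dotsb$, since $\ad_A(B)=[A,B]$, $\ad_A^2(B)=[A,[A,B]]$, and so on.

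A more elementary variant avoids invoking ODE uniqueness: one shows by induction that $f^{(n)}(s) = \ad_A^n(f(s))$ (the inductive step is immediate from $f' = \ad_A\circ f$ and the linearity of $\ad_A$), so that $f^{(n)}(0) = \ad_A^n(B)$, and then applies Taylor's theorem at $s=0$ evaluated at $s=1$. Either route gives the same conclusion. The only point requiring care---and it is mild in finite dimensions---is justifying term-by-term differentiation together with the convergence and uniqueness claims; because $A$ is bounded, the series for $e^{sA}$, $e^{s\,\ad_A}$, and the Taylor expansion of $f$ all converge absolutely and uniformly on compact $s$-intervals, so every rearrangement is legitimate. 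Notably, no noncommutativity difficulty arises in the argument itself, since I never attempt to move $e^{sA}$ past $B$.
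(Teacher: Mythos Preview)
Your proof is correct and follows essentially the same route as the paper: define $f(s)=e^{sA}Be^{-sA}$, differentiate to get $f'=\ad_A\circ f$, and conclude via either ODE uniqueness or (as the paper does) the Taylor expansion after showing $f^{(n)}(0)=\ad_A^n(B)$. The only cosmetic difference is that the paper writes $f'(t)=e^{tA}[A,B]e^{-tA}$ rather than $[A,f(t)]$, but these are the same since $e^{tA}$ commutes with $A$.
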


\begin{proof}
Let $f(t) := e^{tA} B e^{-tA}$. Since $\frac{d}{dt} e^{tA} = A e^{tA} = e^{tA} A$,
\begin{equation}
  \frac{d}{dt} f(t)
  = e^{tA} A B e^{-tA} - e^{tA} B A e^{-tA}
  = e^{tA} [A,B] e^{-tA}.
\end{equation}
Repeating the same argument inductively, the $n$-th derivative of $f(t)$, for any $n \geq 0$, is
\begin{equation}
  f^{(n)}(t)
 := \frac{d^n}{dt^n} f(t)
  = e^{tA} [A,B]_n e^{-tA}
\end{equation}
where $[A,B]_n := [A,[A,B]_{n-1}]$ and $[A,B]_0 := B$. In particular, note that
\begin{equation}
  f^{(n)}(0) = [A,B]_n = \ad_{A}^n(B).
\end{equation}
The Taylor expansion of $f(t)$ at $t = 0$ then is
\begin{equation}
  f(t)
  = \sum_{n \geq 0} \frac{f^{(n)}(0)}{n!} t^n
  = \sum_{n \geq 0} \frac{[A,B]_n}{n!} t^n
  = \sum_{n \geq 0} \frac{[tA,B]_n}{n!}
  = \sum_{n \geq 0} \frac{\ad_{tA}^n(B)}{n!}
  = e^{\ad_{tA}} (B).
\end{equation}
Recall that $f(t) = e^{tA} B e^{-tA}$, so the result follows by equating the two expression for $f(1)$.
\end{proof}

%%%%%%%%%%%%%%%%%%%%%%%%%%%%%%%%%
\section{Proof of the LMR Upper Bound} \label{apx:LMR}
%%%%%%%%%%%%%%%%%%%%%%%%%%%%%%%%%

In this section we give a complete proof of \cref{thm:LMR}: there is a protocol that uses $O(t^2/\delta)$ copies of an unknown state $\rho$ to implement the unitary $e^{-i\rho t}$, up to error $\delta$ in diamond norm.
\begin{proof}
Note that $e^{-i \rho_\A \epsilon} \x \1_\B = e^{-i \of{\rho_\A \x \1_\B} \epsilon}$, so we apply the Hadamard Lemma (see \cref{apx:Hadamard}) with $A = -i \of{\rho_\A \x \1_\B} \epsilon$ and $B = \sigma_{\A\B}$. This yields
\begin{equation}
  \of[\big]{e^{-i \rho_\A \epsilon} \x \1_\B} \sigma_{\A\B} \of[\big]{e^{i \rho_\A \epsilon} \x \1_\B}
  = \sigma_{\A\B}
  - i [\rho_\A \x \1_\B, \sigma_{\A\B}] \epsilon
  - \frac{1}{2!} [\rho_\A \x \1_\B, [\rho_\A \x \1_\B, \sigma_{\A\B}]] \epsilon^2
  + \dotsb.
  \label{eq:Taylor}
\end{equation}

We let $\U{\HS}$ be the set of unitary operators in the
Hilbert space $\HS$. Let $S \in \U{\HS_\A \x \HS_{\A_k}}$ be the unitary operator that swaps systems $\A$ and $\A_k$, i.e.\ $S \ket{i}_\A \ket{j}_{\A_k} = \ket{j}_\A \ket{i}_{\A_k}$ for all $i, j \in \set{1, \dotsc, \dim(\HS_\A)}$. Note that $S^2 = \1$ implies $S\ct = S$ so $S \in \Herm{\HS_\A \x \HS_{\A_k}}$. Applying Hamiltonian $S$ for time $\mp \epsilon$ implements the partial swap unitary
\begin{equation}
  e^{\pm i S \epsilon} = \1 \cos \epsilon \pm i S \sin \epsilon.
  \label{eq:exp}
\end{equation}
The LMR algorithm simply applies the swap Hamiltonian $S$ between systems $\A$ and $\A_k$ for some small amount of time $\epsilon$, and then discards $\A_k$ (this is done consecutively for each copy $\rho_{\A_k}$ as $k$ ranges from $1$ to $n$).

The state after the first iteration of the above procedure can be explicitly written as
\begin{align}
 &  \Tr_{\A_1}
    \sof[\big]{
      (e^{-iS_{\A\A_1}\epsilon} \x \1_\B)
      (\sigma_{\A\B} \x \rho_{\A_1})
      (e^{ iS_{\A\A_1}\epsilon} \x \1_\B)
    } \nonumber \\
 &= \sigma_{\A\B} \cos^2 \epsilon
  - i [\rho_\A \x \1_\B, \sigma_{\A\B}] \sin \epsilon \cos \epsilon 
  + \rho_\A \x \Tr_\A (\sigma_{\A\B}) \sin^2 \epsilon \\
 &= \sigma_{\A\B}
  - i [\rho_\A \x \1_\B, \sigma_{\A\B}] \epsilon
  - \of[\big]{\sigma_{\A\B} - \rho_\A \x \Tr_\A (\sigma_{\A\B})} \epsilon^2
  + \dotsb,
    \label{eq:TSS}
\end{align}
where the partial trace can be computed using graphical notation~\cite{WBC15}, and the last line was obtained using the Taylor expansion at $\epsilon = 0$. Note that the difference in trace distance between the ideal state \cref{eq:Taylor} and our first approximation \cref{eq:TSS} is
\begin{equation}
  \frac{1}{2}\norm*{
    \of[\big]{e^{-i \rho_\A \epsilon} \x \1_\B} \sigma_{\A\B} \of[\big]{e^{i \rho_\A \epsilon} \x \1_\B}
  - \Tr_{\A_1}
    \sof[\big]{
      (e^{-iS_{\A\A_1}\epsilon} \x \1_\B)
      (\sigma_{\A\B} \x \rho_{\A_1})
      (e^{ iS_{\A\A_1}\epsilon} \x \1_\B)
    }
  }_1
  \leq O(\epsilon^2).
\end{equation}

\newcommand{\aprx}[1]{\tilde{\sigma}_{\A\B}^{[#1]}}

If we write $\aprx{k}$ to denote the state after $k$ iterations of this procedure (so $\aprx{0} = \sigma_{\A\B}$ denotes the original state and $\aprx{1}$ denotes the state in \cref{eq:TSS}), we get the following recursion from \cref{eq:TSS}:
\begin{equation}
  \aprx{k}
  = \aprx{k-1}
  - i [\rho_\A \x \1_\B, \aprx{k-1}] \epsilon
  - \of[\big]{\aprx{k-1} - \rho_\A \x \Tr_\A (\aprx{k-1})} \epsilon^2
  + O(\epsilon^3).
\end{equation}
By evaluating this recursively, the final state can be expressed as
\begin{align}
  \aprx{n}
 &= \aprx{n-m}
  - i [\rho_\A \x \1_\B, \aprx{n-m}] m \epsilon
  - \of[\big]{\aprx{n-m} - \rho_\A \x \Tr_\A (\aprx{n-m})} m \epsilon^2 \\
 &+ i \sof[\big]{\rho_\A \x \1_\B, i[\rho_\A \x \1_\B, \aprx{n-m}]} \of{1 + 2 + \dotsb + m} \epsilon^2
  + O(\epsilon^3).
  \label{eq:rec}
\end{align}
for any $m \in \set{0, \dotsc, n}$. In particular, for $m = n$ we get
\begin{align}
  \aprx{n}
 &= \sigma_{\A\B}
  - i [\rho_\A \x \1_\B, \sigma_{\A\B}] n \epsilon
  - \of[\big]{\sigma_{\A\B} - \rho_\A \x \Tr_\A (\sigma_{\A\B})} n \epsilon^2 \\
 &+ i \sof[\big]{\rho_\A \x \1_\B, i[\rho_\A \x \1_\B, \sigma_{\A\B}]} \frac{n(n-1)}{2} \epsilon^2
  + O(\epsilon^3).
  \label{eq:last}
\end{align}

Choosing $\epsilon = t/n$ in \cref{eq:last} and comparing this with the desired final state at time $t$ (given by \cref{eq:Taylor}, with $t$ instead of $\epsilon$), we see that
\begin{equation}
  \frac{1}{2}\norm[\big]{
    \of[\big]{e^{-i \rho_\A t} \x \1_\B} \sigma_{\A\B} \of[\big]{e^{i \rho_\A t} \x \1_\B}
  - \aprx{n}
  }_1
  \leq O(n \epsilon^2)
  = O(t^2/n).
\end{equation}
Thus, if we desire a final trace distance accuracy of $\delta$ and
want to implement the Hamiltonian $\rho$ for a time $t$, the LMR protocol uses
$n=O(t^2/\delta)$ copies of $\rho.$
\end{proof}

%-----------------------------------------------------------------%
\section{Controlled Density Matrix Exponentiation} \label{sec:control-LMR}
As we have seen in \cref{thm:LMR}, given an input state $\sigma$ and $O(t^2/\delta)$ copies of another state $\rho$, the LMR protocol allows us to obtain the output state $e^{-i\rho t} \sigma e^{i \rho t}$. In many applications (see e.g. \cite{LMR14,Wang14}) we would like to perform phase estimation on the operator $e^{-i\rho t}$, which requires the ability to apply the controlled-$e^{-i\rho t}$ operation. However it is not immediately obvious to see that the controlled-$e^{-i\rho t}$ operation can be performed with the LMR protocol: since the LMR protocol involves the discarding (tracing out) of quantum registers, it could very well lose any coherence between the $\ket{0}$ and $\ket{1}$ components of the control qubit. Nevertheless, we show this is not the case:

\begin{theorem}\label{thm:control-LMR}
Given $O(t^2/\delta)$ copies of an unknown quantum state $\rho \in \D{\HS_\A}$, the controlled-$e^{-i\rho t}$ operation, $\ketbra{0}{0}~\otimes~\1_\A~+~\ketbra{1}{1}~\otimes e^{-i\rho t}$, can be performed up to error $\delta$ in diamond norm.
\end{theorem}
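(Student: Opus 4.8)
The plan is to run the LMR protocol of \cref{thm:LMR} almost verbatim, replacing each partial swap by its controlled version and checking that the partial trace does not destroy coherence between the two branches of the control qubit. Concretely, let the control qubit be the register on which we wish to condition, let $S$ be the swap between the target register $\A$ and a fresh copy of $\rho$, and at each step apply
\[
  cS_\epsilon := \proj{0}\x\1 + \proj{1}\x e^{-iS\epsilon},
\]
followed by tracing out the copy, exactly as in \cref{apx:LMR}. The $\ket0$ branch leaves the target alone while the $\ket1$ branch runs one LMR step, so intuitively this should implement $\proj0\x\1_\A+\proj1\x e^{-i\rho t}$; the only genuinely new point to verify is the fate of the off-diagonal control coherences.

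First I would fix an arbitrary joint input on the control qubit, the target $\A$, and an arbitrary reference $\B$, and expand it in the control basis as $\sum_{i,j\in\{0,1\}}\ketbra{i}{j}\x M_{ij}$ with $M_{ij}$ operators on $\HS_\A\x\HS_\B$. Since $cS_\epsilon$ acts as $\1$ on the $\ket0$ branch and as $e^{-iS\epsilon}$ on the $\ket1$ branch, conjugation sends the $\ketbra{i}{j}$ block to $\ketbra{i}{j}\x V_i M_{ij} V_j\ct$ with $V_0=\1$ and $V_1=e^{-iS\epsilon}$, and tracing out the copy then acts block by block. The diagonal blocks reproduce \cref{apx:LMR} exactly: $M_{00}$ is untouched and $M_{11}\mapsto M_{11}-i\epsilon[\rho_\A\x\1_\B,M_{11}]+O(\epsilon^2)$. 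For the off-diagonal block I would use $e^{\pm iS\epsilon}=\cos\epsilon\,\1\pm i\sin\epsilon\,S$ together with the partial-trace identities $\Tr_{\mathrm{copy}}[(M\x\rho)\,S]=M\,(\rho_\A\x\1_\B)$ and $\Tr_{\mathrm{copy}}[S\,(M\x\rho)]=(\rho_\A\x\1_\B)\,M$ (valid because $\B$ is a spectator of the swap), which give the \emph{exact} recursions $M_{01}\mapsto M_{01}W$ and $M_{10}\mapsto W\ct M_{10}$ with $W:=\cos\epsilon\,\1+i\sin\epsilon\,\rho_\A$. These are precisely the one-sided factors demanded by $\proj0\x\1+\proj1\x e^{-i\rho\epsilon}$ to first order in $\epsilon$, confirming that the coherences are carried along rather than dephased.

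Next I would iterate $n$ times with $\epsilon=t/n$ and control the accumulated error. The diagonal blocks incur $O(t^2/n)$ error by the analysis of \cref{apx:LMR}, while the off-diagonal recursion is exact, yielding $M_{01}W^n$, so it remains to compare $W^n$ with $e^{i\rho_\A t}$. Diagonalizing $\rho_\A$, an eigenvalue $p\in[0,1]$ contributes $(\cos\epsilon+ip\sin\epsilon)^n$, whose phase is $pt+O(t\epsilon^2)$ and whose modulus is $\bigl(1-\tfrac{\epsilon^2}{2}(1-p^2)+O(\epsilon^4)\bigr)^n=1-O(t\epsilon)$; hence $W^n=e^{i\rho_\A t}+O(t^2/n)$. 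The new feature here is that $W$ is a contraction rather than a unitary, but the amount by which it shrinks the off-diagonal coherence is of the same $O(t^2/n)$ order as the diagonal error and so is harmless. Summing the four blocks gives total trace-norm error $O(t^2/n)$, and choosing $n=O(t^2/\delta)$ yields the claimed accuracy $\delta$.

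Because the estimate was derived for an arbitrary joint input that may be entangled with a reference $\B$ of arbitrary dimension, the uniform trace-norm bound upgrades directly to a diamond-norm bound, just as in the passage from trace norm to diamond norm in \cref{thm:LMR}. The controlled partial swap $cS_\epsilon$ is implementable from a Fredkin gate and single-qubit rotations by the same LCU argument used in the remark following \cref{thm:LMR}, so no efficiency is sacrificed. I expect the main obstacle to be the off-diagonal block analysis: making rigorous that the \emph{non-unitary} contraction $W^n$ nonetheless converges to $e^{i\rho_\A t}$ with error $O(t^2/n)$ is exactly what formalizes the claim that tracing out the copies does not wash out the control-qubit coherence.
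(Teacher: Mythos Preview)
Your proposal is correct and follows the second of the two arguments the paper gives: replace each partial swap by its controlled version and compute directly. The paper carries this out with a pure control state $(a\ket0+b\ket1)\x\sigma\x\rho$ and simply checks that after one controlled partial swap and a trace-out the result equals $(\proj0\x\1+\proj1\x e^{-i\rho\Delta})$ conjugating $\Tr_3\Sigma$, plus an $O(\Delta^2)$ remainder, then iterates. Your block decomposition $\sum_{i,j}\ketbra{i}{j}\x M_{ij}$ and exact off-diagonal recursion $M_{01}\mapsto M_{01}W$ with $W=\cos\epsilon\,\1+i\sin\epsilon\,\rho_\A$ is a finer-grained version of the same computation; the subsequent analysis of the non-unitary contraction $W^n\to e^{i\rho_\A t}$ is correct and makes explicit something the paper absorbs into the uniform $O(\Delta^2)$ per-step bound.

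What you miss is the paper's \emph{first} and much shorter argument: since
\[
\exp\bigl(-i(\proj{1}\x\rho)\,t\bigr)=\proj{0}\x\1_\A+\proj{1}\x e^{-i\rho t},
\]
one can simply run the ordinary LMR protocol of \cref{thm:LMR} with the input states $\proj{1}\x\rho$ in place of $\rho$, and the result follows immediately with no new analysis at all. This trick also generalizes cleanly to the multinomial simulations of \cref{comm_sim}. Your approach buys an explicit picture of how the control coherence survives the partial trace; the paper's first approach buys a one-line reduction.
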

\begin{proof}
We will give two ways of deriving this result. The first simplest method is to realize that
\begin{equation}
\exp\of[\big]{-i (\ketbra{1}{1} \otimes \rho) t} = \ketbra{0}{0} \otimes \1_\A + \ketbra{1}{1} \otimes e^{-i\rho t}
\end{equation}
and so to simulate the controlled-$e^{-i\rho t}$ operator, we can simply use the $\ketbra{1}{1} \otimes \rho$ as the input states to the LMR protocol instead.

Alternatively, the naive method of replacing the partial swaps in the LMR protocol by controlled versions also works. To see this, let us consider starting from the initial state $\Sigma = (a\ket{0}+b\ket{1})(a^*\bra{0}+b^*\ket{1}) \otimes \sigma \otimes \rho$, applying the controlled-partial swap $\ket{0}\bra{0} \otimes \1 + \ket{1}\bra{1} \otimes e^{-iS\Delta}$. The result is
\begin{align}
&aa^*\ketbra{0}{0}\otimes \sigma \otimes \rho + ba^*\ketbra{1}{0} \otimes (c \1 - isS)(\sigma \otimes \rho) \nonumber \\
&+ ab^*\ketbra{0}{1} \otimes (\sigma \otimes \rho)  (c \1 + isS)+ bb^*\ketbra{1}{1}\otimes (c \1 - isS) (\sigma \otimes \rho) (c \1 + isS)
\end{align}
where we've used the shorthand $s \equiv \sin \Delta$ and $c \equiv \cos \Delta$. By using the identities $\Tr_2[S(\sigma \otimes \rho)] = \rho \sigma$ and $\Tr_2[(\sigma \otimes \rho)S] = \sigma \rho$, we can calculate the resulting state if we trace out the last register: 
\begin{align}
& aa^*\ketbra{0}{0}\otimes \sigma + ba^*\ketbra{1}{0} \otimes (c\sigma - is\rho\sigma) + ab^*\ketbra{0}{1} \otimes (c\sigma+is\sigma\rho)\nonumber\\
&+ bb^*\ketbra{1}{1}\otimes (c^2\sigma-is\rho\sigma+is\sigma\rho+s^2\rho) \\
={}& (\ketbra{0}{0} \otimes \1 +\ketbra{1}{1} \otimes (\1 - i \rho \Delta)) \:\: \Tr_3\Sigma \:\: (\ketbra{0}{0} \otimes \1 + \ketbra{1}{1} \otimes (\1 + i \rho \Delta)) + O(\Delta^2) \\
={}& (\ketbra{0}{0} \otimes \1 +\ketbra{1}{1} \otimes e^{-i\rho \Delta}) \:\: \Tr_3\Sigma \:\: (\ketbra{0}{0} \otimes \1 +\ketbra{1}{1} \otimes e^{-i\rho \Delta}) +O(\Delta^2)
\end{align}
where $\Tr_3\Sigma$ refers to the initial state with the third register ($\rho$) traced out. This shows that we can use one copy of $\rho$ to implement the controlled-$e^{-i\rho \Delta}$ operation up to error $O(\Delta^2)$. Therefore similar to the discussion in the proof of \cref{thm:LMR}, by choosing $\Delta = \delta/t$ and repeating this procedure $O(t^2/\delta)$ times, we implement the controlled-$e^{-i\rho t}$ operator up to error $O(\delta)$.
\end{proof}

\begin{remark}
It is easy to see that the simple trick of appending a qubit in the $|1\rangle$ state, as described in the first paragraph of the proof of \cref{thm:control-LMR}, also works for all results in \cref{comm_sim}. For instance, let $f(\rho_1,\rho_2,\dotsc,\rho_K)$ be an arbitrary Hermitian polynomial in $\rho_1,\dotsc,\rho_K)$. Then $f(\proj{1} \otimes \rho_1,\dotsc,\proj{1} \otimes \rho_K) = \proj{1} \otimes f(\rho_1,\dotsc,\rho_K)$, and hence if we wish to simulate the controlled-$e^{-if(\rho_1,\dotsc,\rho_K)t}$ operator, we have
\begin{align}
\ketbra{0}{0} \otimes \1_\A + \ketbra{1}{1} \otimes e^{-if(\rho_1,\dotsc,\rho_K) t} &= \exp\of[\big]{-i (\ketbra{1}{1} \otimes f(\rho_1,\dotsc,\rho_K)) t} \\
&= \exp\of[\big]{-i f(\ketbra{1}{1} \otimes \rho_1,\dotsc,\ketbra{1}{1} \otimes \rho_K) t}
\end{align}
and we can equivalently simulate $f(\rho_1',\dotsc,\rho_K')$ with $\rho_j'=\proj{1} \otimes \rho_j$ instead.
\end{remark}
%-----------------------------------------------------------------%

%-----------------------------------------------------------------%
\section{Better Phase Estimation}\label{sec:phase_est}
%-----------------------------------------------------------------%

We note that principal component analysis can be performed
using fewer samples than what is claimed in \cite{LMR14}.

\begin{corollary}
Kitaev's phase estimation on the unitary $U=e^{-i\rho}$ can be performed to precision $\epsilon$ and constant failure probability, using $O(1/\epsilon^2)$ samples. 
\end{corollary}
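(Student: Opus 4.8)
The plan is to realize Kitaev's phase estimation circuit using the controlled density matrix exponentiation of \cref{thm:control-LMR}, and then to spend the sampling budget \emph{unevenly} across the rounds of phase estimation so as to make the expensive high-power rounds cheap. Recall first that the eigenphases of $U = e^{-i\rho}$ are exactly $e^{-i\lambda_j}$ for the eigenvalues $\lambda_j$ of $\rho$, so estimating the phase to precision $\epsilon$ is precisely principal component analysis of $\rho$. Kitaev's algorithm for precision $\epsilon$ uses $m = \lceil \log_2(1/\epsilon) \rceil$ rounds, where round $j \in \set{0,\dotsc,m-1}$ applies the controlled-$U^{2^j}$ operation, i.e.\ controlled-$e^{-i\rho 2^j}$. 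By \cref{thm:control-LMR}, each such operation can be implemented to diamond-norm error $\delta_j$ using $O((2^j)^2/\delta_j) = O(4^j/\delta_j)$ copies of $\rho$.

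Next I would control the accumulated error. If each exact controlled-$U^{2^j}$ gate in the ideal circuit is replaced by its LMR approximation, then by the triangle inequality together with the unitary invariance and subadditivity of the diamond norm (the same hybrid argument already used for \cref{eq:diamond_rho}), the overall channel is implemented to diamond-norm error at most $\sum_{j=0}^{m-1}\delta_j$. Since the exact Kitaev procedure succeeds with constant probability, it suffices to demand $\sum_{j=0}^{m-1} \delta_j \le c$ for a suitable constant $c$; the approximate circuit then still succeeds with constant probability, because a measurement on two states within trace distance $c$ of each other changes outcome probabilities by at most $c$. The total number of samples is $\sum_{j=0}^{m-1} O(4^j/\delta_j)$, so the task reduces to minimizing $\sum_j 4^j/\delta_j$ subject to $\sum_j \delta_j \le c$.

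Here is the key step, and the source of the improvement over the naive uniform choice $\delta_j = c/m$: I would allocate the error budget in proportion to $\delta_j \propto 2^j$, giving the costly high-power rounds the most slack. Concretely, setting $\delta_j = c\, 2^j/(2^m-1)$ makes $\sum_j \delta_j = c$, and round $j$ then consumes $O(4^j/\delta_j) = O(2^j/(c\epsilon))$ samples, for a total of $\sum_{j=0}^{m-1} O(2^j/\epsilon) = O(2^m/\epsilon) = O(1/\epsilon^2)$. That this allocation is essentially optimal follows from Cauchy--Schwarz: $\big(\sum_j 4^j/\delta_j\big)\big(\sum_j \delta_j\big) \ge \big(\sum_j 2^j\big)^2 = \Omega(4^m) = \Omega(1/\epsilon^2)$, so once $\sum_j \delta_j$ is held to a constant no allocation beats $O(1/\epsilon^2)$.

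The main obstacle is not any single calculation but getting the bookkeeping right: I must confirm that \cref{thm:control-LMR} really lets the whole phase-estimation circuit be assembled from these controlled exponentials with only additive diamond-norm error, so that coherence across the control register is preserved through the discards internal to LMR, and that Kitaev's accuracy guarantee is robust to a constant total diamond-norm perturbation of its gates. Once that is in place, the non-uniform error allocation is exactly what matters: the uniform choice $\delta_j = c/m$ would instead give $\sum_j O(4^j m/c) = O(m\,4^m) = O(\log(1/\epsilon)/\epsilon^2)$ samples, whereas the $\delta_j \propto 2^j$ allocation removes the extra logarithmic factor and recovers the optimal $O(1/\epsilon^2)$, improving on the complexity of \cite{LMR14}.
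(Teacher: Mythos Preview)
Your argument is correct, but it takes a more elaborate route than the paper. The paper's proof is a two-line accounting trick: Kitaev's phase estimation to precision $\epsilon$ uses $O(1/\epsilon)$ total applications of controlled-$U$ (summing $1+2+\dotsb+2^{m-1}$), so if each individual application of controlled-$e^{-i\rho}$ is simulated to error $O(\epsilon)$ at a cost of $O(1/\epsilon)$ samples, the total error is $O(1)$ and the total sample count is $O(1/\epsilon^2)$. No optimization over rounds is needed.

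Your non-uniform allocation $\delta_j \propto 2^j$ is in fact exactly what the paper's uniform per-gate approach achieves implicitly: decomposing $U^{2^j}$ into $2^j$ sequential applications of $U$, each with error $\epsilon$, gives round $j$ aggregate error $2^j\epsilon \propto 2^j$ and sample cost $2^j \cdot O(1/\epsilon) = O(2^j/\epsilon)$, identical to your numbers. So your Cauchy--Schwarz optimization recovers the same allocation that falls out automatically when you count at the level of unit-time controlled-$U$ gates rather than at the level of rounds. What your approach buys is an explicit demonstration that this allocation is optimal among all per-round error budgets, and it makes transparent why the naive uniform-per-round choice $\delta_j = c/m$ (which corresponds to the $O(1/\epsilon^3)$ claim in \cite{LMR14}) loses a $\log(1/\epsilon)$ factor; the paper's version is shorter but leaves this comparison implicit.
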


In \cite{LMR14}, they state that this phase estimation requires $O(1/\epsilon^3)$ samples, so this is a polynomial improvement. 

\begin{proof}
Notice that to estimate an eigenvalue of $U$ to precision $\epsilon$ using
standard Kitaev's phase estimation requires $O(1/\epsilon)$ uses of controlled-$U.$
Then as long as the simulation of controlled-$U$ does not change the resulting
state by trace distance more than $O(\epsilon)$, the total error in trace
distance of the final state will be $O(1).$ Using the LMR protocol, we can
simulate $e^{-i\rho}$ to precision $O(\epsilon)$ a total of $O(1/\epsilon)$
times, giving a sample complexity of $O(1/\epsilon^2)$.
\end{proof}

%--------------------------%
\section{Equivalence of Hermitian polynomials and the Jordan-Lie algebra}\label{sect:Jordan}
%--------------------------%

In this appendix we sketch that any Hamiltonian that is a Hermitian multinomial in $\rho_1,\rho_2,\dots,\rho_k$ can be created from sums of nested commutators (multiplied by $i$) and anticommutators of density matrices (i.e.\ is in the Jordan-Lie algebra~\cite{Emch} generated by the states), and vice versa.

First, we begin by noticing that for $z\in\mathbb{C}$,
\begin{align}
z \rho_1\rho_2&=\frac{z}{2}\left(\{\rho_1,\rho_2\}+[\rho_1,\rho_2]\right)\\
z \rho_1\rho_2\rho_3&=\frac{z}{4}\left(\{\{\rho_1,\rho_2\}+[\rho_1,\rho_2],\rho_3\}+[\{\rho_1,\rho_2\}+[\rho_1,\rho_2],\rho_3]\right)\\
\dots
\end{align}
So we can write all monomials as sums of nested commutators (but no $i$) and anticommutators.

Now we just want to show that a monomial plus its Hermitian conjugate can be written as a sum of nested commutators (with $i$) and anticommutators. This is possible by noticing that the Hermitian conjugate of an expression of nested commutators and anticommutators (e.g.~$[[\{[A,B],C\},D],E]$) of Hermitian matrices (e.g.~$A,B,C,D,E$) is equal to that same expression with a $(-1)^c$ sign, where $c$ is the number of commutators (alternatively, the number of ``$[$'' symbols) in the expression (e.g.~$[[\{[A,B],C\},D],E]^\dag=(-1)^3[[\{[A,B],C\},D],E]$).

With this fact we can treat all Hermitian polynomials. We replace each monomial by a nested expression of commutators and anticommuatators, and then group together terms with the same parity $c$ of commutators. The terms with even $c$ will contribute to the real part of $z$ while the terms with odd $c$ will contribute to the imaginary part of $z$ (we also need to introduce extra minus signs in front of nested commutators when $c$ is equal to $2$ or $3$ modulo $4$). For instance, we can rewrite the degree-3 monomial plus its Hermitian conjugate as follows:
\begin{align}
z \rho_1\rho_2\rho_3+(z \rho_1\rho_2\rho_3)^\dag&=\frac12\Re(z)\left(\{\{\rho_1,\rho_2\},\rho_3\}-i[i[\rho_1,\rho_2],\rho_3]\right)\\&+\frac12\Im(z)\left(\{i[\rho_1,\rho_2],\rho_3\}+i[\{\rho_1,\rho_2\},\rho_3]\right).
\end{align}

\end{document}